\newtheorem{rr}{Reduction~Rule}{\upshape\itshape}{\upshape\rmfamily}
\newcommand{\kommentar}[1]{}
\newcommand{\Oh}{\ensuremath{\mathcal{O}}}
\newcommand{\proofpara}[1]{\smallskip
	
	\noindent\textit{#1.}}
\DeclareMathOperator{\twwithoutN}{{tw}}
\DeclareMathOperator{\off}{off}
\DeclareMathOperator{\desc}{desc}
\DeclareMathOperator{\anc}{anc}
\DeclareMathOperator{\DP}{DP}
\DeclareMathOperator{\height}{height}
\DeclareMathOperator{\src}{sources}
\newcommand{\Dbar}{{\ensuremath{\overline{D}}}\xspace}
\newcommand{\kbar}{{\ensuremath{\overline{k}}}\xspace}
\newcommand{\w}{{\ensuremath{\omega}}\xspace}
\newcommand\Recc[1]{Recurrence~(\ref{#1})}
\newcommand{\predators}[1]{{\ensuremath{N_{>}(#1)}}\xspace}
\newcommand{\prey}[1]{{\ensuremath{N_{<}(#1)}}\xspace}
\newcommand{\sources}{\sourcespersonal{\Food}}
\newcommand{\sourcespersonal}[1]{{\ensuremath{\src(#1)}}\xspace}
\newcommand{\yes}{{\normalfont\texttt{yes}}\xspace}
\newcommand{\no}{{\normalfont\texttt{no}}\xspace}
\newcommand{\Wh}[1]{{\normalfont W[#1]}\xspace}
\newcommand{\NP}{{\normalfont{NP}}\xspace}
\newcommand{\FPT}{{\normalfont{FPT}}\xspace}
\newcommand{\XP}{{\normalfont{XP}}\xspace}
\newcommand{\NPcoNPpoly}{{\normalfont{NP~$\not\subseteq$~coNP/poly}}\xspace}
\newcommand{\Instance}{{\ensuremath{\mathcal{I}}}\xspace}
\newcommand{\Tree}{{\ensuremath{\mathcal{T}}}\xspace}
\newcommand{\Food}{{\ensuremath{\mathcal{F}}}\xspace}
\newcommand{\tw}{{\ensuremath{\twwithoutN_{\Food}}}\xspace}
\newcommand{\PD}{\PDsub\Tree}
\newcommand{\PDsub}[1]{{\ensuremath{PD_{#1}}}\xspace}
\newcommand{\spannbaum}[1]{\spannbaumsub{\Tree}{#1}}
\newcommand{\spannbaumsub}[2]{\ensuremath{#1\langle #2 \rangle}\xspace}
\newcommand{\problemdef}[3]{
	\begin{quote}
		\normalsize\textsc{#1} \smallskip \\
		\begin{tabularx}{0.9\textwidth}{@{}l@{\hspace{3pt}}X}
			\normalsize\textbf{Input:}    & \normalsize#2 \\
			\normalsize\textbf{Question:} & \normalsize#3
		\end{tabularx}
	\end{quote}
}
\newcommand{\PROB}[1]{{{\normalfont\textsc{#1}}}\xspace}
\newcommand{\MPD}{\PROB{Max-PD}}
\newcommand{\MPDlong}{\PROB{Maximize Phylogenetic Diversity}}
\newcommand{\VC}{\PROB{Vertex Cover}}
\newcommand{\SC}{\PROB{Set Cover}}
\newcommand{\KP}{\PROB{Knapsack}}
\newcommand{\HS}{\PROB{Hitting Set}}
\newcommand{\DS}{\PROB{Dominating Set}}
\newcommand{\rbnb}{\PROB{Red-Blue Non-Blocker}}
\newcommand{\MCNF}{\PROB{MCNF}}
\newcommand{\MCNFlong}{\PROB{Minimum-Cost Network Flow}}
\newcommand{\PDDlong}{\PROB{Optimizing PD with Dependencies}}
\newcommand{\sPDDlong}{\PROB{Optimizing PD in Vertex-Weighted Food-Webs}}
\newcommand{\cksPDDlong}{\PROB{$k$-colored \sPDDlong}}
\newcommand{\PDDplong}{\PROB{Optimizing PD with Pattern-Dependencies}}
\newcommand{\cDPDDlong}{\PROB{$D$-colored \PDDlong}}
\newcommand{\ckPDDlong}{\PROB{$2$-colored \PDDlong}}
\newcommand{\PDD}{\PROB{PDD}}
\newcommand{\sPDD}{\PROB{s-PDD}}
\newcommand{\cksPDD}{\PROB{$k$-c-\sPDD}}
\newcommand{\PDDp}{\PROB{\PDD-pattern}}
\newcommand{\cDPDD}{\PROB{$D$-c-\PDD}}
\newcommand{\ckPDD}{\PROB{$2$-c-\PDD}}
\newcommand{\HStw}{\PROB{\HS with Tree-Profits}}
\DeclareMathOperator{\distclust}{cvd}
\DeclareMathOperator{\distcoclust}{dist-co-clust}
\newcommand{\todos}[2][]{\todo[#1,color=red!25!green!50]{ #2}}
\newcommand{\todosi}[2][]{\todo[inline,color=red!25!green!50]{ #2}}
\newcommand{\todok}[2][]{\todo[#1,color=red!30]{ #2}}
\title{Maximizing Phylogenetic Diversity under Ecological Constraints: A Parameterized Complexity Study}
\titlerunning{Parameterized Complexity of PD with Ecological Constraints} 
\author{Christian Komusiewicz}{Friedrich-Schiller-Universität Jena, Germany}{c.komusiewicz@uni-jena.de}{https://orcid.org/0000-0003-0829-7032}{}
\author{Jannik Schestag}{Friedrich-Schiller-Universität Jena, Germany}{j.t.schestag@uni-jena.de}{https://orcid.org/0000-0001-7767-2970}{}
\keywords{phylogenetic diversity; food-webs; structural parameterization; color-coding; dynamic programming}
\authorrunning{Komusiewicz and Schestag}
\newif\ifWABIShort
\newif\ifJournal
\begin{document}

\maketitle

\begin{abstract}
	In the NP-hard \PDDlong{} (PDD) problem, the input consists of a phylogenetic tree~\Tree over a set of taxa~$X$, a food-web that describes the prey-predator relationships in~$X$, and integers~$k$ and~$D$. The task is to find a set~$S$ of~$k$ species that is viable in the food-web such that the subtree of \Tree obtained by retaining only the vertices of~$S$ has total edge weight at least~$D$. Herein, viable means that for every predator taxon of~$S$, the set~$S$ contains at least one prey taxon.
	
	We provide the first systematic analysis of \PDD{} and its special case~\sPDD{} from a parameterized complexity perspective. For solution-size related parameters, we show that \PDD{} is FPT with respect to~$D$ and with respect to~$k$ plus the height of the phylogenetic tree. Moreover, we consider structural parameterizations of the food-web. For example, we show an FPT-algorithm for the parameter that measures the vertex deletion distance to graphs where every connected component is a complete graph. Finally, we show that \sPDD{} admits an FPT-algorithm for the treewidth of the food-web. This disproves a conjecture of Faller et al.~[Annals of Combinatorics, 2011] who conjectured that \sPDD{} is NP-hard even when the food-web is a tree. 
\end{abstract}

\todosi{15 pages excluding references, the front page (authors, affiliation, keywords, abstract, etc.) and a brief appendix (of up to 5 pages)}

\newpage
\setcounter{page}{1}
\section{Introduction}
Human activity has greatly accelerated the rate at which biological species go extinct.
The conservation of biological diversity is thus one of mankind's most urgent tasks.
The inherently limited amount of resources that one may devote to this task, however, necessitates decisions on which conservation strategies to pursue.
To support such decisions, one needs to incorporate quantitative information on the possible impact and the success likelihood\todos{sounds like probabilities} of conservation strategies.
In this context, one task is to compute an optimal conservation strategy in light of this information.

To find a conservation strategy with the best positive impact, one would ideally aim to maximize the functional\todos{what is that?} diversity of the surviving taxa (species).
However, measuring this diversity is hard or impossible in many scenarios~\cite{MPC+18}.
As a result, 
maximizing phylogenetic diversity has become the standard, albeit imperfect, surrogate for maximizing functional diversity~\cite{GCW+15,ITC+07,MPC+18}.
Informally, phylogenetic diversity measures the evolutionary distance of a set of taxa.
In its most simple form, this measurement is based on an edge-weighted phylogenetic tree~$\Tree$ of the whole set of taxa~$X$, and the phylogenetic diversity of a subset of taxa~$S$ is the sum of the weights of the edges of the subtree of~$\Tree$ obtained by retaining only the taxa of~$S$.
Assuming equal protection costs for all taxa, the task is to find a set~$S$ of at most~$k$ taxa that achieves maximal phylogenetic diversity.
This problem, called \MPDlong~\cite{FAITH1992}, can be solved very efficiently by a greedy algorithm~\cite{FAITH1992,PDinSeconds,Pardi2005,steel}.

Computing an optimal conservation strategy becomes much more difficult, however, when the success likelihood\todos{sounds like probabilities} of a strategy is included in the model.
One way to achieve this is to add concrete survival probabilities\todos{Fine for me here} for protected taxa, leading in its most general form to the NP-hard \textsc{Generalized Noah's Ark Problem}~\cite{hartmann,GNAP}.
This problem formulation, however, still has a central drawback: It ignores that the survival of some taxa may also depend on the survival of other taxa.
This aspect was first considered by Moulton et~al.~\cite{moulton} in the \PDDlong~(\PDD)~problem.
Here, the input additionally contains a directed acyclic graph~$D$ with vertex set~$X$ where an arc~$uv$ is present if the existence of taxa~$u$ provides all the necessary foundations for the existence of~$B$.
In other words,~$D$ models ecological dependencies between taxa.
Now, a taxon~$v$ may survive only if (i)~it does not depend on other taxa at all, that is, it has no incoming arcs, or (ii)~at least one taxon~$u$ survives such that~$D$ contains the arc~$uv$.
The most-wide spread interpretation of such ecological dependency networks are food-webs where the arc~$uv$ means that taxon~$v$ feeds on taxon~$u$.\footnote{We remark that previous works~\cite{moulton,faller} consider a reversed interpretation of the arcs. We define the order such that a source of the network also corresponds to a source of the ecosystem.}
A set of taxa~$X$ where every vertex fulfills~(i) or~(ii) is called \emph{viable}.
The task in \PDD{} is to select a \emph{viable} set of~$k$ taxa that achieves a maximal phylogenetic diversity.
In this work, we study \PDD{} from an algorithmic point of view.

Moulton et al.~\cite{moulton} showed that \PDD~can be solved by the greedy algorithm if the objective of maximizing phylogenetic diversity agrees with the viability constraint in a precise technical sense.
Later, \PDD was conjectured to be \NP-hard in~\cite{spillner}.
This conjecture was confirmed by Faller et al.~\cite{faller}, who showed that \PDD is \NP-hard even if the food-web~$D$ is a tree.
Further, Faller~et~al.~\cite{faller} considered \sPDD,  the special case where the phylogenetic tree is restricted to be a star, and showed that \sPDD is \NP-hard even for food-webs which have a bipartite graph as underlying graph.
Finally, polynomial-time algorithms were provided for very restricted special cases, for example for \PDD when the food-web is a \emph{directed} tree~\cite{faller}.

\todos{T: Giulio Dalla Riva C: What do you mean exactly?}

\todok{ Importance of food-webs~\cite{cirtwill}}

\todok{Why mention the following: ``Famous book \cite{elton}''?}

%Food webs describe energy and biomass flows
%through a community (Lindeman, 1942; Wootton, 1997), represent
%ecosystem functions (Memmott et al., 2007; Reiss et al., 2009;
%Thompson et al., 2012), and can offer insights into the community’s
%overall stability (Neutel et al., 2002; Thébault and Fontaine, 2010).
%Cited from \cite{cirtwill}

\subparagraph{Our contribution}
As \PDD is \NP-hard even on very restricted instances~\cite{faller}, we turn to parameterized complexity in order to overcome this intractability.
In particular, we aim to identify problem-specific parameters~$\kappa$ such that~\PDD can be solved in~$f(k)\cdot |I|^{\Oh(1)}$ time\todos{Define here and in Sec. 2.1?} (these are called FPT-algorithms) or to show that such algorithms are unlikely, by showing W[1]-hardness for the parameter~$\kappa$.
Here, we consider the most natural parameters related to the solution, such as the solution size~$k$ and the threshold of diversity~$D$, and parameters that describe the structure of the input food-web~$D$.
We formally consider the decision problem, where we ask for the existence of a viable solution with diversity at least~$D$, but our algorithms actually solve the optimization problem as well.

Our most important results are the following; \Cref{tab:results} gives an overview.
In \Cref{thm:k+height}, we prove that \PDD is \FPT when parameterized with the solution size~$k$ plus the height of the phylogenetic tree~$\Tree$.
This also implies that \PDD is \FPT with respect to~$D$, the diversity threshold.
\ifJournal
However, both problems, \PDD and \sPDD, are unlikely to admit a kernel of polynomial size.
\fi
We also consider the dual parameter~$\overline{D}$, that is, the amount of diversity that is lost from~$\Tree$ and show that \PDD is W[1]-hard with respect to~$\overline{D}$.

We then consider the structure of the food-web.
In particular, we consider the special case that each connected component of the food-web~$D$ is a complete digraph.
This case is structurally equivalent to the case that each connected component of~$D$ is a star with one source vertex.
Thus, this case describes a particularly simple dependency structure, where taxa are either completely independent or have a common source.
We show that \PDD is NP-hard in this special case while~\sPDD has an FPT-algorithm when parameterized by the vertex deletion distance to this special case.
Finally, we show that \sPDD is \FPT with respect to the treewidth of the food-web and therefore can be solved in polynomial time if the food-web is a tree (\Cref{thm:tw}).
Our result  disproves a conjecture of Faller et al.~\cite[Conjecture~4.2]{faller} that \PDD is NP-hard even when the food-web is a tree.

\ifJournal
\Cref{fig:results} gives results for structural parameters and sets these parameters into relation.
\fi
\begin{table}[t]\centering
	
	\resizebox{\columnwidth}{!}{%
          \begin{tabular}{l ll ll}
            \hline
			Parameter & \multicolumn{2}{c}{\sPDD} & \multicolumn{2}{c}{\PDD} \\
			\hline
			Budget $k$ & \FPT & Thm.~\ref{thm:k-stars} & \XP (\FPT is \emph{open}) & Obs.~\ref{obs:k-XP}\\
			Diversity $D$ & \FPT & Thm.~\ref{thm:D} & \FPT & Thm.~\ref{thm:D}\\
			\ifJournal
			& no poly kernel & Thm.~\ref{thm:D-kernel-PDss} & no poly kernel & Thm.~\ref{thm:D-kernel-PDts}\\
			\fi
			Species-loss $\kbar$ & \Wh{1}-hard, \XP & Prop.~\ref{prop:Dbar},~Obs.~\ref{obs:k-XP}~~ & \Wh{1}-hard, \XP & Prop.~\ref{prop:Dbar},~Obs.~\ref{obs:k-XP}\\
			Diversity-loss $\Dbar$ & \Wh{1}-hard, \XP & Prop.~\ref{prop:Dbar},~Obs.~\ref{obs:k-XP} & \Wh{1}-hard, \XP & Prop.~\ref{prop:Dbar},~Obs.~\ref{obs:k-XP}\\
			\hline
			D.t. Cluster & \FPT & Thm.~\ref{thm:dist-cluster-FPT} & \NP-h for 0 & Thm.~\ref{thm:dist-cluster-hardness}\\
			D.t. Co-Cluster & \FPT & Thm.~\ref{thm:co-cluster} & \FPT & Thm.~\ref{thm:co-cluster}\\
			Treewidth & \FPT & Thm.~\ref{thm:tw} & \NP-h for 1 & \cite{faller}\\
			\ifJournal
			Max Leaf \# & \FPT & Thm.~\ref{thm:tw} & \NP-h for 2 & Cor.~\ref{cor:maxleaf}\\
			D.t. Bipartite & \NP-h for 0 & \cite{faller} & \NP-h for 0 & \cite{faller}\\
			Max Degree & \NP-h for 3 & \cite{faller} & \NP-h for 3 & \cite{faller}
                                                                                 \fi
                                                                                 \hline
		\end{tabular}
	}
	
	\caption{An overview over the parameterized complexity results for \PDD and \sPDD.}
	\label{tab:results}
\end{table}

\subparagraph*{Structure of the paper}
In \Cref{sec:prelims}, we formally define \PDDlong, give an overview of previous results and our contribution, and prove some simple initial results.
\ifJournal
In \Cref{sec:k} and \Cref{sec:D}, we consider parameterization by the budget~$k$ and the threshold of diversity~$D$, the two integers of the input.
In \Cref{sec:kbar}, we consider \PDD with respect to the number of taxa that go extinct and the acceptable loss of diversity.
\else
In \Cref{sec:k}, we study \sPDD and \PDD with respect to $k$, the solution size.
In \Cref{sec:D}, we show that \PDD is \FPT with respect to the desired diversity but W[1]-hard for the acceptable loss of diversity.
\fi
In \Cref{sec:structural}, we consider parameterization by structural parameters of the food-web. 
Finally, in \Cref{sec:discussion}, we discuss future research ideas.
\ifWABIShort
  Theorems, lemmas, and observations marked with~$(\star)$ are deferred to the appendix.
\fi

\section{Preliminaries}
\label{sec:prelims}
\subsection{Definitions}
% \paragraph*{Mathematical Definitions}
For a positive integer $a$,
by $[a]$ we denote the set $\{1,2,\dots,a\}$, and
by $[a]_0$ the set $\{0\}\cup [a]$.
We generalize functions $f:A\to B$, where~$B$ is a family of sets, to handle subsets $A'\subseteq A$ of the domain by defining $f(A') := \bigcup_{a\in A'} f(a)$.

\todosi{  neighbors, spanning tree, connected component}

For any graph~$G$, we write~$V(G)$ and~$E(G)$, respectively, to denote the set of vertices and edges of~$G$.
We write $\{u,v\}$ for an undirected edge between $u$ and $v$.
For a directed edge from~$u$ to~$v$, we write~$uv$ or~$(u,v)$ to increase readability.
For a vertex set~$V'\subseteq V(G)$, we let~$G[V']:=(V',\{e\in E(G) \mid \text{both endpoints of~$e$ are in $V(G)$}\})$ denote the subgraph of~$G$ induced by~$V'$.
Moreover, with~$G - V':=G[V\setminus V']$ we denote the graph obtained from~$G$ by removing~$V'$ and its incident edges.

\subparagraph*{Phylogenetic Trees and Phylogenetic Diversity.}
A tree~$T = (V,E)$ is a directed graph in which the \emph{root} is the only vertex with an in-degree of~zero, each other vertex has an in-degree of~one.
The root is denoted with~$\rho$. 
The \emph{leaves} of a tree are the vertices which have an out-degree of~zero. We refer to the non-leaf vertices of a tree as \emph{internal vertices}. A tree is a star if the root is the only internal vertex and all other vertices are leafs. 
For a given set $X$, a \emph{phylogenetic~$X$-tree~$\Tree=(V,E,\w)$} is a tree~$T=(V,E)$ with an \emph{edge-weight} function~$\w: E\to \mathbb{N}_{>0}$ and a bijective labeling of the leaves with elements from~$X$ where all non-leaves in \Tree have out-degree at least two.
We write $\max_\w$ to denote the biggest edge weight in~$\Tree$.
The set~$X$ is a set of~\emph{taxa}~(species).
Because of the bijective labeling, we interchangeably use the words taxon and leaf.
In biological applications, the set $X$ is a set of taxa, the internal vertices of~$\Tree$ correspond to biological ancestors of these taxa and~$\w(e)$ describes the phylogenetic distance between the endpoints of~$e$, as these endpoints correspond to distinct taxa, we may assume this distance is greater than zero.\todos{Probably add this sentence to intro and remove here.}

For a directed edge~$uv \in E$, we say $u$ is the \emph{parent} of $v$ and $v$ is a \emph{child} of $u$.
If there is a directed path from~$u$ to~$v$ in~$\Tree$ (including when $u=v$), we say that $u$ is an \emph{ancestor} of $v$ and $v$ is a \emph{descendant} of $u$.
The sets of ancestors and descendants of $v$ are denoted by $\anc(v)$ and~$\desc(v)$, respectively.
The set of descendants of~$v$ which are in~$X$ are \emph{offspring} $\off(v)$ of a vertex $v$.
For an edge $e = uv \in E$, we denote $\off(e) = \off(v)$.

 For a tree~$T = (V,E)$ and a vertex set~$V'\subseteq V$, the \emph{spanning tree of~$V'$} is denoted by~$\spannbaumsub{T}{ V' }$.
The~\emph{subtree of~$T$ rooted at~$v$} is~$\spannbaumsub{T}{\{v\} \cup \off(v)}$ and denoted by~$T_v$, for some vertex~$v \in V$.
Given a set of taxa $A \subseteq X$,
let $E_{\Tree}(A)$ denote the set of edges in $e \in E$ with $\off(e)\cap A \neq \emptyset$.
The \emph{phylogenetic diversity} $\PD(A)$ of $A$ is defined by 
\begin{equation}
	\label{eqn:PDdef}
	\PD(A) := \sum_{e \in E_{\Tree}(A)} \w(e).
\end{equation}

In other words, the phylogenetic diversity $\PD(A)$ of a set $A$ of taxa is the sum of the weights of edges which have offspring in~$A$.

\subparagraph*{Food Webs.}
For a given set of taxa $X$, a \emph{food-web~$\Food=(X,E)$ on $X$} is a directed acyclic graph.
If $xy$ is an edge of $E$ then $x$ is \emph{prey} of $y$ and $y$ is a \emph{predator} of $x$.
The set of prey and predators of $x$ is denoted with $\prey{x}$ and $\predators{x}$, respectively.
A taxon $x$ without prey is a \emph{source} and $\sources$ denotes the set of sources in the food-web~\Food.

For a given taxon $x\in X$ we define $X_{\le x}$ to be the set of taxa $X$ which can reach $x$ in \Food.
Analogously, $X_{\ge x}$ is the set of taxa which $x$ can reach in \Food.

For a given food-web~\Food and a set~$Z \subseteq X$ of taxa,
a set of taxa~$A\subseteq Z$ is \emph{$Z$-viable} if~$\sourcespersonal{\Food[A]} \subseteq \sourcespersonal{\Food[Z]}$.
A set of taxa~$A\subseteq X$ is \emph{viable} if $A$ is $X$-viable.
In other words, a set~$A \subseteq Z$ is~$Z$-viable or viable if each vertex with in-degree~0 in~$\Food[A]$ also has in-degree~0 in~$\Food[Z]$ or in~\Food, respectively.

\subparagraph*{Problem Definitions and Parameterizations.}
Formally, the main problem we regard in this paper is defined as follows.

\problemdef{\PDDlong (\PDD)}{
	A phylogenetic~$X$-tree~$\Tree$, a foodweb~$\Food$ on $X$, and integers~$k$ and~$D$.
}{
	Is there a viable set~$S\subseteq X$ such that~$|S|\le k$, and~$\PD(S)\ge D$?
}\todos{Probably add to intro.}

Additionally in \sPDDlong~(\sPDD) we consider the special case of \PDD in which the phylogenetic~$X$-tree~$\Tree$ is a star.\todos{Name is given by Faller et al.}

Throughout the paper, we adopt the common convention that $n$ is the number of taxa~$|X|$
and we let $m$ denote the number of edges in the food-web $|E(\Food)|$.
Observe that \Tree has $\Oh(n)$~edges.

For an instance $\Instance = (\Tree,\Food,k,D)$ of \PDD, we define $\Dbar : = \PD(X)-D = \sum_{e \in E}\w(e) - D$.
Informally, $\Dbar$ is the acceptable loss of diversity: If we save a set of taxa $A \subseteq X$ with $\PD(A)\geq D$, then the total amount of diversity we lose from $\Tree$ is at most $\Dbar$.
Similarly, we define~$\kbar := |X| - k$.
That is, \kbar is the minimum number of species that need to become extinct.

\todosi{Also define structural parameters?}

\subparagraph*{Parameterized Complexity.}
Throughout this paper, we consider a number of parameterizations of \PDD and \sPDD. 
For a detailed introduction to parameterized complexity refer to the standard monographs~\cite{cygan,downeybook}; we only give a brief overview here.

A parameterization of a problem $\Pi$ associates with each input instance $\Instance$ of $\Pi$ a specific \emph{parameter}~$\kappa$.
A parameterized problem $\Pi$ is \emph{fixed-parameter tractable} (\FPT) with respect to some parameter~$\kappa$ if there exists an algorithm solving every instance $(\Instance,\kappa)$ of $\Pi$ in time~$f(\kappa)\cdot |\Instance|^{\Oh(1)}$.
A parameterized problem $\Pi$ is \emph{slice-wise polynomial} (\XP) with respect to some parameter~$\kappa$ if there exists an algorithm solving every instance $(\Instance,\kappa)$ of $\Pi$ in time~$|\Instance|^{f(\kappa)}$.
Here, in both cases, $f$ is some computable function only depending on~$\kappa$.
Parameterized problems that are \emph{\Wh{1}-hard} are believed not to be \FPT.
%
%Observe that if a parameterized problem is \FPT, it also is \XP.
%For convenience reasons, the word parameterized is usually omitted.
We use the $\mathcal O^*$-notation which omits factors polynomial in the input size.

\subparagraph*{Color Coding.}
For an in-depth treatment of color coding, we refer the reader to~\cite[Sec.~5.2~and~5.6]{cygan} and~\cite{alon}.
Here, we give some definitions which we use throughout the paper.

\label{def:perfectHashFamily}
For integers~$n$ and~$k$,
an~\emph{$(n,k)$-perfect hash family $\mathcal{H}$} is a family of functions $f: [n] \to [k]$ such that for every subset~$Z$ of~$[n]$ of size~$k$, some $f \in \mathcal{H}$ exists which is injective when restricted to~$Z$.
For any integers $n,k \geq 1$
an~$(n,k)$-perfect hash family which contains~$e^k k^{\Oh(\log k)} \cdot \log n$ functions can be constructed in time $e^k k^{\Oh(\log k)} \cdot n \log n$~\cite{Naor1995SplittersAN,cygan}.

\subsection{Preliminary observations}
We start with some observations and reduction rules which we use throughout the paper.

\begin{observation}
	\label{obs:viable}
	Let $\Food$ be a food-web.
	A set~$A \subseteq X$ is viable if and only if there are edges~$E_A \subseteq E(\Food)$ such that every connected component in the graph~$(A,E_A)$ is a tree with root in $\sources$.
\end{observation}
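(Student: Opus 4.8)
The plan is to prove both directions of the equivalence in \Cref{obs:viable}, relying on the definition that $A$ is viable iff $\sourcespersonal{\Food[A]} \subseteq \sources$, i.e.\ every vertex of $A$ that has in-degree $0$ in $\Food[A]$ is in fact a source of the whole food-web.

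For the forward direction, suppose $A$ is viable. I would build $E_A$ greedily: process the vertices of $A$ in any topological order of the DAG $\Food$. For each $v\in A$ that is \emph{not} a source of $\Food$, viability guarantees $v$ is not a source of $\Food[A]$, so $v$ has at least one in-neighbour $u\in A$ in $\Food$; pick one such $u$ and add the edge $uv$ to $E_A$. For $v\in A\cap\sources$, add nothing. The resulting graph $(A,E_A)$ has the property that every vertex has in-degree at most $1$, and a vertex has in-degree $0$ precisely when it lies in $\sources$. A directed graph in which every vertex has in-degree at most $1$ is a disjoint union of ``in-trees'' plus possibly some cycles; but $E_A\subseteq E(\Food)$ and $\Food$ is acyclic, so there are no cycles. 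Hence each connected component is a tree, and since the unique in-degree-$0$ vertex of each such component lies in $\sources$, each component is rooted in $\sources$, as required. (One small point to spell out: in a finite in-forest every weakly connected component has exactly one source vertex — trace back along the unique incoming edges until you reach a vertex with none; finiteness and acyclicity make this terminate.)

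For the reverse direction, suppose such an $E_A$ exists with every component of $(A,E_A)$ a tree rooted in $\sources$. I need to show $\sourcespersonal{\Food[A]}\subseteq\sources$. Take any $v\in A$ with in-degree $0$ in $\Food[A]$; then $v$ has no in-neighbour in $A$ with respect to $\Food$, so in particular $v$ has no incoming edge in $E_A\subseteq E(\Food)$. Thus $v$ is a root of its component of $(A,E_A)$, and by assumption that root lies in $\sources$. Hence every in-degree-$0$ vertex of $\Food[A]$ is a source of $\Food$, which is exactly viability.

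I do not anticipate a genuine obstacle here; the statement is essentially a reformulation of the definition of viability. The only place requiring a little care is the structural claim that a finite acyclic digraph with all in-degrees $\le 1$ is an in-forest whose components are each singly-rooted — this is where the DAG hypothesis on $\Food$ is used, and it should be stated explicitly rather than waved through. The rest is bookkeeping with topological orderings.
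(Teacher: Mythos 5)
Your proof is correct and follows essentially the same approach as the paper: the paper's own (much terser) argument likewise observes that viability gives each non-source taxon of $A$ a prey in $A$ and that, conversely, the existence of such a graph $(A,E_A)$ forces the sources of $\Food[A]$ to lie in $\sources$. You merely spell out explicitly the selection of one incoming edge per non-source vertex and the verification that the resulting in-degree-$\le 1$ acyclic graph is a forest of trees rooted in $\sources$, details the paper leaves implicit.
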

\begin{proof}
	If~$A$ is viable then $\sourcespersonal{\Food[A]}$ is a subset of \sources.
	It follows that for each taxon~$x \in A$, either~$x$ is a source in~\Food or~$A$ contains a prey~$y$ of~$x$.
	Conversely, if such a graph~$(A,E_A)$ exists then explicitly the sources of~$\Food[A]$ are a subset of~\sources.
\end{proof}

\begin{observation}
	\label{obs:solution-size}
	Let $\Instance = (\Tree,\Food,k,D)$ be a \yes-instance of \PDD.
	Then $k > n$ or a viable set~$S \subseteq X$ with~$\PD(S) \ge D$ exists which has  size exactly~$k$.
\end{observation}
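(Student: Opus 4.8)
The plan is to take an arbitrary viable witness of the yes-instance and greedily pad it up to size exactly~$k$ without losing any diversity. If $k>n$, the first alternative of the statement holds and there is nothing to prove, so assume $k\le n$ and fix, by assumption, a viable set $S\subseteq X$ with $|S|\le k$ and $\PD(S)\ge D$.

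First I would record that $\PD$ is monotone under inclusion: if $A\subseteq B$, then every edge of~\Tree with an offspring in~$A$ also has an offspring in~$B$, so $E_\Tree(A)\subseteq E_\Tree(B)$ and hence $\PD(A)\le \PD(B)$ by~\eqref{eqn:PDdef}. Thus enlarging~$S$ never decreases~$\PD(S)$, and it suffices to add single taxa one at a time while keeping the set viable.

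The main step is the claim that whenever $S\subsetneq X$, some taxon $x\in X\setminus S$ can be added so that $S\cup\{x\}$ is again viable. To pick such an~$x$, I would exploit that \Food is acyclic: fix a topological order of~$X$ in which every prey precedes each of its predators, and let~$x$ be the first element of $X\setminus S$ in this order. Then every prey of~$x$ occurs before~$x$, hence lies in~$S$, i.e.\ $\prey{x}\subseteq S$. Consequently, in $\Food[S\cup\{x\}]$ every vertex of~$S$ still has all the prey it had in $\Food[S]$ (adding a vertex only adds prey), and $x$ itself is either a source of~\Food or has a prey in $S\subseteq S\cup\{x\}$. Therefore $\sourcespersonal{\Food[S\cup\{x\}]}\subseteq\sourcespersonal{\Food}$, i.e.\ $S\cup\{x\}$ is viable (this can also be phrased via \Cref{obs:viable}, extending the edge set of~$S$ by one arc into~$x$, or leaving~$x$ as its own component if it is a source).

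Finally I would iterate this step. Starting from~$S$, as long as the current set has fewer than~$k$ taxa it is a proper subset of~$X$ (because $k\le n=|X|$), so the claim applies and one more taxon can be added; by monotonicity the diversity stays $\ge D$ and viability is preserved. After $k-|S|$ iterations we obtain a viable set of size exactly~$k$ with $\PD\ge D$, as required. I expect the only genuine obstacle to be the viability-preserving choice of vertex: an arbitrary taxon of $X\setminus S$ may fail to have a prey in~$S$, which is exactly why the topological-order argument — equivalently, always adding a taxon that is currently ``source-like'' — is needed.
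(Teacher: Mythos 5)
Your proposal is correct and takes essentially the same route as the paper: greedily pad the solution one taxon at a time, using monotonicity of \PD and the fact that some taxon outside the current set can be added without breaking viability. The only cosmetic difference is the choice of that taxon — the paper takes any element of $(\predators{S} \cup \sources)\setminus S$ (nonempty unless $S=X$), while you take the topologically first taxon of $X\setminus S$, which is just a concrete witness for the same set.
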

\begin{proof}
	Let~$S$ be a solution for \Instance.
	If~$S$ has a size of~$k$, nothing remains to show.
	Otherwise, observe that $S\cup \{x\}$ is viable and $\PD(S\cup \{x\}) \ge \PD(S)$ for each taxon~$x \in (\predators{S} \cup \sources) \setminus S$.
	Because $(\predators{S} \cup \sources) \setminus S$ is non-empty unless $S = X$,
	we conclude that~$S\cup \{x\}$ is a solution and iteratively, there is a solution of size~$k$. 
\end{proof}

\ifWABIShort
\begin{observation}[$\star$]
\else
\begin{observation}
\fi
	\label{obs:top-predator}
	Let $\Instance = (\Tree,\Food,k,D)$ be an instance of \PDD.
	In $\Oh(|\Instance|^2)$~time one can compute an equivalent instance $\Instance' := (\Tree',\Food',k',D')$ of \PDD with only one source in $\Food'$ and~$D' \in \Oh(D)$.
\end{observation}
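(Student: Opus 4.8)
The plan is to add a single new taxon~$r$ that serves as a global source and make every former source depend on~$r$, while attaching~$r$ to the phylogenetic tree cheaply so that the diversity bookkeeping barely changes. Concretely, I would construct~$\Food'$ from~$\Food$ by adding a vertex~$r$ together with the arcs~$r s$ for every~$s \in \sources$. Since~\Food is a directed acyclic graph and~$r$ has in-degree~$0$, the graph~$\Food'$ is again a DAG, and now~$r$ is its unique source: every old source~$s$ now has the prey~$r$, and every non-source taxon already had a prey in~\Food. For the phylogenetic tree, I would create~$\Tree'$ by taking a new root~$\rho'$ with two children, namely the old root~$\rho$ of~\Tree (keeping the subtree below it unchanged) and the new leaf~$r$, giving the edge~$\{\rho',r\}$ weight~$1$ and the edge~$\{\rho',\rho\}$ weight~$1$ as well (any positive weights work; weight~$1$ keeps the size bound tight). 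This is a valid phylogenetic $(X\cup\{r\})$-tree: the new root has out-degree~$2$, and the leaf set is in bijection with~$X \cup \{r\}$. Finally I set~$k' := k+1$ and~$D' := D + 2$, since we add the two new edges of total weight~$2$, hence~$D' \in \Oh(D)$.

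Next I would verify equivalence. Suppose~$S \subseteq X$ is a viable solution for~\Instance with~$|S| \le k$ and~$\PD_{\Tree}(S) \ge D$. I claim~$S' := S \cup \{r\}$ is a viable solution for~$\Instance'$: in~$\Food'$, the vertex~$r$ is a source, and every other vertex~$x \in S$ either was a source of~\Food\ (and now has prey~$r \in S'$) or already has a prey in~$S \subseteq S'$; so~$\sources(\Food'[S']) \subseteq \{r\} = \sources(\Food')$, i.e.\ $S'$ is viable. Clearly~$|S'| = |S| + 1 \le k+1 = k'$. For the diversity, the edges of~$\Tree'$ with offspring in~$S'$ are exactly the edges of~$\Tree$ with offspring in~$S$, together with~$\{\rho',r\}$ and~$\{\rho',\rho\}$ (the latter because~$S \neq \emptyset$, which holds since~$\PD_\Tree(S) \ge D > 0$; if~$D = 0$ the instance is trivially a yes-instance and can be handled separately, or one observes~$r$ alone realizes~$\PD_{\Tree'}(\{r\}) = 1$). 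Hence~$\PD_{\Tree'}(S') = \PD_{\Tree}(S) + 2 \ge D + 2 = D'$. Conversely, let~$S'$ be a viable solution for~$\Instance'$ with~$|S'| \le k'$ and~$\PD_{\Tree'}(S') \ge D'$. By \Cref{obs:solution-size} applied to~$\Instance'$ we may assume~$r \in S'$: if~$r \notin S'$ then~$S' \cup \{r\}$ is still viable (adding the global source never violates viability) with at least as large diversity, and by the same observation we may take~$|S'| = k'$ exactly, so we can swap in~$r$ for some other taxon — more directly, any viable set containing a former source of~\Food\ already forces nothing, but a viable set of~$\Food'$ of size~$k'$ that omits~$r$ must then contain only former non-sources forming chains, and replacing any one of them by~$r$ keeps viability and does not decrease~$\PD$ since~$r$ contributes edge weight~$2$ and sits at the top. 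Setting~$S := S' \setminus \{r\} \subseteq X$, the restriction is~$X$-viable in~\Food\ because removing the unique source~$r$ from~$\Food'[S']$ leaves exactly the sources of~$\Food[S]$ among the former sources of~\Food, all of which lie in~$\sources$; and~$|S| \le k$, and~$\PD_\Tree(S) = \PD_{\Tree'}(S') - 2 \ge D'-2 = D$.

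The main obstacle is the bookkeeping in the reverse direction: one must argue that we may assume~$r \in S'$ without loss of generality, and that deleting~$r$ from a viable set of~$\Food'$ yields a viable set of~\Food. For the first point the clean route is \Cref{obs:solution-size} together with the observation that~$r$ is a predator-free source, so~$S' \mapsto (S' \setminus \{z\}) \cup \{r\}$ for a suitable~$z$ preserves viability and does not decrease diversity; for the second point the key fact is that~$\sources(\Food[S]) = \sources(\Food'[S']) \setminus \{r\}$ exactly when~$S = S' \setminus \{r\}$, because the only arcs of~$\Food'$ not in~\Food\ are the arcs out of~$r$, so deleting~$r$ only turns former sources of~\Food\ back into sources of~$\Food[S]$, and those are permitted. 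The running time is dominated by constructing~$\Food'$ and~$\Tree'$ and is~$\Oh(|\Instance|)$, comfortably within the claimed~$\Oh(|\Instance|^2)$ bound.
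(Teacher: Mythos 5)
Your proposal is correct and is essentially the paper's own construction: the paper likewise adds one new taxon~$\star$ with arcs to all old sources and attaches it to the phylogenetic tree (there as a child of the old root with weight~$D+1$, setting~$k'=k+1$ and~$D'=2D+1$), and argues the same correspondence~$S \leftrightarrow S\cup\{\star\}$; your choice of a new root with two weight-$1$ edges and~$D'=D+2$ is only a cosmetic variation. One simplification: since the new taxon is the \emph{unique} source of~$\Food'$, every nonempty viable set in~$\Food'$ must contain it (a nonempty induced sub-DAG always has a source), so your swap argument for ``we may assume~$r\in S'$'' is unnecessary --- and as written it is shaky, because a nonempty viable set omitting~$r$ cannot exist and exchanging an arbitrary taxon for~$r$ need not preserve viability.
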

\newcommand{\proofTopPredator}{
\ifWABIShort
\begin{proof}[Proof of \Cref{obs:top-predator}]
\else
\begin{proof}
\fi
	\proofpara{Construction}
	Let $\Instance = (\Tree,\Food,k,D)$ be an instance of \PDD.
	Add a new taxon $\star$ to \Food and add edges from $\star$ to each taxon~$x$ of~$\sources$ to receive $\Food'$.
	To receive $\Tree'$, add $\star$ as a child to the root $\rho$ of \Tree and set $\w'(\rho \star) = D+1$ and~$\w'(e) = \w(e)$ for each $e\in E(\Tree)$.
	Finally, set $k' := k+1$ and $D' := 2\cdot D + 1$.
	
	\proofpara{Correctness}
	All steps can be performed in $\Oh(|\Instance|^2)$~time.
	Because $S\subseteq X$ is a solution for~\Instance if and only if $S\cup \{\star\}$ is a solution for $\Instance'$,
	the instance $\Instance' = (\Tree',\Food',k+1,2\cdot D+1)$ is a \yes-instance of \PDD if and only if \Instance is a \yes-instance of~\PDD.
\end{proof}
}
\ifWABIShort
\else
\proofTopPredator
\fi

\begin{rr}
	\label{rr:each-taxon-savable}
	Let~$R \subseteq X$ be the set of taxa which have a distance of at least~$k$ to every source.
	Then, set~$\Food' := \Food - R$ and~$\Tree' := \Tree - R$.
\end{rr}
\begin{lemma}
	\label{lem:each-taxon-savable}
	\Cref{rr:each-taxon-savable} is correct and can be applied exhaustively in $\Oh(n+m)$~time.
\end{lemma}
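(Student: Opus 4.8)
The plan is to prove the two claims of the lemma --- correctness of the rule and an $\Oh(n+m)$-time exhaustive application --- from a single key fact: \emph{no viable set of size at most $k$ contains a taxon of $R$}.

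I would establish this fact by contradiction. Assume a viable $S\subseteq X$ with $|S|\le k$ contains some $x\in R$. By \Cref{obs:viable} there is an edge set $E_S\subseteq E(\Food)$ such that every connected component of $(S,E_S)$ is a tree whose root lies in $\sources$; let $s$ be the root of the component containing $x$. The unique directed $s$--$x$ path in that component is a path in $\Food$, so it has at least $\operatorname{dist}_{\Food}(s,x)\ge k$ edges and hence at least $k+1$ vertices, all contained in $S$ --- contradicting $|S|\le k$. Thus every solution of $(\Tree,\Food,k,D)$ lies in $X\setminus R=V(\Food')$.

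Next I would verify that restricting to $X\setminus R$ changes nothing relevant. For the food-web the point is $\sourcespersonal{\Food'}=\sources$. The inclusion $\sources\subseteq\sourcespersonal{\Food'}$ is immediate (deleting vertices creates no prey, and since $k\ge 1$ the sources of $\Food$ lie outside $R$). For the converse, let $v\in\sourcespersonal{\Food'}$; then $v\notin R$, so some $s_0\in\sources$ reaches $v$ in $\Food$ within $\ell<k$ steps, and on a shortest such path every vertex is within $\ell<k$ steps of $s_0$ and hence outside $R$. If $\ell\ge 1$, the vertex preceding $v$ on this path is therefore a prey of $v$ that still lies in $\Food'$, contradicting $v\in\sourcespersonal{\Food'}$; so $v=s_0\in\sources$. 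Since $\Food'[S]=\Food[S]$ whenever $S\subseteq X\setminus R$, and "being a source" now means the same in $\Food$ and $\Food'$, such an $S$ is viable in $\Food'$ iff it is viable in $\Food$. For the tree: $R$ consists only of leaves of $\Tree$, so every edge of $\Tree$ incident to a vertex of $R$ has no offspring in $S$, giving $E_{\Tree}(S)=E_{\Tree'}(S)$ and $\PD(S)=\PDsub{\Tree'}(S)$; any out-degree-one vertex newly created in $\Tree'$ can be suppressed (summing the two incident edge weights) without affecting this. Together with the key fact, this shows that $(\Tree,\Food,k,D)$ and $(\Tree',\Food',k,D)$ are equivalent.

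For the running time, note that $X\setminus R$ is exactly the set of taxa within distance $k-1$ of $\sources$, so $R$ is obtained from a single multi-source breadth-first search from all sources of $\Food$ in $\Oh(n+m)$ time, and removing $R$ from $\Food$ and $\Tree$ is also $\Oh(n+m)$. A single application is already exhaustive: the shortest-path argument above shows that for every $v\in V(\Food')$ some shortest source-to-$v$ path of $\Food$ lies entirely in $V(\Food')$, hence $\operatorname{dist}_{\Food'}(\sourcespersonal{\Food'},v)=\operatorname{dist}_{\Food}(\sources,v)<k$, so the set recomputed for $\Food'$ is empty. I expect the only genuine subtlety to be the identity $\sourcespersonal{\Food'}=\sources$ --- i.e.\ confirming that deleting $R$ neither strands a surviving taxon nor creates a spurious new source; everything else is routine bookkeeping.
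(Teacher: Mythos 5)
Your proof is correct and follows essentially the same approach as the paper's: the key fact that a viable set of size at most~$k$ cannot contain a taxon at distance at least~$k$ from every source (via the rooted-tree structure of viable sets), plus a multi-source breadth-first search for the running time and the observation that one application is exhaustive. Your extra verifications---that deleting~$R$ creates no new sources in the food-web and that the phylogenetic diversity of sets disjoint from~$R$ is unchanged---are details the paper's much shorter proof leaves implicit, and they are argued correctly.
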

\begin{proof}
	By definition, each viable set of taxa which has a size of~$k$ is disjoint from~$R$.
	Therefore, the set $R$ is disjoint from every solution.
	The set $R$ can be found in~$\Oh(n+m)$ time by breadth-first search.
	This is also the total running time since one application of the rule is exhaustive.
\end{proof}

\begin{rr}
	\label{rr:maxw<D}
	Apply \Cref{rr:each-taxon-savable} exhaustively.
	If~$\max_\w \ge D$ return \yes.
\end{rr}
After \Cref{rr:each-taxon-savable} has been applied exhaustively,
for any taxon~$x\in X$ there is a viable set~$S_x$ of size at most~$k$ with~$x \in S_x$.
If edge~$e$ has weight at least~$D$, then for each taxon~$x$ which is an offspring of~$e$,
the set~$S_x$ is viable, has size at most~$k$, and~$\PD(S_x) \ge \PD(\{x\}) \ge D$.
So,~$S_x$ is a solution.

\begin{rr}
	\label{rr:redundant-edges}
	Given an instance~$\Instance = (\Tree,\Food,k,D)$ of \PDD in with $vw, uw \in E(\Food)$ for each~$u\in \prey{v}$.
	If~$v$ is not a source, then remove $vw$ from $E(\Food)$.
\end{rr}
\begin{lemma}
	\label{lem:redundant-edges}
	\Cref{rr:redundant-edges} is correct and can be applied exhaustively in $\Oh(n^3)$~time.
\end{lemma}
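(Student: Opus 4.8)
The plan is to show two things: first, that removing the edge $vw$ from the food-web does not change the set of viable subsets of $X$ that matter for the problem; and second, that exhaustive application can be carried out in $\Oh(n^3)$ time. For correctness, suppose the hypothesis holds: $v$ is not a source, and for every $u \in \prey{v}$ we have $uw \in E(\Food)$ in addition to $vw$. Let $\Food' := (X, E(\Food) \setminus \{vw\})$. I would argue that a set $A \subseteq X$ is viable in $\Food$ if and only if it is viable in $\Food'$. The direction from $\Food'$-viable to $\Food$-viable is immediate, since $\Food'$ is a subgraph of $\Food$ on the same vertex set and removing an incoming edge of $w$ can only make it harder for $w$ to be ``supported'' — more precisely, one checks via \Cref{obs:viable} that any witnessing forest in $\Food'$ is also a witnessing forest in $\Food$. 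For the converse, take a viable set $A$ in $\Food$ with witnessing edge set $E_A$ as in \Cref{obs:viable}, i.e.\ $(A, E_A)$ is a forest whose roots are all sources of $\Food$. If $E_A$ does not use the edge $vw$, we are done. If it does use $vw$, then $w \in A$ and $v \in A$, and $v$ has a parent in the forest (or is a root); but $v$ is not a source, so in the forest $v$ has a parent $u \in \prey{v} \cap A$. Then $uw \in E(\Food')$, so I replace $vw$ by $uw$ in $E_A$; this keeps $(A, E_A)$ a forest with the same roots (we have only rerouted the single incoming edge of $w$), witnessing that $A$ is viable in $\Food'$. Since viability is the only way $\Food$ enters the problem, the instances $\Instance$ and $\Instance' = (\Tree, \Food', k, D)$ are equivalent.

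The main subtlety to be careful about is that $w$ may have had other in-edges in $E_A$ — but no, in a forest each vertex has in-degree at most one, so $vw$ is the unique in-edge of $w$ in $E_A$ if it is present at all; rerouting it to $uw$ is well-defined and again gives in-degree one at $w$. A second point to check is that $u \ne w$ and $u$ still lies in $A$: $u \in A$ because $u$ is the forest-parent of $v$, and $u \ne w$ because $\Food$ is acyclic (if $u = w$ then $w \to v \to w$ would be a cycle, using $vw$ and the path certifying $u = w \in \prey{v}$). So the reduction is sound, and applying it never creates new viable sets, so it can be interleaved freely with the other reduction rules.

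For the running time, the straightforward approach is: for each edge $vw \in E(\Food)$ with $v$ not a source, check whether $uw \in E(\Food)$ for every $u \in \prey{v}$. Using an adjacency-matrix representation of $\Food$ (or a hash set of edges), a single such check costs $\Oh(|\prey{v}|) = \Oh(n)$ time, and there are $\Oh(n^2)$ candidate edges, giving $\Oh(n^3)$ per pass. Each successful application deletes one edge, so there are at most $m = \Oh(n^2)$ applications; however, a single pass over all edges finds all currently-applicable deletions, and after performing them one re-runs — but I would instead argue that one pass followed by the deletions, repeated, terminates after $\Oh(n^2)$ edge removals, and bound the total work by observing that each deletion can be charged so that the overall cost is $\Oh(n^3)$. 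I expect the cleanest write-up is: maintain the edge set in a matrix; repeatedly scan all $\Oh(n^2)$ pairs $(v,w)$, and whenever the rule applies, delete $vw$; each deletion is permanent, and between two consecutive deletions we do $\Oh(n^3)$ work, but since deletions only remove edges (never add prey), a more careful accounting shows a single $\Oh(n^3)$ scan suffices if we process vertices $v$ in a topological order of $\Food$, since deleting $vw$ does not affect applicability of the rule to any pair $(v', w')$ with $v'$ later in the order. I anticipate the main obstacle is precisely this running-time bookkeeping — making sure that the deletions do not cascade in a way that forces $\Omega(n^2)$ full scans — and the topological-order argument, or a direct charging argument, is what resolves it.
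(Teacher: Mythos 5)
Your proof is correct and takes essentially the same approach as the paper: the paper also argues that if $v\in S$ then viability yields some prey $u\in \prey{v}\cap S$ which, by the rule's precondition $uw\in E(\Food)$, can feed $w$ in $\Food - vw$ (you phrase this same exchange as rerouting one edge of the witnessing forest from \Cref{obs:viable}), and it likewise bounds the running time by checking $\prey{v}\subseteq\prey{w}$ in $\Oh(n)$ time for each of the $\Oh(n^2)$ pairs. One small slip: your justification of the single topological-order pass claims that deleting $vw$ does not affect pairs $(v',w')$ with $v'$ \emph{later} in the order — it can, since it shrinks $\prey{w}$; the property you actually need, and which does hold, is that the deletion cannot affect any already-processed pair (whose first coordinate precedes $v$), so your conclusion stands and is in fact more careful bookkeeping than the paper's one-line running-time claim.
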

\begin{proof}
	\proofpara{Correctness}
	If $\Instance'$ is a \yes-instance, then so is $\Instance$.
	
	Conversely, let $\Instance$ be a \yes-instance of \PDD with solution $S$.
	If $v\not\in S$, then $S$ is also a solution for instance $\Instance'$.
	If $v\in S$ then because $S$ is viable in $\Food$ and $\prey{v} = \{u\}$ we conclude $u\in S$.
	Consequently, $S$ is also viable in $\Food - vw$, as $w$ still could be fed by $u$ (if $w\in S$).
	
	\proofpara{Running time}
	For a tuple of taxa~$v$ and~$w$, we can check~$\prey{v} \subseteq \prey{w}$ in~$\Oh(n)$.
	Consequently, an exhaustive application of \Cref{rr:redundant-edges} takes $\Oh(n^3)$~time.
\end{proof}

\section{The solution size $k$}
\label{sec:k}
In this section, we consider parameterization by the size of the solution $k$.
First, we observe that \PDD is \XP when parameterized by $k$ and \kbar.
In \Cref{sec:k-stars} we show that \sPDD is \FPT with respect to $k$.
We generalize this result in \Cref{sec:k+height} by showing that \PDD is \FPT when parameterized by $k+\height_{\Tree}$.
Recall that~$\kbar := n - k$.

\begin{observation}
	\label{obs:k-XP}
	\PDD can be solved in $\Oh(n^{k + 2})$ and $\Oh(n^{\kbar + 2})$~time.
\end{observation}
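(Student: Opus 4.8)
The plan is to solve \PDD{} by brute force over all candidate solutions of the appropriate size, checking viability and phylogenetic diversity for each candidate. By \Cref{obs:solution-size}, if $\Instance$ is a \yes-instance then either $k \ge n$ — in which case we may simply test whether $X$ itself is viable and whether $\PD(X) \ge D$, which handles the trivial regime — or there is a solution of size \emph{exactly} $k$. So it suffices to enumerate all subsets $S \subseteq X$ with $|S| = k$, of which there are $\binom{n}{k} \le n^k$, and for each one decide in polynomial time whether $S$ is viable and whether $\PD(S) \ge D$.

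For each fixed candidate set $S$, both checks are cheap. To test viability, by \Cref{obs:viable} (or directly from the definition) we just need that every vertex of $\Food[S]$ with in-degree $0$ is a source of $\Food$; equivalently, every $x \in S$ that is not in $\sources$ has some prey in $S$. This can be verified in $\Oh(n + m)$ time, or crudely in $\Oh(n^2)$ time, by scanning the in-neighborhoods. To compute $\PD(S)$, by \Cref{eqn:PDdef} we sum $\w(e)$ over all edges $e$ of $\Tree$ with $\off(e) \cap S \ne \emptyset$; this takes $\Oh(n)$ time per candidate after an $\Oh(n)$-time preprocessing of $\off$, since $\Tree$ has $\Oh(n)$ edges. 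Multiplying the number of candidates by the per-candidate cost gives the claimed $\Oh(n^{k+2})$ bound (the extra factor $n^2$ generously absorbs the per-candidate work).

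For the second bound, we use the dual observation that a size-$k$ solution is obtained by deleting $\kbar = n - k$ taxa from $X$. So instead of enumerating the $k$-subsets to keep, we enumerate the $\kbar$-subsets $R \subseteq X$ to remove, of which there are $\binom{n}{\kbar} \le n^{\kbar}$, set $S := X \setminus R$, and perform the same viability and diversity checks on $S$. This yields the $\Oh(n^{\kbar + 2})$ bound. Taking the better of the two strategies depending on whether $k \le \kbar$ gives an algorithm running in $\Oh(n^{\min(k,\kbar) + 2})$ time, which in particular establishes \XP-membership for each of the parameters $k$ and $\kbar$.

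There is no real obstacle here; the statement is essentially immediate once \Cref{obs:solution-size} is available to reduce to fixed size exactly $k$. The only point needing a word of care is handling the degenerate case $k \ge n$ separately (where no size-$k$ subset other than $X$ exists), and noting that the per-candidate viability and $\PD$ computations are genuinely polynomial, which follows directly from the definitions and the fact that $|E(\Tree)| \in \Oh(n)$.
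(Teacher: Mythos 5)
Your proposal is correct and follows essentially the same route as the paper: enumerate size-$k$ subsets (justified by \Cref{obs:solution-size}), check viability and $\PD$ in $\Oh(n^2)$ time per candidate, and note that $\binom{n}{k}=\binom{n}{\kbar}$ (equivalently, enumerating the $\kbar$-subsets to delete) yields the second bound.
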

\begin{proof}
	\proofpara{Algorithm}
	Iterate over the sets $S$ of $X$ of size $k$.
	Return \yes if there is a viable set~$S$ with $\PD(S) \ge D$.
	Return \no if there is no such set.
	
	\proofpara{Correctness and Running time}
	The correctness of the algorithm follows from \Cref{obs:solution-size}.
	Checking whether a set $S$ is viable and has diversity of at least $D$ can be done $\Oh(n^2)$~time.
	The claim follows because there are $\binom{n}{k} = \binom{n}{n-k} = \binom{n}{\kbar}$ subsets of $X$ of size~$k$.
\end{proof}

\subsection{\sPDD with $k$}
\label{sec:k-stars}
We show that \sPDD is \FPT when parameterized by the size of the solution~$k$.

\begin{theorem} \label{thm:k-stars}
	\sPDD can be solved in $\Oh(2^{3.03 k + o(k)} \cdot nm \cdot \log n)$~time.
\end{theorem}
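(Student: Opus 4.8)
The plan is to use color coding on the taxa, exploiting that in \sPDD the phylogenetic tree is a star, so $\PD(S)$ is simply the sum of the leaf-edge weights over the taxa in $S$, i.e.\ the diversity is separable and depends only on \emph{which} taxa are in $S$, not on their tree positions. The only structural constraint is viability in the food-web, and by \Cref{obs:viable} a viable set of size $k$ admits a spanning forest whose components are trees rooted at sources of $\Food$. After applying \Cref{obs:top-predator} we may assume $\Food$ has a single source $\star$, so a viable solution of size $k$ containing $\star$ spans a single tree rooted at $\star$ of exactly $k$ vertices; by \Cref{obs:solution-size} we may look for a solution of size exactly $k$. So the task becomes: find a $k$-vertex subtree of $\Food$ rooted at $\star$, following the arc directions, maximizing the total leaf-edge weight of its vertices, and check whether this optimum is at least $D$.

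First I would color the taxa with $k$ colors using an $(n,k)$-perfect hash family $\mathcal H$ of size $e^k k^{\Oh(\log k)}\log n$, and for each coloring $f\in\mathcal H$ search for a \emph{colorful} solution — a rooted subtree of $\Food$ on $k$ vertices using each color exactly once and maximizing total weight. For a fixed coloring this is a rooted, directed, colorful-subtree-of-maximum-weight problem, which I would solve by the standard color-coding dynamic program: for each vertex $v\in X$ and each color set $C\subseteq[k]$ with $f(v)\in C$, let $\DP[v,C]$ be the maximum total taxon-weight of a subtree rooted at $v$, contained in $X_{\ge v}$, using exactly the colors in $C$. The recurrence builds $\DP[v,C]$ by partitioning $C\setminus\{f(v)\}$ among the chosen out-neighbors of $v$; to keep the running time down one processes the out-neighbors one at a time with an auxiliary table (the usual ``merge children incrementally'' trick), combining two partial color sets via subset convolution over the lattice of $2^{|C|}$ subsets. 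The answer for this coloring is $\DP[\star,[k]]$, and over all colorings the maximum of these equals the true optimum, since some $f\in\mathcal H$ is injective on an optimal $k$-set.

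The running time analysis is where the stated constant $2^{3.03k}$ comes from, and it is the main thing to get right. The naive subset-convolution step costs $3^k$ per vertex (sum over pairs $(C_1,C_2)$ of disjoint subsets), which together with $e^k$ colorings and the $\poly$ factors would give roughly $(3e)^k = 2^{3.03k}$ — so I would simply combine $3^k\cdot e^k = 2^{(\log_2 3 + \log_2 e)k} = 2^{(1.585+1.443)k} = 2^{3.03k}$, absorb the $k^{\Oh(\log k)}$ into $2^{o(k)}$, and charge one $\DP$ pass per vertex/out-neighbor, which is $\Oh(nm)$ table updates, plus the $\log n$ from the hash-family size. (One could instead use fast subset convolution to replace $3^k$ by $2^k k^{\Oh(1)}$ and improve the exponent, but the clean $3^k\cdot e^k$ bookkeeping already yields the claimed bound, so I would present that.) The one subtlety to handle carefully is the reduction to a single source: I must verify that \Cref{obs:top-predator} preserves $\sPDD$ — it does, since adding $\star$ as a new child of the star's center keeps $\Tree'$ a star — and that the $D'\in\Oh(D)$ blow-up, while irrelevant to this running time, does not interfere. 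With the single source in place, viability of a $k$-set $S\ni\star$ is exactly ``$S$ spans a tree rooted at $\star$ in $\Food$,'' so the DP above is correct, and the theorem follows.
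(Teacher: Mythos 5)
Your proposal is correct and follows essentially the same route as the paper: reduce to a single source via \Cref{obs:top-predator}, color the taxa with an $(n,k)$-perfect hash family, and for each coloring run a dynamic program indexed by a taxon and a subset of colors, merging the predators one at a time with an auxiliary table, for $\Oh(3^k\cdot nm)$ per coloring and $\Oh(2^{3.03k+o(k)}\cdot nm\log n)$ overall. This matches \Cref{lem:k-stars} and the paper's proof of \Cref{thm:k-stars}, including the $3^k\cdot e^k$ bookkeeping and the observation that the single-source modification keeps the tree a star.
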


In order to prove this theorem, we color the taxa and require that a solution should contain at most one taxon of each color.
Formally, the auxiliary problem which we consider is defined as follows.
In \cksPDDlong~(\cksPDD), alongside the usual input~$(\Tree,\Food,k,D)$ of \sPDD, we are given a coloring~$c: X\to [k]$ which assigns each taxon a \emph{color}~$c(x) \in [k]$.
We ask whether there is a viable set $S \subseteq X$ of taxa such that $\PD(S) \ge D$, and $c(S)$ is \emph{colorful}.
A set~$c(S)$ is colorful if~$c$ is injective on~$S$.
Observe that each colorful set~$S$ holds $|S| \le k$.
We continue to show how to solve \cksPDD before we apply tools of the color-coding toolbox to extend this result to the uncolored version.

\begin{lemma}
	\label{lem:k-stars}
	\cksPDD can be solved in $\Oh(3^k \cdot n \cdot m)$~time.
\end{lemma}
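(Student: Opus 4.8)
The plan is to exploit that \Tree is a star. Recall that \cksPDD asks for a viable $S\subseteq X$ with $\PD(S)\ge D$ and $c(S)$ colorful. When \Tree is a star, $E_{\Tree}(S)$ is exactly the set of pendant edges at the taxa in $S$, so $\PD$ is additive: writing $w(x):=\w(\rho x)$ for the weight of the pendant edge at leaf $x$, we have $\PD(S)=\sum_{x\in S}w(x)$. Combining this with \Cref{obs:viable}, a colorful viable set of weight at least $D$ is the same thing as a colorful vertex set of total $w$-weight at least $D$ that decomposes into vertex-disjoint out-trees of \Food (arborescences, arcs directed away from the root), each rooted at a source. To avoid handling the ``several trees'' aspect separately, I first add a new vertex $\star$ to \Food together with an arc $\star x$ for every $x\in\sources$; call the result $\Food'$ (still a DAG). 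I extend the coloring by a fresh color $c(\star):=0$, work over the color set $[k]_0$, and put $w(\star):=0$ (a bookkeeping device). Then $S\subseteq X$ is colorful and viable if and only if $S\cup\{\star\}$ is the vertex set of an out-tree of $\Food'$ rooted at $\star$ with pairwise distinct vertex colors, and such a tree has $w$-weight exactly $\PD(S)$; so it suffices to find a maximum-weight such out-tree.

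The core is a dynamic program over sub-colorings. For a vertex $v$ of $\Food'$ and a color set $C\subseteq[k]_0$ with $c(v)\in C$, let $g(v,C)$ be the maximum $w$-weight of an out-tree of $\Food'$ rooted at $v$ whose vertices get pairwise distinct colors and whose color set is exactly $C$ (and $g(v,C):=-\infty$ if there is none). The base case is $g(v,\{c(v)\}):=w(v)$. For $|C|\ge 2$, $g(v,C)$ is the maximum of the two quantities
\[
  \max_{u\in\predators{v}}\ \bigl(w(v)+g(u,\,C\setminus\{c(v)\})\bigr)
\]
(\emph{grow}: make $v$ the root above an out-tree hanging at a single child $u$) and
\[
  \max_{\substack{C'\subsetneq C,\ 2\le|C'|,\ c(v)\in C'}}\ \bigl(g(v,C')+g(v,(C\setminus C')\cup\{c(v)\})-w(v)\bigr)
\]
(\emph{merge}: glue two out-trees rooted at $v$ along their common root). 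The algorithm answers \yes iff $g(\star,C)\ge D$ for some $C$.

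For correctness one checks both inequalities. For soundness, an out-tree realizing $g(u,C\setminus\{c(v)\})$ plus the new root $v$ and the arc $vu$ is again an out-tree: the color $c(v)$ is fresh so the result stays colorful, and it stays acyclic because it is a subgraph of the DAG $\Food'$. Likewise, if $T_1,T_2$ are out-trees rooted at $v$ realizing the two entries of a \emph{merge} term, then, since their color sets meet only in $c(v)$ and both are colorful, $V(T_1)\cap V(T_2)=\{v\}$, so $T_1\cup T_2$ is connected, has $v$ of in-degree $0$ and every other vertex of in-degree $1$, is acyclic as a subgraph of $\Food'$, and hence is an out-tree rooted at $v$ with color set $C$. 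For completeness, take an optimal tree $T$ witnessing $g(v,C)$ with $|C|\ge 2$: if the root $v$ has exactly one child in $T$ this is the \emph{grow} term, and if it has at least two children, splitting $T$ at the root into one child's subtree together with $v$ and into the remainder gives the \emph{merge} term. Thus $g$ is computed correctly, and $g(\star,C)$ (for $C$ containing $0$) equals the maximum $\PD$ of a colorful viable set with color set $C\setminus\{0\}$, which justifies the acceptance test.

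Finally, the table can be filled in order of increasing $|C|$: both the \emph{grow} and the \emph{merge} term only refer to entries with strictly smaller color set, so within a level the entries do not depend on one another (acyclicity of $\Food'$ is used only to identify these combinatorial conditions with being a genuine out-tree). Over all vertices the \emph{merge} term costs $\sum_{C\subseteq[k]_0}2^{|C|}=3^{k+1}$ subset look-ups per vertex, i.e.\ $\Oh(3^k\cdot n)$ in total, the \emph{grow} term costs $\Oh(2^k)$ per arc, i.e.\ $\Oh(2^k\cdot m)$, and reading off the answer is cheaper; this stays within the claimed $\Oh(3^k\cdot n\cdot m)$ bound. The step that needs the most care is the correctness of \emph{merge}: one must argue that combining two out-trees that share only their root again yields an arborescence, which relies on colorfulness (forcing the vertex sets to overlap only in the root) together with acyclicity of \Food (forcing the union to be cycle-free); the reduction via the auxiliary source $\star$ is what lets a plain tree dynamic program replace an extra bookkeeping layer for the forest.
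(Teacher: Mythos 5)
Your proposal is correct and follows essentially the same route as the paper: after adding a single super-source, both arguments run a dynamic program over pairs consisting of a food-web vertex and an exact color subset, using the fact that a viable set is precisely one spanned by arborescences rooted at sources, and both obtain the $3^k$ factor from enumerating splits of the color set. The only organizational difference is that you combine the subtrees hanging at a vertex by a binary root-merge over color sets, whereas the paper processes the predators of a vertex sequentially via an auxiliary prefix table $\DP'[x,p,C]$; the invariants and the running-time analysis coincide.
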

\begin{proof}
	\proofpara{Table definition}
	Let $\Instance = (\Tree,\Food,k,D,c)$ be an instance of \cksPDD and by \Cref{obs:top-predator} we assume that~$\star \in X$ is the only source in \Food.
	
	Given~$x\in X$, a set of colors $C\subseteq [k]$, and a set of taxa $X'\subseteq X$:
	A set~$S \subseteq X' \subseteq X$ is~\emph{$(C,X')$-feasible} if
	\ifJournal
	\begin{enumerate}[a)]
	\else
	\begin{inparaenum}[a)]
	\fi
		\item\label{it:ka}$c(S) = C$,
		\item\label{it:kb}$c(S)$ is colorful, and
		\item\label{it:kc}$S$ is $X'$-viable.
	\ifJournal
	\end{enumerate}
	\else
	\end{inparaenum}
	\fi
	
	We define a dynamic programming algorithm with tables $\DP$ and $\DP'$.
	For~$x\in X$, $C\subseteq [k]$ we want entry~$\DP[x,C]$ to store the maximum~$\PD(S)$ of~$(C,X_{\ge x})$-feasible sets~$S$.
	Recall $X_{\ge x}$ is the set of taxa which $x$ can reach in $\Food$.
	If no~$(C,X_{\ge x})$-feasible set $S \subseteq X'$ exists, we want $\DP[x,C]$ to store~$-\infty$. (Or a big negative constant.)
	
	In other words, in $\DP[x,C]$ we store the biggest phylogenetic diversity of a set~$S$ which is $X_{\ge x}$-viable and $c$ bijectively maps~$S$ to~$C$.
	
	For any taxon $x$, let $y_1,\dots,y_q$ be an arbitrary but fixed order of $\predators{x}$.
	In the auxiliary table~$\DP'$, we want entry $\DP'[x,p,C]$ for~$p\in [q]$, and $C \subseteq [k]$ to store the maximum~$\PD(S)$ of~$(C,X')$-feasible sets $S \subseteq X'$, where $X' = \{x\} \cup X_{\ge y_1} \cup \dots \cup X_{\ge y_p}$.
	If no~$(C,X')$-feasible set $S \subseteq X'$ exists, we want $\DP'[x,p,C]$ to store~$-\infty$.
	
	\proofpara{Algorithm}
	As a basic case for each~$x\in X$ and~$p\in [|\predators{x}|]$
	let $\DP[x,\emptyset]$ and $\DP[x,p,\emptyset]$ store~0
	and
	let $\DP[x,C]$ store~$-\infty$ if~$C$ is non-empty and $c(x) \not\in C$.
	For each~$x\in X$ with~$\predators{x} = \emptyset$, we store~$\w(\rho x)$ in $\DP[x,\{\hat c(x)\}]$.
	
	Fix a taxon $x \in X$.
	For every~$Z \subseteq C \setminus \{c(x)\}$, we set~$\DP'[x,1,\{c(x)\} \cup Z] := \DP[y_1,Z]$.
	To compute further values, once $\DP'[x,p,Z]$ for each $p\in [q-1]$, and every $Z \subsetneq C$ is computed, for~$Z \subseteq C \setminus \{c(x)\}$ we use the recurrence
	\begin{equation}
		\label{eqn:recurrence-k}
		\DP'[x,p+1,\{c(x)\} \cup Z] :=
		\max_{Z'\subseteq Z}
		\DP'[x,p,\{c(x)\} \cup Z\setminus Z']
		+
		\DP[y_{p+1},Z'].
	\end{equation}
	
	Finally, we set $\DP[x,C] := \DP'[x,q,C]$ for every~$C \subseteq [k]$.
	
	We return \yes if~$\DP[\star,C]$ stores~1 for some $C \subseteq [k]$.
	Otherwise, we return \no.

	\proofpara{Correctness}
	The basic cases are correct.

	The tables are computed first for taxa further away from the source and with an increasing size of $C$.
	Assume that for a fixed taxon $x$ with predators~$y_1, \dots, y_q$ and a fixed $p\in [q]$, the entries~$\DP[x',Z]$ and~$\DP'[x,p',Z]$ for each~$x' \in \predators{x}$, for each $p'\in [p]$, and every $Z \subseteq [k]$ store the desired value.
	Fix a set~$C \subseteq [k]$ with $c(x) \in C$.
	We show that if~$\DP'[x,p+1,C]$ stores~$d$ then there is a~$(C,X')$-feasible set $S \subseteq X' \cup X_{\ge y_{p+1}}$ for $X' := \{x\} \cup X_{\ge y_1} \cup \dots \cup X_{\ge y_{p}}$ with $\PD(S) = d$.
	Afterward, we show that if $S \subseteq X' \cup X_{\ge y_{p+1}}$ with $\PD(S) = d$
	is a~$(C,X')$-feasible set then~$\DP'[x,p+1,C]$ stores at least~$d$. 
	
	If~$\DP'[x,p,C] = d > 0$ then by \Recc{eqn:recurrence-k}, there is a set $Z \subseteq C \setminus \{c(x)\}$ such that $\DP'[x,p,C\setminus Z] = d_x$ and $\DP[y_{p+1},Z] = d_y$ with $d = d_x + d_y$.
	Therefore, there is a~$(C\setminus Z,X')$-feasible set~$S_x \subseteq X'$ with $\PD(S_x) = d_x$ and a~$(Z,X_{\ge y_{p+1}})$-feasible set~$S_y \subseteq X_{\ge y_{p+1}}$ with~$\PD(S_y) = d_y$.
	Define $S := S_x \cup S_y$ and observe that $\PD(S) = d$.
	It remains to show that $S$ is a~$(C,X' \cup X_{\ge y_{p+1}})$-feasible set.
	First, observe that
	because $C\setminus Z$ and $Z$ are disjoint, we conclude that $c(S)$ is colorful.
	Then, $c(S) = c(S_x) \cup c(S_y) = C\setminus Z \cup Z = C$ where the first equation holds because $c(S)$ is colorful.
	The taxa $x$ and $y_{p+1}$ are the only sources in~$\Food[X_{\ge x}]$ and $\Food[X_{\ge y_{p+1}}]$, respectively.
	Therefore, $x$ is in~$S_x$ and~$y_{p+1}$ is in~$S_y$ unless $S_y$ is empty.
	If~$S_y = \emptyset$ then $S = S_x$ and $S$ is $X' \cup X_{\ge y_{p+1}}$-viable because $S$ is $X'$-viable.
	Otherwise, if~$S_y$ is non-empty then because $S_y$ is~$X_{\ge y_{p+1}}$-viable, we conclude $\sourcespersonal{\Food[S_y]} = \{ y_{p+1} \}$.
	As~$x\in S$ and~$y_{p+1} \in \predators{x}$ we conclude $\sourcespersonal{\Food[S]} = \{x\}$ and so $S$ is~$X' \cup X_{\ge y_{p+1}}$-viable.
	Therefore, $S$ is a~$(C,X' \cup X_{\ge y_{p+1}})$-feasible set.
	
	Conversely, let $S \subseteq X' \cup X_{\ge y_{p+1}}$ be a non-empty~$(C,X' \cup X_{\ge y_{p+1}})$-feasible set with $\PD(S) = d$.
	Observe that $X'$ and $X_{\ge y_{p+1}}$ are not necessarily disjoint.
	We define $S_y$ to be the set of taxa of $X_{\ge y_{p+1}}$ which are connected to $y_{p+1}$ in $\Food[X_{\ge y_{p+1}}]$.
	Further, define~$Z := c(S_y)$ and define $S_x := S \setminus S_y$.
	As $c(S)$ is colorful especially $c(S_x)$ and $c(S_y)$ are colorful.
	Thus, $S_y$ is a~$(Z,X_{\ge y_{p+1}})$-feasible time.
	Further, $c(S_x) = C \setminus c(S_y) = C \setminus Z$.
	As $\sourcespersonal{\Food[S]} = \sourcespersonal{\Food[X' \cup X_{\ge y_{p+1}}]} = \{x\}$, we conclude $x\in S$.
	Because \Food is a DAG, $x$ is not in~$X_{\ge y_{p+1}}$ and so~$x$ is in~$S_x$.
	Each vertex of $S$ which can reach $y_{p+1}$ in $\Food[S]$ is in $F_{\ge y_{p+1}}$ and subsequently in~$S_y$.
	Consequently, because $S$ is $X' \cup X_{\ge y_{p+1}}$-viable we conclude $\sourcespersonal{\Food[S_x]} = \{x\}$.
	Thus, $S_x$ is~$(C \setminus Z, X')$-feasible.
	So, $\DP[y_{p+1},Z] = \PD(S_x)$ and $\DP'[x,p,C\setminus Z] = \PD(S_y)$.
	Hence, $\DP'[x,p+1,C]$ stores at least $\PD(S)$.

	\proofpara{Running time}
	The basic cases can be checked in $\Oh(k)$ time.
	As each $c\in [k]$ in \Recc{eqn:recurrence-k} can either be in $Z'$, in $\{c(x)\} \cup Z \setminus Z'$ or in $[k] \setminus (\{c(x)\} \cup Z)$, all entries of the tables can be computed in $\Oh(3^k \cdot n \cdot m)$~time.
\end{proof}

\newcommand{\proofThmKStars}{
	\begin{proof}[Proof of \Cref{thm:k-stars}]
		\proofpara{Reduction}
		Let $\Instance = (\Tree, \Food, k, D)$ be an instance of \PDD.
		We assume that \Food only has one source by~\Cref{obs:top-predator}.
		
		Let $x_1, \dots, x_{n}$ be the taxa.
		Compute an $(n, k)$-perfect hash family $\mathcal{H}$.
		For every $f \in \mathcal{H}$,
		let $c_f$ be a coloring such that
		$c_f(x_j) = f(j)$ for each~$x_j \in X$.
		
		For every $f \in \mathcal{H}$,
		construct an instance $\Instance_{f} = (\Tree, \Food, k, D, c_f)$ of \cksPDD and solve~$\Instance_{f}$ using~\Cref{lem:k-stars}.
		Return \yes if and only if $\Instance_{f}$ is a \yes-instance for some~$f \in \mathcal{H}$.

		\proofpara{Correctness}
		We show that if \Instance has a solution $S$ then there is an~$f \in \mathcal{H}$ such that $\Instance_f$ is a \yes-instance of \cksPDD.
		Let $S$ be a solution for \Instance.
		Thus, $S$ is viable, $\PD(S) \ge D$, and~$S$ has a size of at most~$k$.
		We may assume $|S| = k$ by \Cref{obs:solution-size}.
		By the definition of~$(n, k)$-perfect hash families, a function $f\in \mathcal{H}$ exists
		such that~$c_f(S)$ is colorful.
		So, $S$ is a solution for $\Instance_f$.
		Conversely,
		a solution of $\Instance_f$ for any~$f\in \mathcal{H}$ is a solution for~\Instance.

		\proofpara{Running Time}
		The instances $\Instance_f$ can be constructed in $e^k k^{\Oh(\log k)} \cdot n \log n$ time.
		An instance of \cksPDD can be solved in $\Oh(3^k \cdot n \cdot m)$~time, and the number of instances is~$|\mathcal{C}| = e^k k^{\Oh(\log k)} \cdot \log n$.
		Thus, the total running time is 
		$\Oh^*(e^k k^{\Oh(\log k)} \log n \cdot (3^k\cdot nm))$ which simplifies to $\Oh((3e)^k \cdot 2^{\Oh(\log^2(k))} \cdot nm \cdot \log n)$.
	\end{proof}
}

For the following proof we construct a perfect hash family $\mathcal{H}$ which is defined in \Cref{def:perfectHashFamily}
Central for proving \Cref{thm:k-stars} is to define and solve an instance of \cksPDD for each function in $\mathcal{H}$.
\ifWABIShort
The proof uses standard color-coding and therefore is deferred to the appendix.
\else
\proofThmKStars
\fi

Unluckily, we were unable to prove whether \PDD is \FPT with respect to $k$.\todos{Probably pimp this sentence to not stand there so lonely.}

\subsection{\PDD with $k+\height_{\Tree}$}
\label{sec:k+height}
In this subsection, we generalize the result of the previous subsection and we show that \PDD is \FPT when parameterized with the size of the solution~$k$ plus~$\height_{\Tree}$, the height of the phylogenetic tree.\todok{Note: Height is here unrelated to branch lengths} 
This algorithm uses the techniques of color-coding, data reduction by reduction rules, and the enumeration of trees.

\begin{theorem} \label{thm:k+height}
	\PDD can be solved in~$\Oh^*(K^K \cdot 2^{4.47 K + o(K)})$~time where $K := k\cdot \height_{\Tree}$.
\end{theorem}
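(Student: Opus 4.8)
The plan is to combine the three tools announced just before the statement: data reduction, color-coding, and explicit enumeration of colored trees. Write $h := \height_{\Tree}$ and $K := k\cdot h$. First I would normalize the instance. If $k > n$ then by \Cref{obs:solution-size} the only possible solution is $X$, which is viable, so it suffices to test $\PD(X) \ge D$; hence assume $k \le n$. Apply \Cref{obs:top-predator} so that \Food has a single source~$\star$; this attaches $\star$ as a depth-$1$ leaf of~\Tree, so $h$ is unchanged, $k$ grows by one and $K$ grows by at most a constant factor, and afterwards every non-empty viable set contains~$\star$. Also apply \Cref{rr:each-taxon-savable} and \Cref{rr:maxw<D}, and use \Cref{obs:solution-size} to reduce to searching for a viable $S \subseteq X$ with $|S| = k$ and $\PD(S) \ge D$. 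The structural point is: since $\star \in S$ is a depth-$1$ leaf, the subtree $T_S := \spannbaum{S}$ of~\Tree contains the root~$\rho$, its edge set equals $E_{\Tree}(S) = \bigcup_{x \in S}\{e : x \in \off(e)\}$, so $\PD(S) = \sum_{e \in E(T_S)}\w(e)$, and $T_S$ is a union of $k$ root-to-leaf paths of length $\le h$, hence has at most $K$ edges and at most $K+1$ vertices.

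Next, I would color-code the at most $K+1$ vertices that $T_S$ can use: build, via a perfect hash family, a family of $2^{O(K)}\cdot\poly(n)$ colorings $c : V(\Tree) \to [K+1]$ such that for any vertex set of size at most $K+1$ (in particular $V(T_S)$) some $c$ in the family is injective on it; fix such a $c$. Under $c$, $T_S$ becomes a rooted tree on pairwise distinctly colored vertices, rooted at the color $c(\rho)$, with $\le K+1$ vertices and $\le k$ leaves; call such an object a \emph{colored shape}. I would then enumerate all colored shapes~$B$ (choose the non-root color set and a rooted-tree structure on it); there are $K^{O(K)}$ of them, the dominant contribution being the roughly $(K{+}1)^{K}$ rooted trees on a $(K{+}1)$-element color set. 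For the right pair $(c,B)$ — the one where $B$ is the $c$-image of the spanning tree of a fixed solution — it remains to decide whether there is a viable $S$ with $|S|=k$ whose spanning tree is color-consistently isomorphic to $B$ (rooted at $\rho$) and has total edge weight $\ge D$.

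For a fixed $(c,B)$, I would solve this by a dynamic program that runs over the food-web exactly as in the proof of \Cref{lem:k-stars} — processing taxa from the single source~$\star$ outward along predators, and merging the contributions of the predators of a vertex as in \Recc{eqn:recurrence-k} — but with an enriched state. Besides the current food-web vertex, a state records which \emph{initial part} of $B$ has been assembled so far, namely a rooted subtree~$B'$ of $B$ containing the root of~$B$ (there are $2^{O(K)}$ of these); the table entry stores the maximum $\PD(A)$ over viable partial solutions $A$ whose spanning tree is color-consistently isomorphic to~$B'$. When a taxon $x$ is added, its root-to-$x$ path in~\Tree is fixed, so it contributes a fixed chain of colored vertices; the colors tell us where this chain must attach to the part assembled so far and let us reject color collisions — this is exactly where colorfulness of $V(T_S)$ is needed, since it guarantees that two taxa of $S$ share a \Tree-ancestor precisely when they share the corresponding color. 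When merging two branches we additionally split the target subtree $B'$ among them along their common colored prefix. The algorithm accepts iff some entry realizes all of $B$ with exactly $k$ leaves, is viable and has $\PD \ge D$; correctness in one direction is immediate, and in the other a size-$k$ solution is found for the coloring making its spanning tree colorful and the shape $B$ equal to its $c$-image.

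The running time is the product of the sizes of the $O(1)$ perfect hash families used ($2^{O(K)} \cdot \poly(n)$ each), the number $K^{O(K)}$ of colored shapes, and the per-pair dynamic-program cost of $2^{O(K)} \cdot \poly(n,m)$; a careful accounting — balancing the $e$-factors coming from color-coding against the $3^{O(K)}$-type factor from the merge and the $K^{O(K)}$ shape count — gives the claimed $\Oh^*(K^{K} \cdot 2^{4.47K + o(K)})$. I expect the main obstacle to be the correctness of the enriched dynamic program: one must show that the colored-shape bookkeeping is consistent across the food-web recursion — that the realizations of the shared prefix of $B$ contributed by the two merged branches coincide inside~\Tree — and that $\PD$ is counted exactly, with no edge double-counted or missed; here the colorfulness guarantee does the heavy lifting. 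Pinning the exponent down to the stated constant $4.47$ (the count of colored shapes and the cost of splitting a shape during a merge) is a secondary, more mechanical, issue.
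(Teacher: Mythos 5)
Your outer skeleton is the same as the paper's: color the vertices of \Tree with roughly $K$ colors via a perfect hash family, enumerate the $K^{\Oh(K)}$ colored rooted trees that could be the image of $\spannbaum{S\cup\{\rho\}}$, and solve one colored subproblem per (coloring, shape) pair -- this is exactly the paper's \PDDp with a pattern-tree. The gap is in your inner solver, and it sits at the step you yourself flag. In the soundness direction of your enriched DP the ``colorfulness guarantee'' does no work: the coloring $c$ is injective only on the spanning tree of the one hypothetical solution (and only for the good hash function), not on $V(\Tree)$, so when you merge two food-web branches whose table entries realize shapes $B_1,B_2$ sharing a colored prefix, nothing forces the two branches to realize that prefix with the same vertices of \Tree -- many vertices of \Tree carry each color. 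Consequently the weight of the shared prefix is not a function of the DP state, and ``splitting $B'$ along the common colored prefix'' leaves the $\PD$-accounting ill-defined: if both branches count their own copy of the prefix the value can overcount (breaking soundness), and if you try to subtract ``the'' prefix weight you are subtracting a quantity the state does not determine. A repair is possible -- charge every colored edge of the shape to exactly one branch and prove, with a strengthened invariant (each stored realization uses at most one \Tree-edge per colored edge it is responsible for), that the stored value is a lower bound on $\PD(S_1\cup S_2)$ even when the prefix realizations differ -- but that invariant and its proof are precisely the missing content, and they are the heart of the theorem, not the ``secondary, mechanical'' part.

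For comparison, the paper avoids the issue entirely: for a fixed coloring and pattern-tree it applies Reduction Rules~\ref{rr:edge-original}--\ref{rr:internal-vertex} to contract \Tree to a star before running any DP. The key device is \Cref{rr:internal-vertex}, which deletes an internal vertex $u$ and transfers the weight of the edge $\rho u$ onto the unique child whose color equals the designated pattern color $c_P(v')$; color-equality plus colorfulness of the solution's spanning tree guarantee that exactly one such child is selected, so every internal edge is paid exactly once, and the residual instance is plain \sPDD, handled by \Cref{thm:k-stars}. If you want to keep your direct food-web DP you must either prove the charging invariant sketched above or adopt this weight-transfer flattening; the rest of your outline (normalization, hash families, shape enumeration, and the $K^{K}\cdot 2^{\Oh(K)}$ bookkeeping) is consistent with the paper's argument.
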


We define a \emph{pattern-tree}~$\Tree_P = (V_P,E_P,c_P)$ to be a tree~$(V_P,E_P)$ with a vertex-coloring $c_P: V_P \to [k\cdot \height_{\Tree}]$.
Recall that~$\spannbaum{Y}$ is the spanning tree of the vertices in~$Y$.

To show the result above, we use a subroutine for solving the following problem.
In \PDDplong~(\PDDp), we are given alongside the usual input~$(\Tree,\Food,k,D)$ of \PDD a pattern-tree~$\Tree_P = (V_P,E_P,c_P)$, and a vertex-coloring~$c: V(\Tree) \to [k\cdot \height_{\Tree}]$.
We ask whether there is a viable set $S \subseteq X$ of taxa such that~$S$ has a size of at most~$k$, $c(\spannbaum{ S\cup\{\rho\} })$ is colorful, and $\spannbaum{ S\cup\{\rho\} }$ and $\Tree_P$ are \emph{color-equal}.
That is, there is an edge~$uv$ of $\spannbaum{ S\cup\{\rho\} }$ with~$c(u) = c_u$ and $c(v) = c_v$ if and only if there is an edge $u'v'$ of $\Tree_P$ with~$c(u') = c_u$ and~$c(v') = c_v$.
Informally, given a pattern-tree and want that it matches in colors with the spanning tree induced by the root and a solution.

With these reduction rules, we can reduce the phylogenetic tree of a given instance of \PDDp to only be a star and then solve \PDDp by applying~\Cref{thm:k-stars}.

Next we present reduction rules with which we can reduce the phylogenetic tree in an instance of \PDDp to be a star which subsequently can be solved with~\Cref{thm:k-stars}.
Afterward, we show how to apply this knowledge to compute a solution for \PDD.

\begin{rr}
	\label{rr:edge-original}
	Let~$uv$ be an edge of~$\Tree$.
	If there is no edge $u'v'\in E_P$ with $c_P(u') = c(u)$ and $c_P(v') = c(v)$, then set~$\Tree' := \Tree - \desc(v)$ and~$\Food' := \Food - \off(v)$.
\end{rr}
\ifWABIShort
\else
\begin{lemma}
	\label{lem:edge-original}
	\Cref{rr:edge-original} is correct and can be applied exhaustively in $\Oh(n^3)$~time.
\end{lemma}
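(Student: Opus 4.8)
The plan is to prove that the reduced instance $\Instance'=(\Tree',\Food',k,D,\Tree_P,c)$ is equivalent to $\Instance$ as an instance of \PDDp, and then to bound the cost of applying the rule exhaustively.

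The heart of the correctness proof is the observation that, once the rule fires on an edge $uv$ of $\Tree$, no solution of $\Instance$ can contain a taxon of $\off(v)$. I would argue this by contradiction: if $S$ is a solution and some $x\in S$ lies in $\off(v)$, then $v$ and its parent $u$ both lie on the unique $\rho$--$x$ path of $\Tree$, so $uv$ is an edge of $\spannbaum{S\cup\{\rho\}}$; by color-equality of $\spannbaum{S\cup\{\rho\}}$ with $\Tree_P$ there would then have to be an edge $u'v'\in E_P$ with $c_P(u')=c(u)$ and $c_P(v')=c(v)$, which the precondition of the rule excludes. Hence every solution $S$ of $\Instance$ satisfies $S\subseteq X\setminus\off(v)=V(\Food')$, and moreover $S$ avoids $\desc(v)$, so $S\cup\{\rho\}\subseteq V(\Tree')$. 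I would then observe that for \emph{any} $S\subseteq X\setminus\off(v)$ no $\rho$--$S$ path of $\Tree$ visits a vertex of $\desc(v)$, so that $\spannbaum{S\cup\{\rho\}}=\spannbaumsub{\Tree'}{S\cup\{\rho\}}$; this makes the colorfulness and color-equality conditions, as well as the bound $|S|\le k$, literally the same in the two instances. Thus the only point that does not transfer trivially is viability in the food-web.

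For the implication ``$\Instance$ is a \yes-instance $\Rightarrow$ $\Instance'$ is a \yes-instance'' this is easy: a solution $S$ of $\Instance$ avoids the deleted taxa, so $\Food'[S]=\Food[S]$, and every source of $\Food$ that survives in $\Food'$ stays a source; hence $\sourcespersonal{\Food[S]}\subseteq\sourcespersonal{\Food}$ gives $\sourcespersonal{\Food'[S]}\subseteq\sourcespersonal{\Food'}$, so $S$ is a solution of $\Instance'$. The reverse implication is the step I expect to be the main obstacle: deleting $\off(v)$ from the food-web can only \emph{enlarge} the set of sources, so a priori a solution of $\Instance'$ might use a taxon that is a source of $\Food'$ but is no longer ``fed'' in $\Food$, and such a set would fail to be viable in $\Food$. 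Closing this gap is the delicate part; I would try to handle it by arguing that a taxon whose prey in $\Food$ all lie in $\off(v)$ is itself useless for any solution — so it can be ruled out of $\Instance'$-solutions as well — possibly invoking the normalization that $\Food$ has a single source (\Cref{obs:top-predator}), and, if a purely local argument turns out not to suffice, by pairing the rule with a clean-up that also deletes taxa which become unsupportable once $\off(v)$ is removed.

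Finally, for the running time I would precompute the set $\{(c_P(u'),c_P(v')):u'v'\in E_P\}$ of at most $k\cdot\height_\Tree$ color pairs, stored so that a query ``does $(c(u),c(v))$ occur?'' is answered in constant time. One sweep over the $\Oh(n)$ edges of $\Tree$ then locates all applicable edges $uv$; deleting the subtrees $\desc(v)$ from $\Tree$ and the offspring sets $\off(v)$ from $\Food$, and suppressing any out-degree-one internal vertices so that $\Tree'$ is again a phylogenetic tree, takes $\Oh(n)$ time. Since suppressing a vertex merges two tree edges into one edge with a possibly new color pair, the rule may reactivate; bounding the number of sweeps by $\Oh(n)$ and each membership test naively by $\Oh(|E_P|)$ yields the stated $\Oh(n^3)$ bound, which is a comfortable over-estimate.
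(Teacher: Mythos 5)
Your first paragraph is, in essence, the paper's entire correctness argument: the paper's proof of \Cref{lem:edge-original} consists exactly of the observation that a solution $S$ containing a taxon of $\off(v)$ would force the edge $uv$ into $\spannbaum{S\cup\{\rho\}}$ and hence, by color-equality, an edge of $\Tree_P$ with color pair $(c(u),c(v))$, so that $S\cap\desc(v)=\emptyset$; it then declares the rule safe. The running-time bound is also obtained the same way you obtain it: applicability is tested by scanning $E(\Tree)$ against $E_P$ in $\Oh(n^2)$ time, and each application is charged to a deleted vertex, giving $\Oh(n^3)$ overall. So your core argument and your time analysis coincide with the paper's.

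The difference lies in the reverse direction, which you flag as the delicate part and leave open; this is the one genuine gap in your write-up, since a sketch of possible fixes is not yet a proof. Note, however, that the paper's proof does not address this direction at all, so you have not missed an argument the paper gives -- you have put your finger on a step it glosses over, and your concern is legitimate for the rule as literally stated: deleting $\off(v)$ from $\Food$ can turn a taxon $x$ whose prey all lie in $\off(v)$ into a source of $\Food'$, and a solution of the reduced instance may use $x$ unfed, so it need not be viable in $\Food$. The purely local argument you were unsure about does suffice to close this. By the same color-equality observation, every solution of the original instance avoids $\off(v)$, and by \Cref{obs:viable} every taxon of a viable set reaches a source of~$\Food$ within the set; hence any taxon all of whose paths to a source of~$\Food$ meet $\off(v)$ can never occur in a solution, and deleting these taxa alongside $\off(v)$ is safe. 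With this clean-up (which is exactly the spirit of \Cref{rr:food-web}) the surviving sources of the new food-web are original sources, viability transfers in both directions, and the asymptotic running time is unaffected; the single-source normalization of \Cref{obs:top-predator} does not by itself help, since the taxa in question are precisely those cut off from the original sources. To turn your proposal into a complete proof, carry out this clean-up argument explicitly instead of leaving it conditional.
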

\fi
\newcommand{\proofRREdgeOrginal}{
\ifWABIShort
\begin{proof}[Correctness of~\Cref{rr:edge-original}]
\else
\begin{proof}
\fi
	\proofpara{Correctness}
	Assume $S\subseteq X$ is a solution of the instance of \PDDp.
	As there is no edge $u'v'\in E_P$ with $c_P(u') = c(u)$ and $c_P(v') = c(v)$ we conclude that $S\cap \desc(v) = \emptyset$ and so the reduction rule is safe.
	
	\proofpara{Running Time}
	To check whether \Cref{rr:edge-original} can be applied, we need to iterate over both $E(\Tree)$ and $E_P$.
	Therefore, a single application can be executed in $\Oh(n^2)$~time.
	In each application of \Cref{rr:edge-original} we remove at least one vertex so that an exhaustive application can be computed in $\Oh(n^3)$~time.
\end{proof}
}
\ifWABIShort
\else
\proofRREdgeOrginal
\fi

\begin{rr}
	\label{rr:edge-pattern}
	Let~$u'v'$ be an edge of~$\Tree_P$.
	For each vertex $u\in V(\Tree)$ with $c(u) = c_P(u')$ such that~$u$ has no child~$v$ with $c(v) = c_P(v')$,
	set~$\Tree' := \Tree - \desc(v)$ and~$\Food' := \Food - \off(v)$.
\end{rr}
\ifWABIShort
\else
\begin{lemma}
	\label{lem:edge-pattern}
	\Cref{rr:edge-pattern} is correct and can be applied exhaustively in $\Oh(n^3)$~time.
\end{lemma}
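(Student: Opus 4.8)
The plan is to follow exactly the same template as the proof of \Cref{lem:edge-original}, since \Cref{rr:edge-pattern} is the ``dual'' reduction rule: instead of deleting below an edge of $\Tree$ whose colour-pair has no witness in $\Tree_P$, we delete below a vertex $v$ of $\Tree$ when some required edge $u'v'$ of $\Tree_P$ fails to be realised at the copy $u$ of $u'$ in $\Tree$.

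\begin{proof}
	\proofpara{Correctness}
	Let $S\subseteq X$ be a solution of the instance of \PDDp, and suppose for contradiction that $S$ intersects $\desc(v)$ for some vertex $v$ deleted by the rule, where $v$ is a child of a vertex $u$ with $c(u)=c_P(u')$ and $u$ has no child coloured $c_P(v')$. Since $S\cap\desc(v)\neq\emptyset$, the vertex $v$, and hence its parent $u$, lies on the spanning tree $\spannbaum{S\cup\{\rho\}}$; moreover the edge $uv$ is an edge of $\spannbaum{S\cup\{\rho\}}$ because in $\Tree$ the vertex $v$ is the only child of $u$ through which taxa of $S\cap\desc(v)$ are reached (contracting degree-two vertices does not affect which colour-pairs appear as edges). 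As $\spannbaum{S\cup\{\rho\}}$ and $\Tree_P$ are color-equal, the edge $u'v'\in E_P$ forces the existence of an edge $ab$ of $\spannbaum{S\cup\{\rho\}}$ with $c(a)=c_P(u')=c(u)$ and $c(b)=c_P(v')$. Because $c(\spannbaum{S\cup\{\rho\}})$ is colorful, $a=u$, so $u$ has a child (in $\Tree$, after contraction, i.e.\ a descendant reached through a distinct subtree) coloured $c_P(v')$. A short case check on where $b$ sits relative to $u$ shows that in fact $u$ must have an actual child in $\Tree$ of colour $c_P(v')$, contradicting the premise of the rule. Hence $S$ is disjoint from $\desc(v)$ for every removed $v$, and therefore $S$ remains a solution of $\Instance'$. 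The converse direction is trivial, as $\Instance'$ is obtained from $\Instance$ only by deletions and any solution of $\Instance'$ is a solution of $\Instance$.

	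\proofpara{Running Time}
	To apply the rule once, we iterate over all edges $u'v'$ of $\Tree_P$ and, for each, over all vertices $u$ of $\Tree$ with $c(u)=c_P(u')$, checking in $\Oh(n)$ time whether $u$ has a child of colour $c_P(v')$; this takes $\Oh(n^2)$ time overall. Each successful application removes at least one vertex from $\Tree$, so an exhaustive application runs in $\Oh(n^3)$ time.
\end{proof}

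The main obstacle in writing this cleanly is the same subtlety that already appears implicitly in \Cref{rr:edge-original}: the spanning tree $\spannbaum{S\cup\{\rho\}}$ suppresses vertices of degree two, so one must argue carefully that an edge $uv$ of $\Tree$ lying ``above'' $S$ really does contribute its colour pair $(c(u),c(v))$ as an edge of $\spannbaum{S\cup\{\rho\}}$, and conversely that the colour-equality constraint together with colourfulness pins the witness down to the specific vertex $u$ and a genuine child of it. I would handle this by invoking colourfulness of $c(\spannbaum{S\cup\{\rho\}})$ to make every colour occur at most once on the relevant tree, which collapses all the ``is it this vertex or a contracted copy'' ambiguities; once that is said, the rest is the routine argument above.
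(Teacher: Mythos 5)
Your proof is correct and takes essentially the same route as the paper's own argument: color-equality forces an edge of $\spannbaum{S\cup\{\rho\}}$ whose endpoints carry the colors $c_P(u')$ and $c_P(v')$, colorfulness pins its parent endpoint to the vertex $u$ (which lies on the spanning tree because $S$ meets $\desc(v)$), and this contradicts the premise that $u$ has no child colored $c_P(v')$, so every solution avoids the deleted subtree. The only caveat is that your contraction discussion is a red herring: in this paper $\spannbaum{S\cup\{\rho\}}$ is the minimal subtree of $\Tree$ containing $S\cup\{\rho\}$ with all intermediate (degree-two) vertices retained, so every edge of it is an edge of $\Tree$ and the witness $b$ is literally a child of $u$ in $\Tree$ — the hand-waved ``short case check'' (and the parenthetical claim that suppressing degree-two vertices preserves color pairs, which is false in general) is simply not needed.
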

\fi
\newcommand{\proofRREdgePattern}{
\ifWABIShort
\begin{proof}[Correctness of~\Cref{rr:edge-pattern}]
\else
\begin{proof}
\fi
	\proofpara{Correctness}
	Let $S$ be a solution for the instance of \PDDp.
	The spanning tree $\spannbaum{ S\cup\{\rho\} }$ contains exactly one vertex $w$ with color $c(u)$.
	As $c(w) = c_P(u')$ we conclude that $w$ has a child $w'$ and $c(w') = c(v')$.
	Consequently, $w \ne u$ and $S\cap \desc(u) = \emptyset$.
	
	\proofpara{Running Time}
	Like in \Cref{rr:edge-original}, iterate over the edges of \Tree and $\Tree_P$.
	Each application either removes at least one vertex or concludes that the reduction rule is applied exhaustively.
\end{proof}
}
\ifWABIShort
\else
\proofRREdgePattern
\fi

\begin{rr}
	\label{rr:food-web}
	If \Tree has leaf set $L$ and \Food has vertex set $X$ with $L \subsetneq X$,
	then compute the set $R \subseteq X$ of vertices $v$ which on every path from $v$ to a source of \Food contain a vertex of~$X\setminus L$,
	set~$\Tree' := \Tree - R$ and~$\Food' := \Food - R$.
\end{rr}
\ifWABIShort
\else
\begin{lemma}
	\label{lem:rr-food-web}
	\Cref{rr:food-web} is correct and can be applied exhaustively in $\Oh(n^2 \cdot m)$~time.
\end{lemma}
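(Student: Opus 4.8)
The plan is to argue correctness and running time separately, mirroring the structure of the proofs of \Cref{rr:edge-original} and \Cref{rr:edge-pattern}. For correctness, I first observe that the set $R$ computed by the rule consists exactly of those taxa $v\in X$ for which \emph{every} directed path in \Food from a source to $v$ passes through some vertex of $X\setminus L$, i.e. through a non-leaf of \Tree. The key claim is that no such $v$ can belong to a solution $S$. Indeed, suppose $v\in S$; since $S$ is viable, \Cref{obs:viable} gives edges $E_S$ such that the connected component of $v$ in $(S,E_S)$ is a tree rooted at some $r\in\sources$, and the path from $r$ to $v$ in this tree is a directed path in \Food from a source to $v$ lying entirely inside $S$. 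By definition of $R$, this path contains some vertex $w\in X\setminus L$; but $S\subseteq X$ is a set of taxa, hence $S\subseteq L$ (the leaves of \Tree are exactly the taxa labelled by $X$ in the phylogenetic tree sense, while $X\setminus L$ are the extra food-web vertices), contradicting $w\in S$. Hence $S\cap R=\emptyset$, so deleting $R$ from both \Food and \Tree changes neither the set of viable solutions nor their phylogenetic diversity, and $\Instance'$ is equivalent to $\Instance$. Conversely, any solution of $\Instance'$ is trivially a solution of $\Instance$ since we only removed vertices disjoint from all solutions.

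For the running time: computing $R$ amounts to a reachability computation. Mark every vertex of $X\setminus L$ as ``forbidden'' and, working from the sources in topological order, label a vertex $v$ as ``reachable from a source along an allowed path'' if $v\notin(X\setminus L)$ and $v$ either is a source or has an in-neighbour that is so labelled; then $R$ is the complement of the labelled set. A single such pass over \Food takes $\Oh(n+m)$ time. Each application of the rule removes at least one vertex (whenever $R\neq\emptyset$; otherwise the rule is already applied exhaustively), so there are at most $n$ applications, giving $\Oh(n\cdot m)$ total, which is within the claimed $\Oh(n^2\cdot m)$ bound.

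The only real subtlety is the correctness argument, and specifically pinning down that a solution $S$ really is contained in $L$ rather than merely in $X$. This is immediate from the convention that the taxa $X$ label the leaves of the phylogenetic tree and that in \PDDp the food-web may have additional non-leaf vertices (the hypothesis $L\subsetneq X$ in the rule refers to this enriched setting introduced by \Cref{rr:food-web}); once one is careful that ``solution'' means a subset of $L$, the rest of the argument is a routine application of \Cref{obs:viable}. I would also note explicitly that removing $\desc$-closed / $\off$-closed sets keeps \Tree a valid phylogenetic tree and \Food a DAG, as in the earlier reduction rules, so the output instance is well-formed.
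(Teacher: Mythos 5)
Your proof is correct and follows essentially the same route as the paper's: correctness rests on the observation that a solution is necessarily a subset of $L$ and hence, by \Cref{obs:viable}, cannot contain any vertex of $R$, while the running time is a source-reachability computation in \Food; your version is in fact more detailed than the paper's one-line correctness argument. The only difference is bookkeeping in the running time — the paper computes $R$ by BFS in $\Oh(n^2\cdot m)$ time and remarks that a single application is exhaustive, whereas you use an $\Oh(n+m)$ pass and bound the number of applications by $n$ — and both analyses stay within the claimed bound.
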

\fi
\newcommand{\proofRRFoodWeb}{
\ifWABIShort
\begin{proof}[Correctness of~\Cref{rr:food-web}]
\else
\begin{proof}
\fi
	\proofpara{Correctness}
	By definition, each set $S\subseteq X$ with $S\cap R \ne \emptyset$ is not viable.
	Therefore, \Cref{rr:food-web} is correct.
	
	\proofpara{Running Time}
	Iterate over the taxa~$x\in X\setminus L$ and compute whether there is a path from~$x$ to some vertex of $\sources$ in $\Food[X\setminus L]$.
	If not add $x$ to $R$.
	With breadth-first search, this computation can be done in~$\Oh(m\cdot n^2)$~time, where the quadratic factor arises from the iteration over $X\setminus L$ and $\sources$.
	A single application is exhaustive.\todos{True??}
\end{proof}
}
\ifWABIShort
\else
\proofRREdgePattern
\fi

\begin{rr}
	\label{rr:internal-vertex}
	Apply Reduction Rules~\ref{rr:edge-original},~\ref{rr:edge-pattern}, and~\ref{rr:food-web} exhaustively.
	Let~$\rho$ be the root of~\Tree and let~$\rho_P$ be the root of~$\Tree_P$.
	Let $v'$ be a grand-child of $\rho_P$ and let~$u'$ be the parent of~$v'$.
	\ifJournal
	\begin{enumerate}[1.]
	\else
	\begin{inparaenum}[1.]
	\fi
		\item For each vertex $u$ of $\Tree$ with~$c(u) = c_P(u')$
		add edges $\rho v$ to $\Tree$ for every child $v$ of $u$.
		\item Set the weight of $\rho v$ to be $\w(uv)$ if~$c(v) \ne c_P(v')$
		or~$\w(uv)+\w(\rho u)$ if~$c(v) = c_P(v')$.
		\item Add edges~$\rho_P w'$ to $\Tree_P$ for every child~$w'$ of~$u'$.
		\item Set~$\Tree_P' := \Tree_P - u'$ and~$\Tree' := \Tree - u$.
	\ifJournal
	\end{enumerate}
	\else
	\end{inparaenum}
	\fi
\end{rr}
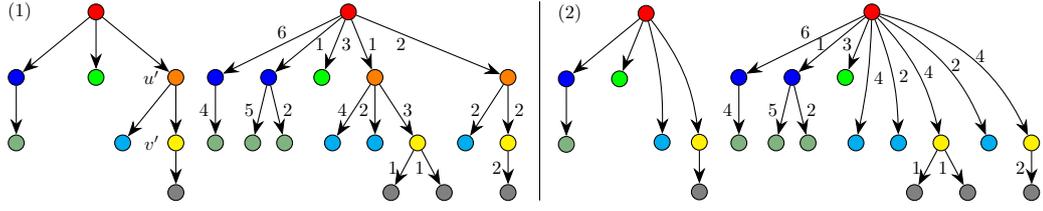
\begin{figure}[t]
\centering
\begin{tikzpicture}[scale=0.8,every node/.style={scale=0.7}]
	\node[draw,fill=red,inner sep=3pt,circle] (root) at (10,10) {};
	
	\node[draw,fill=blue,inner sep=3pt,circle,yshift=5mm,xshift=-15mm] (c1) [below= of root] {};
	\node[draw,fill=green,inner sep=3pt,circle,yshift=5mm,xshift=0mm] (c2) [below= of root] {};
	\node[draw,fill=orange,inner sep=3pt,circle,yshift=5mm,xshift=15mm] (c3) [below= of root] {};
	
	\node[draw,fill=green!40!black!50!,inner sep=3pt,circle,yshift=5mm,xshift=0mm] (c11) [below= of c1] {};
	
	\node[draw,fill=cyan,inner sep=3pt,circle,yshift=5mm,xshift=-10mm] (c31) [below= of c3] {};
	\node[draw,fill=yellow,inner sep=3pt,circle,yshift=5mm,xshift=0mm] (c32) [below= of c3] {};
	
	\node[draw,fill=gray,inner sep=3pt,circle,yshift=8mm,xshift=0mm] (c321) [below= of c32] {};

	\draw[-{Stealth[length=6pt]}] (root) -> (c1);
	\draw[-{Stealth[length=6pt]}] (root) -> (c2);
	\draw[-{Stealth[length=6pt]}] (root) -> (c3);
	
	\draw[-{Stealth[length=6pt]}] (c1) -> (c11);
	\draw[-{Stealth[length=6pt]}] (c3) -> (c31);
	\draw[-{Stealth[length=6pt]}] (c3) -> (c32);
	
	\draw[-{Stealth[length=6pt]}] (c32) -> (c321);
	
	\node[xshift=14mm] [left= of c3] {$u'$};
	\node[xshift=14mm] [left= of c32] {$v'$};
	\node[xshift=5mm] [left= of root] {(1)};
\end{tikzpicture}
\begin{tikzpicture}[scale=0.8,every node/.style={scale=0.7}]
	\node[draw,fill=red,inner sep=3pt,circle] (root) at (10,10) {};
	
	\node[draw,fill=blue,inner sep=3pt,circle,yshift=5mm,xshift=-25mm] (c1) [below= of root] {};
	\node[draw,fill=blue,inner sep=3pt,circle,yshift=5mm,xshift=-15mm] (c4) [below= of root] {};
	\node[draw,fill=green,inner sep=3pt,circle,yshift=5mm,xshift=-5mm] (c2) [below= of root] {};
	\node[draw,fill=orange,inner sep=3pt,circle,yshift=5mm,xshift=5mm] (c3) [below= of root] {};
	\node[draw,fill=orange,inner sep=3pt,circle,yshift=5mm,xshift=30mm] (c5) [below= of root] {};
	
	\node[draw,fill=green!40!black!50!,inner sep=3pt,circle,yshift=5mm,xshift=0mm] (c11) [below= of c1] {};
	
	\node[draw,fill=green!40!black!50!,inner sep=3pt,circle,yshift=5mm,xshift=-3mm] (c41) [below= of c4] {};
	\node[draw,fill=green!40!black!50!,inner sep=3pt,circle,yshift=5mm,xshift=3mm] (c42) [below= of c4] {};
	
	\node[draw,fill=cyan,inner sep=3pt,circle,yshift=5mm,xshift=-8mm] (c30) [below= of c3] {};
	\node[draw,fill=cyan,inner sep=3pt,circle,yshift=5mm,xshift=0mm] (c31) [below= of c3] {};
	\node[draw,fill=yellow,inner sep=3pt,circle,yshift=5mm,xshift=8mm] (c32) [below= of c3] {};
	
	\node[draw,fill=cyan,inner sep=3pt,circle,yshift=5mm,xshift=-8mm] (c51) [below= of c5] {};
	\node[draw,fill=yellow,inner sep=3pt,circle,yshift=5mm,xshift=0mm] (c52) [below= of c5] {};
	
	\node[draw,fill=gray,inner sep=3pt,circle,yshift=8mm,xshift=-5mm] (c321) [below= of c32] {};
	\node[draw,fill=gray,inner sep=3pt,circle,yshift=8mm,xshift=5mm] (c322) [below= of c32] {};
	
	\node[draw,fill=gray,inner sep=3pt,circle,yshift=8mm,xshift=0mm] (c521) [below= of c52] {};
	
	\draw[-{Stealth[length=6pt]}] (root) -> node[above] {6} (c1);
	\draw[-{Stealth[length=6pt]}] (root) -> node[right] {1} (c4);
	\draw[-{Stealth[length=6pt]}] (root) -> node[right] {3} (c2);
	\draw[-{Stealth[length=6pt]}] (root) -> node[right] {1} (c3);
	\draw[-{Stealth[length=6pt]}] (root) -> node[left,xshift=-3mm] {2} (c5);
	
	\draw[-{Stealth[length=6pt]}] (c1) -> node[left] {4} (c11);
	\draw[-{Stealth[length=6pt]}] (c4) -> node[left] {5} (c41);
	\draw[-{Stealth[length=6pt]}] (c4) -> node[right] {2} (c42);
	
	\draw[-{Stealth[length=6pt]}] (c3) -> node[left] {4} (c30);
	\draw[-{Stealth[length=6pt]}] (c3) -> node[left] {2} (c31);
	\draw[-{Stealth[length=6pt]}] (c3) -> node[right] {3} (c32);
	
	\draw[-{Stealth[length=6pt]}] (c32) -> node[left] {1} (c321);
	\draw[-{Stealth[length=6pt]}] (c32) -> node[left] {1} (c322);
	
	\draw[-{Stealth[length=6pt]}] (c5) -> node[left] {2} (c51);
	\draw[-{Stealth[length=6pt]}] (c5) -> node[right] {2} (c52);
	
	\draw[-{Stealth[length=6pt]}] (c52) -> node[left] {2} (c521);
\end{tikzpicture}
\begin{tikzpicture}[scale=0.8,every node/.style={scale=0.7}]
	\node[draw,fill=red,inner sep=3pt,circle] (root) at (10,10) {};
	
	\node[draw,fill=blue,inner sep=3pt,circle,yshift=5mm,xshift=-15mm] (c1) [below= of root] {};
	\node[draw,fill=green,inner sep=3pt,circle,yshift=5mm,xshift=-5mm] (c2) [below= of root] {};
	
	\node[draw,fill=green!40!black!50!,inner sep=3pt,circle,yshift=5mm,xshift=0mm] (c11) [below= of c1] {};
	
	\node[draw,fill=cyan,inner sep=3pt,circle,yshift=-7mm,xshift=3mm] (c31) [below= of root] {};
	\node[draw,fill=yellow,inner sep=3pt,circle,yshift=-7mm,xshift=10mm] (c32) [below= of root] {};
	
	\node[draw,fill=gray,inner sep=3pt,circle,yshift=8mm,xshift=0mm] (c321) [below= of c32] {};
	
	\draw[-{Stealth[length=6pt]}] (root) -> (c1);
	\draw[-{Stealth[length=6pt]}] (root) -> (c2);
	
	\draw[-{Stealth[length=6pt]}] (c1) -> (c11);
	\draw[-{Stealth[length=6pt]}] (root) to[bend left=8] (c31);
	\draw[-{Stealth[length=6pt]}] (root) to[bend left=15] (c32);
	
	\draw[-{Stealth[length=6pt]}] (c32) -> (c321);
	
	\node[xshift=5mm] [left= of root] {(2)};
	\draw (8.25,10.2) -> (8.25,6.9);
\end{tikzpicture}
\begin{tikzpicture}[scale=0.8,every node/.style={scale=0.7}]
	\node[draw,fill=red,inner sep=3pt,circle] (root) at (10,10) {};
	
	\node[draw,fill=blue,inner sep=3pt,circle,yshift=5mm,xshift=-25mm] (c1) [below= of root] {};
	\node[draw,fill=blue,inner sep=3pt,circle,yshift=5mm,xshift=-15mm] (c4) [below= of root] {};
	\node[draw,fill=green,inner sep=3pt,circle,yshift=5mm,xshift=-5mm] (c2) [below= of root] {};
	
	\node[draw,fill=green!40!black!50!,inner sep=3pt,circle,yshift=5mm,xshift=0mm] (c11) [below= of c1] {};
	
	\node[draw,fill=green!40!black!50!,inner sep=3pt,circle,yshift=5mm,xshift=-3mm] (c41) [below= of c4] {};
	\node[draw,fill=green!40!black!50!,inner sep=3pt,circle,yshift=5mm,xshift=3mm] (c42) [below= of c4] {};
	
	\node[draw,fill=cyan,inner sep=3pt,circle,yshift=5mm,xshift=-8mm] (c30) [below= of c3] {};
	\node[draw,fill=cyan,inner sep=3pt,circle,yshift=5mm,xshift=0mm] (c31) [below= of c3] {};
	\node[draw,fill=yellow,inner sep=3pt,circle,yshift=5mm,xshift=8mm] (c32) [below= of c3] {};
	
	\node[draw,fill=cyan,inner sep=3pt,circle,yshift=5mm,xshift=-8mm] (c51) [below= of c5] {};
	\node[draw,fill=yellow,inner sep=3pt,circle,yshift=5mm,xshift=0mm] (c52) [below= of c5] {};
	
	\node[draw,fill=gray,inner sep=3pt,circle,yshift=8mm,xshift=-5mm] (c321) [below= of c32] {};
	\node[draw,fill=gray,inner sep=3pt,circle,yshift=8mm,xshift=5mm] (c322) [below= of c32] {};
	
	\node[draw,fill=gray,inner sep=3pt,circle,yshift=8mm,xshift=0mm] (c521) [below= of c52] {};
	
	\draw[-{Stealth[length=6pt]}] (root) -> node[above] {6} (c1);
	\draw[-{Stealth[length=6pt]}] (root) -> node[left] {1} (c4);
	\draw[-{Stealth[length=6pt]}] (root) -> node[left] {3} (c2);
	
	\draw[-{Stealth[length=6pt]}] (c1) -> node[left] {4} (c11);
	\draw[-{Stealth[length=6pt]}] (c4) -> node[left] {5} (c41);
	\draw[-{Stealth[length=6pt]}] (c4) -> node[right] {2} (c42);
	
	\draw[-{Stealth[length=6pt]}] (c32) -> node[left] {1} (c321);
	\draw[-{Stealth[length=6pt]}] (c32) -> node[left] {1} (c322);
	
	\draw[-{Stealth[length=6pt]}] (root) to[bend left=5] node[right] {4} (c30);
	\draw[-{Stealth[length=6pt]}] (root) to[bend left=10] node[right] {2} (c31);
	\draw[-{Stealth[length=6pt]}] (root) to[bend left=15] node[right] {4} (c32);
	
	\draw[-{Stealth[length=6pt]}] (root) to[bend left=20] node[right] {2} (c51);
	\draw[-{Stealth[length=6pt]}] (root) to[bend left=25] node[right] {4} (c52);
	
	\draw[-{Stealth[length=6pt]}] (c52) -> node[left] {2} (c521);
\end{tikzpicture}
\caption{A example for \Cref{rr:internal-vertex}.
	(1) An instance of \PDDp (2) The instance after an application of \Cref{rr:internal-vertex} to the marked vertices. 
	In both instances, the pattern-tree is on the left and the phylogenetic tree is on the right.
	}
\label{fig:rr-internal-vertex}
\end{figure}%
\Cref{fig:rr-internal-vertex} depicts an application of an application of \Cref{rr:internal-vertex}.
\ifWABIShort
\else
\begin{lemma}
	\label{lem:internal-vertex}
	\Cref{rr:internal-vertex} is correct and can be applied exhaustively in $\Oh(n^3m)$~time.
\end{lemma}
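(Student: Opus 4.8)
The plan is to show, in the usual way, that the instance $\Instance=(\Tree,\Food,k,D,\Tree_P,c)$ of \PDDp{} and the instance $\Instance'=(\Tree',\Food,k,D,\Tree_P',c)$ produced by \Cref{rr:internal-vertex} are equivalent, and then to bound the cost of one exhaustive application. Since the rule touches neither the food-web~$\Food$, nor~$k$, nor the taxon set~$X$, a taxon set $S$ stays viable and keeps its size; and since $\PD(S)$ equals the total edge weight of $\spannbaum{S\cup\{\rho\}}$ by~\eqref{eqn:PDdef}, the whole equivalence reduces to comparing colorfulness, color-equality, and the spanning-tree weights for the two instances. Throughout I would use that the rule first applies Reduction Rules~\ref{rr:edge-original}--\ref{rr:food-web} exhaustively, so that every edge of $\Tree$ realizes a color pair of $\Tree_P$ and every vertex of $\Tree$ of color $c_P(u')$ has, for each child $w'$ of $u'$ in $\Tree_P$, a child of color $c_P(w')$; and I would assume $\Tree_P$ is colorful, as is the case for the pattern trees enumerated in the proof of \Cref{thm:k+height}.

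The first step is to read off the local structure of a solution $S$ of $\Instance$. Writing $T_S:=\spannbaum{S\cup\{\rho\}}$, colorfulness of $T_S$ together with color-equality to the colorful tree $\Tree_P$ forces $c(\rho)=c_P(\rho_P)$, forces $T_S$ to contain a unique vertex $u_0$ of color $c_P(u')$, forces $u_0$ to be a child of $\rho$ in $T_S$ (match the color pair of the edge $\rho_P u'$ of $\Tree_P$ and use that $\rho$ and $u_0$ are the only vertices of their colors in $T_S$), and forces the children of $u_0$ in $T_S$ to carry exactly the colors $\{c_P(w') : w'\text{ a child of }u'\text{ in }\Tree_P\}$, with exactly one of them, say $w^*$, being of color $c_P(v')$. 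Each of these is a short matching of the color pair of an edge incident to $u_0$ in $T_S$ with one incident to $u'$ in $\Tree_P$, using colorfulness on both sides.

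For the forward direction, put $U:=\{u\in V(\Tree):c(u)=c_P(u')\}$; the rule deletes $U$ and reattaches every child of a $U$-vertex directly to $\rho$. Because $u_0$ is the only vertex of $U$ contained in $T_S$ (otherwise $T_S$ would contain two vertices of color $c_P(u')$) and $u_0$ is a child of $\rho$ there, the tree $\spannbaumsub{\Tree'}{S\cup\{\rho\}}$ is exactly $T_S$ with $u_0$ removed and the subtrees of $T_S$ below the children of $u_0$ reattached directly to $\rho$. On color pairs, this surgery deletes $(c(\rho),c(u_0))$ and each $(c(u_0),c(w))$ for $w$ a child of $u_0$ in $T_S$, and inserts the pairs $(c(\rho),c(w))$; by construction the passage from $\Tree_P$ to $\Tree_P'$ performs the identical substitution, so $\spannbaumsub{\Tree'}{S\cup\{\rho\}}$ is colorful and color-equal to $\Tree_P'$. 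For the diversity, the weight of this spanning tree differs from $\PD(S)$ only in that $\w(\rho u_0)$ and the weights $\w(u_0 w)$ are dropped and the new weights $\w'(\rho w)$ are added, and $\w'(\rho w)=\w(u_0 w)$ for $w\neq w^*$ while $\w'(\rho w^*)=\w(u_0 w^*)+\w(\rho u_0)$; hence the totals coincide and $\PDsub{\Tree'}(S)=\PD(S)\ge D$, that is, $S$ solves $\Instance'$. The backward direction is symmetric: given a solution $S$ of $\Instance'$, color-equality to $\Tree_P'$ (which has no vertex of color $c_P(u')$), together with the fact that after \Cref{rr:edge-original} the tree $\Tree$ has no edge from $\rho$ to a vertex whose color is that of a child of $u'$, forces the relevant children of $\rho$ in $\spannbaumsub{\Tree'}{S\cup\{\rho\}}$ to be reattached children of one common $U$-vertex; reinserting that vertex above them and undoing the weight shift recovers a colorful spanning tree in $\Tree$ that is color-equal to $\Tree_P$ and has the same diversity, that is, a solution of $\Instance$.

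For the running time, each application removes exactly one vertex, namely $u'$, from $\Tree_P$; discarding the instances with $|V(\Tree_P)|>|V(\Tree)|$ (which are trivially \no, since such a pattern tree cannot be color-equal to any subtree of~$\Tree$), we may assume $|V(\Tree_P)|=\Oh(n)$, so the rule is applied $\Oh(n)$ times, and a single application (including the prescribed exhaustive re-application of Reduction Rules~\ref{rr:edge-original}--\ref{rr:food-web}, whose bottleneck is the $\Oh(n^2 m)$ bound of \Cref{rr:food-web} using $m=\Omega(n)$, which we may assume by \Cref{obs:top-predator}, together with the $\Oh(n^2)$ rewiring at $\rho$) costs $\Oh(n^2 m)$; in total this gives $\Oh(n^3 m)$. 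I expect the main obstacle to be the backward direction and the exact bookkeeping of the surgery: checking that $\spannbaumsub{\Tree'}{S\cup\{\rho\}}$ is precisely the stated modification of $T_S$ with no spurious vertices, that the induced change of color-pair sets matches the change from $\Tree_P$ to $\Tree_P'$ exactly, and, most delicately, that every solution of $\Instance'$ lifts back to a solution of $\Instance$ despite the rule collapsing all color-$c_P(u')$ vertices of $\Tree$ into a single role.
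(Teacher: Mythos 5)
Your forward direction and your running-time analysis follow the paper's own argument: the paper likewise uses colorfulness of $c(\spannbaum{S\cup\{\rho\}})$ to single out the unique vertices of colors $c_P(u')$ and $c_P(v')$, performs exactly the same weight bookkeeping to get $\PDsub{\Tree'}(S)=\PD(S)$, and bounds the number of applications by $|\Tree_P|\in\Oh(n)$ with the per-application cost dominated by re-applying Reduction Rules~\ref{rr:edge-original}--\ref{rr:food-web}. So up to that point you are on the paper's route, if anything with more detail.

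The gap is in your backward direction. You claim that in a solution of $\Instance'$ the children of $\rho$ whose colors are child-colors of $u'$ are necessarily ``reattached children of one common $U$-vertex''. Nothing in colorfulness or color-equality forces this: the rule processes \emph{every} vertex $u$ with $c(u)=c_P(u')$, and when $\Tree$ contains two such vertices $u_1,u_2$ (the paper's own \Cref{fig:rr-internal-vertex} shows exactly this situation, with two vertices of the color of $u'$), a set $S$ may combine, say, the color-$c_P(w')$ child of $u_1$ with the color-$c_P(v')$ child of $u_2$. In $\Tree'$ such an $S$ induces a colorful spanning tree that is color-equal to $\Tree_P'$, but lifting it back puts both $u_1$ and $u_2$ into $\spannbaum{S\cup\{\rho\}}$, so colorfulness fails in $\Tree$; choosing a food-web in which only this mixed pair is viable makes $\Instance'$ a \yes-instance of \PDDp while $\Instance$ is a \no-instance. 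Hence the step you yourself flag as the most delicate one does not go through as asserted. (To be fair, the paper dispatches this direction with ``shown analogously''; what actually suffices for the outer algorithm in \Cref{thm:k+height} is the weaker statement that any solution of the reduced instance is still viable, has the same size, and has at least the same diversity with respect to $\Tree$, and is therefore a solution of the underlying \PDD instance --- if you argue a backward direction, it should be at that level rather than via the single-$U$-vertex structural claim.)
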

\fi
\newcommand{\proofRRInternalVertex}{
\ifWABIShort
\begin{proof}[Correctness of~\Cref{rr:internal-vertex}]
\else
\begin{proof}
\fi
	\proofpara{Correctness}
	Assume that \Instance is a \yes-instance of \PDDp with solution $S$.
	Because $\spannbaum{ S\cup\{\rho\} }$ and $\Tree_P$ are color-equal also $\spannbaumsub{\Tree'}{ S\cup\{\rho\} }$ and $\Tree_P'$ are color-equal.
	Let~$u^*$ and~$w_1$ be the unique vertices in $\spannbaum{ S\cup\{\rho\} }$ with $c(u^*) = c_P(u')$ and $c(w_1) = c_P(v')$.
	Let~$w_2,\dots,w_\ell$ be the other children of $u^*$.
	As $\PDsub{\Tree'}(S)$ is the sum of the weights of the edges of $\spannbaumsub{\Tree'}{ S\cup\{\rho\} }$ we conclude
	$\PDsub{\Tree'}(S) = \PD(S) - (\w(\rho u^*) + \sum_{i=1}^\ell \w(u^* w_i)) + \sum_{i=1}^\ell \w'(\rho w_i)$.
	Since $\w'(\rho w_1) = \w(\rho u^*) + \w(u^* w_1)$ and $\w'(\rho w_i) = \w(u^* w_i)$ for $i\in [\ell] \setminus \{1\}$,
	we conclude that~$\PDsub{\Tree'}(S) = \PD(S) \ge D$.
	Therefore, $S$ is a solution for~$\Instance'$.
	
	The other direction is shown analogously.
	
	\proofpara{Running Time}
	For a given grand-child~$v'$ of~$\rho_P$, one needs to perform~$\Oh(n)$ color-checks and add~$\Oh(n)$ edges.
	As the reduction rule can be applied at most~$|\Tree_P| \in \Oh(K) \in \Oh(n)$~times,
	an exhaustive application takes~$\Oh(n^2)$~time. (Plus the other reduction rules.)
\end{proof}
}
\ifWABIShort
\else
\proofRRInternalVertex
\fi

\ifWABIShort
\begin{lemma}[$\star$]
	\label{lem:reduction-rules}
	Reduction rules \ref{rr:edge-original}, \ref{rr:edge-pattern}, \ref{rr:food-web}, and \ref{rr:internal-vertex} are correct and can be applied exhaustively in~$\Oh(n^2\cdot (n+m))$~time.
\end{lemma}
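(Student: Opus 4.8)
The plan is to establish the lemma by composing the four individual analyses of Reduction Rules~\ref{rr:edge-original},~\ref{rr:edge-pattern},~\ref{rr:food-web}, and~\ref{rr:internal-vertex} and then sharpening the running-time accounting so that the food-web rule is paid for only once during the whole execution. For correctness it suffices to note that each of the four rules replaces an instance~$\Instance$ of~\PDDp by an equivalent instance; this is exactly the content of the correctness part of each rule's analysis, and the rules do not interfere with one another, since only \Cref{rr:internal-vertex} changes edge weights (and this change is compensated for inside its own correctness argument), whereas the other three inspect only vertex colours, the shapes of~$\Tree$ and~$\Tree_P$, and the food-web. Because ``has the same answer'' is transitive, any sequence of applications preserves the answer, so the instance obtained once no rule is applicable any more is equivalent to the input. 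Termination follows from the potential~$\Phi := |V(\Tree)| + |V(\Tree_P)|$: an application of one of the first three rules removes at least one vertex from~$\Tree$, and the concluding step of \Cref{rr:internal-vertex} removes a vertex from~$\Tree_P$ (and one from~$\Tree$); hence~$\Phi$ strictly decreases with every application, so the process performs only~$\Oh(|V(\Tree)| + |V(\Tree_P)|) = \Oh(n)$ applications (pattern-trees with more vertices than~$\Tree$ need never be considered).

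Naively summing the individual running times only gives~$\Oh(n^3 m)$, because \Cref{rr:internal-vertex} re-invokes the other three rules and an exhaustive run of \Cref{rr:food-web} costs~$\Oh(n^2 m)$. To obtain the claimed bound~$\Oh(n^2(n+m))$ I would maintain the reduced instance incrementally rather than recomputing it from scratch, exploiting two monotonicity properties. First, applicability of \Cref{rr:edge-original} is monotone under vertex deletions in~$\Tree$: an edge of~$\Tree$ that has a colour-partner in~$\Tree_P$ keeps one, and deleting vertices only destroys edges of~$\Tree$; moreover, when \Cref{rr:internal-vertex} lifts the children of a matched vertex to~$\rho$, each new edge still has a colour-partner among the edges simultaneously attached to~$\rho_P$, precisely because \Cref{rr:edge-original} and \Cref{rr:edge-pattern} were exhausted before that step. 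Second, the set~$R$ deleted by \Cref{rr:food-web} can only grow as further vertices disappear, and \Cref{rr:internal-vertex} never adds a vertex or an arc to~$\Food$; thus the reachability computation underlying \Cref{rr:food-web} is carried out only once, in~$\Oh(n^2 m)$ time. Together with the~$\Oh(n)$ bookkeeping performed per step of \Cref{rr:internal-vertex}, and the~$\Oh(n^2)$ colour re-checks triggered over all steps, this leaves a total of~$\Oh(n^3 + n^2 m) = \Oh(n^2(n+m))$.

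The main obstacle is exactly this amortised running-time argument: one has to rule out that interleaving \Cref{rr:internal-vertex} with the first three rules sets off a chain reaction that re-triggers the expensive food-web computation many times, and one has to handle the fact that a deletion by \Cref{rr:food-web} can itself make a vertex newly violate \Cref{rr:edge-pattern}, which cascades into further deletions. Concretely, the work lies in making the two monotonicity statements above precise — that a step of \Cref{rr:internal-vertex} never turns a colour-matched edge of~$\Tree$ into an unmatched one (using that \Cref{rr:edge-original} and \Cref{rr:edge-pattern} were exhausted beforehand, plus a short extra argument in the case that~$\Tree_P$ is not colourful), and that the food-web only shrinks throughout the execution — so that the combined reduction can be implemented as a single worklist-driven cascade of deletions in which each vertex of~$\Tree$ and~$\Tree_P$ is removed at most once; after that, the running-time bound is a routine summation.
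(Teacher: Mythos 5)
Your correctness argument---compose the four per-rule equivalence proofs and use transitivity, noting that only \Cref{rr:internal-vertex} changes weights---is exactly the paper's route: its proof of this lemma is literally the concatenation of the four individual analyses. Where you diverge is the running time, and there your premise does not match the paper: the paper performs none of the incremental/amortized bookkeeping you build. It bounds each rule's exhaustive application separately ($\Oh(n^3)$ for \ref{rr:edge-original} and for \ref{rr:edge-pattern}, $\Oh(n^2m)$ for \ref{rr:food-web}, $\Oh(n^2)$ for the lifting steps of \ref{rr:internal-vertex}) and simply adds these bounds, which already gives $\Oh(n^3+n^2m)=\Oh(n^2(n+m))$.

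Your worry about interleaving (re-paying the applicability checks after every application of \Cref{rr:internal-vertex}) is a fair criticism of that accounting, but your remedy is both heavier than necessary and, as you yourself flag, incomplete: the two monotonicity statements that carry your amortization---that \Cref{rr:internal-vertex} never turns a colour-matched edge of $\Tree$ into an unmatched one, and that the reachability computation of \Cref{rr:food-web} need only ever be performed once---are left unproven, and the first one really does require the injectivity of $c_P$ on $\Tree_P$ (guaranteed in the intended use inside \Cref{thm:k+height}, but not by the problem definition of \PDDp). Note also that the claimed bound can be reached without any monotonicity: by your own potential $\Phi=|V(\Tree)|+|V(\Tree_P)|$ there are only $\Oh(n)$ rule applications in total; after each one, re-checking \ref{rr:edge-original} and \ref{rr:edge-pattern} costs $\Oh(n^2)$ (compare all pairs of edges of $\Tree$ and $\Tree_P$), re-checking \ref{rr:food-web} is a single multi-source reachability computation in $\Food$, i.e.\ $\Oh(n+m)$, and one lifting step of \ref{rr:internal-vertex} costs $\Oh(n)$; summing over all $\Oh(n)$ steps yields $\Oh(n^3+n(n+m))\subseteq\Oh(n^2(n+m))$. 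So the statement stands, but as written the running-time part of your proposal is a plan whose load-bearing claims are deferred, while the paper's (and the above) cruder per-rule accounting makes them unnecessary.
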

\else
With these reduction rules, we can reduce the phylogenetic tree of a given instance of \PDDp to only be a star and then solve \PDDp by applying~\Cref{thm:k-stars}.
\fi

\begin{lemma}
	\label{lem:k+height}
	\PDDp can be solved in~$\Oh(2^{3.03 k + o(k)} \cdot n^2 \cdot (n+m))$~time.
\end{lemma}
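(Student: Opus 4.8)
The plan is to apply the four reduction rules exhaustively until the phylogenetic tree has become a star, and then to run the dynamic program of \Cref{lem:k-stars} (which underlies \Cref{thm:k-stars}) on the resulting colored star instance. Concretely, I would first dispose of two degenerate cases and then apply Reduction Rules~\ref{rr:edge-original},~\ref{rr:edge-pattern},~\ref{rr:food-web}, and~\ref{rr:internal-vertex} exhaustively; by \Cref{lem:reduction-rules} this is correct and takes $\Oh(n^2(n+m))$ time (each application deletes at least one vertex of $\Tree$ or of $\Tree_P$, so termination is clear). The two preliminary checks are: if $c(\rho)\neq c_P(\rho_P)$ then, unless $\Tree_P$ consists of a single vertex and $D\le 0$, every color-equal spanning tree would need two vertices of some color, so the instance is \no; and if the color $c_P(\rho_P)$ occurs on a non-root vertex of $\Tree_P$ then, by colorfulness, again no solution exists. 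Hence afterwards I may assume $c(\rho)=c_P(\rho_P)\notin c_P(V_P\setminus\{\rho_P\})$.

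The heart of the argument is the structural effect of exhaustive application. Since \Cref{rr:internal-vertex} keeps firing as long as $\rho_P$ has a grandchild, once the rules stabilize $\Tree_P$ has height $1$, that is, it is a star; write $C_P:=c_P(V_P\setminus\{\rho_P\})$ for its set of leaf colors (with $c_P(\rho_P)\notin C_P$). I would then show that $\Tree$ is a star as well: by \Cref{rr:edge-original} every edge of $\Tree$ is color-matched by an edge of $\Tree_P$, hence its parent endpoint has color $c_P(\rho_P)$ and its child endpoint a color in $C_P$; an internal non-root vertex of $\Tree$ would have to meet both conditions, forcing $c_P(\rho_P)\in C_P$, a contradiction. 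Furthermore \Cref{rr:edge-original} deletes every leaf of $\Tree$ whose color is not in $C_P$, whereas \Cref{rr:edge-pattern} forces $\rho$ to have a child of each color of $C_P$ (otherwise it deletes every taxon and the instance is \no); so the leaf-color set of $\Tree$ equals $C_P$. For such a star $\Tree$, a set $S\subseteq X$ satisfies the \PDDp requirements exactly if $S$ is viable, $\PD(S)\ge D$, $c|_S$ is injective, and $c(S)=C_P$ (the requirement that $c|_{S\cup\{\rho\}}$ be colorful is then automatic since $c(\rho)\notin C_P$); the last two conditions force $|S|=|C_P|$, so if $|C_P|>k$ the instance is \no.

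Finally, when $|C_P|\le k$, I would relabel $C_P$ bijectively onto $[q]$ with $q:=|C_P|$, obtaining an instance of \cksPDD, and invoke the dynamic program of \Cref{lem:k-stars}, which computes for every color set $C$ the maximum $\PD$ of a viable colorful set $S$ with $c(S)=C$; reading off the entry for $C=[q]$ and answering \yes iff it is at least $D$ solves the reduced instance. This costs $\Oh(3^{q}\cdot nm)\subseteq\Oh(2^{3.03k+o(k)}\cdot nm)$, which together with the $\Oh(n^2(n+m))$ for the reduction rules stays within $\Oh(2^{3.03k+o(k)}\cdot n^2(n+m))$. The main obstacle I anticipate is this structural step: verifying that the joint exhaustive application of all four rules indeed flattens both $\Tree_P$ and $\Tree$ to stars and fixes the leaf-color set of $\Tree$ to $C_P$, handling the degenerate cases ($\Tree$ emptied, $|C_P|>k$, $c_P(\rho_P)$ reused) along the way, since the reduction to \cksPDD rests entirely on it; the dynamic-programming step is then routine given \Cref{lem:k-stars}.
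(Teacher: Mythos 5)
Your proposal is correct and follows the same skeleton as the paper's proof: reject degenerate colorings, apply Reduction Rules~\ref{rr:edge-original}--\ref{rr:internal-vertex} exhaustively (\Cref{lem:reduction-rules}) until both $\Tree_P$ and $\Tree$ are stars, then finish with the star machinery of \Cref{sec:k-stars}. The two proofs part ways in the final step. The paper simply forgets the coloring and solves the reduced instance $(\Tree,\Food,k,D)$ as plain \sPDD via \Cref{thm:k-stars}, i.e.\ it re-runs color coding with a fresh hash family; this only solves a relaxation of \PDDp (colorfulness and color-equality are no longer enforced), which is harmless for \Cref{thm:k+height} but costs $2^{3.03k+o(k)}\cdot nm\log n$ and is randomized-then-derandomized a second time. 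You instead keep the given coloring, observe that on the reduced star instance a solution is exactly a viable set $S$ with $\PD(S)\ge D$, $c$ injective on $S$ and $c(S)=C_P$, and read this off the dynamic-programming table of \Cref{lem:k-stars} for the $q=|C_P|\le k$ pattern-leaf colors; this is faithful to the \PDDp specification, deterministic at this stage, and in fact slightly cheaper ($\Oh(3^k\cdot nm)$ plus the rule applications), comfortably within the claimed bound. You also make explicit two things the paper only asserts: that exhaustive application flattens $\Tree$ (not just $\Tree_P$) to a star with leaf-color set exactly $C_P$ — your argument via \Cref{rr:edge-original}/\Cref{rr:edge-pattern} and the check $c_P(\rho_P)\notin C_P$ is sound — and how the degenerate cases (root-color mismatch, emptied tree, $|C_P|>k$) are disposed of; your extra preliminary check replaces the paper's check that every pattern color occurs in $c(V(\Tree))$ and is precisely what makes the star argument go through. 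The only glossed detail is the single-source augmentation of \Cref{obs:top-predator} inside \Cref{lem:k-stars} (the auxiliary taxon $\star$ carries its own color and adjusted threshold), but the paper elides the same point.
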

\begin{proof}
	\proofpara{Algorithm}
	Let an instance~$\Instance = (\Tree, \Food, k, D, \Tree_P = (V_P,E_P,c_P), c)$ of \PDDp be given.
	If there is a vertex $v\in V_P$ and $c_P(v) \not\in c(V(\Tree))$ then return \no.
	If $c(\rho) \ne c_P(\rho_P)$ where $\rho$ and $\rho_P$ are the roots of \Tree and $\Tree_P$ respectively, return \no.
	
	Apply~\Cref{rr:internal-vertex} exhaustively.
	Then, both $\Tree_P$ and \Tree~are stars.
	Return \yes if and only if
	$(\Tree,\Food,k,D)$ is a \yes-instance of \sPDD.

	\proofpara{Correctness}
	If $\Tree_P$ contains a vertex~$v$ with $c_P(v) \not\in c(V(\Tree))$,
	or if $c(\rho) \ne c_P(\rho_P)$,
	then \Instance is a \no-instance.
	\ifWABIShort
	Then, the correctness follows by \Cref{lem:reduction-rules} and \Cref{lem:k-stars}.
	\else
	Then, the correctness follows by \Cref{lem:internal-vertex} and \Cref{lem:k-stars}.
	\fi

	\proofpara{Running time}
	\Cref{rr:internal-vertex} can be applied exhaustively in $\Oh(n^2 \cdot (n+m))$~time.
	With the application of \Cref{lem:k-stars}, the overall running time is $\Oh(3^{k\cdot \height_{\Tree}} \cdot n \cdot m)$.
\end{proof}

To prove \Cref{thm:k+height} we reduce from \PDD to \PDDp and apply \Cref{lem:k+height}.
For this, we use the fact that there are~$n^{n-2}$ labeled directed trees with~$n$ vertices~\cite{Shor95} which can be enumerated in $\Oh(n^{n-2})$~time~\cite{beyer}.
To solve instance \Instance of \PDD, we will check each of these trees as a pattern tree for a given coloring of the phylogenetic tree.
These colorings will be defined with a perfect hash family as defined in \Cref{def:perfectHashFamily}.
Recall that $K=k\cdot \height_{\Tree}$.
\begin{proof}[Proof of \Cref{thm:k+height}]
	\proofpara{Algorithm}
	Let $\Instance = (\Tree, \Food, k, D)$ be an instance of \PDD.
	Let the vertices of~\Tree be~$v_1, \dots, v_{|V(\Tree)|}$.
	Iterate over $i \in [\min\{K,|V(\Tree)|\}]$.
	Compute a~$(|V(\Tree)|,i)$-perfect hash family~$\mathcal{H}_i$.
	Compute the set~$\mathcal{P}_i$ of labeled directed trees with~$i$ vertices.
	
	For every~$\Tree_P = (V_P, E_P, c_P) \in \mathcal{P}_i$ precede as follows.
	Assume that the labels of $\Tree_P$ are in $[i]$.
	For every $f \in \mathcal{H}_i$,
	let $c_f$ be a coloring such that $c_f(v_j) = f(j)$ for each $v_j \in V(\Tree)$.
	
	For every $f \in \mathcal{H}_i$,
	solve instance $\Instance_{\Tree_P,f} := (\Tree, \Food, k, D, \Tree_P, c_f)$ of \PDDp using~\Cref{lem:k+height}.
	Return \yes if and only if~$\Instance_{\Tree_P,f}$ is a \yes-instance for some~$f \in \mathcal{H}_i$ and some~$\Tree_P \in \mathcal{P}_i$.

	\proofpara{Correctness}
	Any solution of an instance $\Instance_{\Tree_P,f}$ of \PDDp clearly is a solution for $\Instance$.

	Conversely, we show that if $S$ is a solution for \Instance, then there are~$\Tree_P$ and~$f$ such that~$\Instance_{\Tree_P,f}$ is a \yes-instance of \PDDp.
	So let~$S$ be a viable set of taxa with $|S| \le k$ and $\PD(S) \ge D$.
	Let $V^* \subseteq V(\Tree)$ be the set of vertices~$v$ that have offspring in $S$.
	It follows~$|V^*| \le \height_{\Tree} \cdot |S| \le K$.
	Then, there is a hash function~$f \in \mathcal{H}_{V^*}$ mapping $V^*$ bijectively to $[|V^*|]$.
	Consequently, $\mathcal{P}_{|V^*|}$ contains a tree~$\Tree_P$ which is isomorphic to~$\Tree[V^*]$ with labels~$c_f$.
	Hence, $\Instance_{\Tree_P,f}$ is a \yes-instance of \PDDp.

	\proofpara{Running Time}
	For a fixed $i\in [K]$, the set $\mathcal{H}_i$ contains $e^i i^{\Oh(\log i)} \cdot \log n$ hash functions and the set~$\mathcal{P}_i$ contains $\Oh(i^{i-2})$ labeled trees.
	Both sets can be computed in $\Oh(i^{i-2} \cdot n\log n)$~time.
	
	Each instance~$\Instance_{\Tree_P,f}$ of \PDDp is constructed in $\Oh(n)$~time and can be solved in~$\Oh(2^{3.03 k + o(k)} \cdot n^3 \cdot m)$~time.
	Thus, the overall running time is~$\Oh(K \cdot e^{K} K^{K - 2 + \Oh(\log K)} \cdot 2^{3.03 k + o(k)} \cdot n^3 m \log n)$,
	which summarizes to~$\Oh(K^K \cdot 2^{3.03 k + 1.44 K + o(K)} \cdot n^3 m \log n)$.
\end{proof}

\ifJournal
\section{The desired diversity $D$}
\label{sec:D}
In this section, we consider parameterization with the required threshold of diversity~$D$.
\else
\section{Parameterization by desired diversity and accepted diversity loss}
\label{sec:D}
In this section, we first consider parameterization with the diversity threshold~$D$. For this parameter, we present an FPT algorithm for \PDD.
Afterward, we show that \sPDD is intractable with respect to~\Dbar, the acceptable loss of phylogenetic diversity.
\fi
As the edge-weights are integers, we conclude that we can return \yes if $k\ge D$ or if the height of the phylogenetic tree~\Tree is at least $D$, after \Cref{rr:each-taxon-savable} has been applied exhaustively. Otherwise,~$k+\height_{\Tree}\in \Oh(D)$ and thus the FPT algorithm for $k+\height_{\Tree}$ (\Cref{thm:k+height}) directly gives an \FPT algorithm for \PDD in that case.

\ifJournal
In \Cref{sec:FPT-D}, we present a faster \FPT-algorithm for the parameter~$D$.
Afterward we show that it is unlikely that a polynomial kernel for~$D$ exists, even in very restricted cases.

\subsection{FPT-algorithm for $D$}
\label{sec:FPT-D}
By \Cref{thm:k+height}, \PDD is \FPT with respect to the desired diversity~$D$.
\fi
Here, we present another algorithm with a faster running time.
To obtain this algorithm, we subdivide edges of the phylogenetic tree according to their edge weights. We then use color coding on the vertices of the subdivided tree. Let us remark that this technique is closely related to an algorithm of Jones and Schestag~\cite{timePD} for another hard problem related to diversity maximization. % and we generalize the technique presented in \Cref{sec:k-stars}.
\ifWABIShort
\begin{theorem}[$\star$]
\else
\begin{theorem}
\fi
	\label{thm:D}
	\PDD can be solved in $\Oh(2^{3.03(D + k) + o(D)} \cdot nm + n^2)$~time.
\end{theorem}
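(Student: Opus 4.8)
The plan is to use color coding on a subdivided version of the phylogenetic tree, mirroring the approach of Jones and Schestag~\cite{timePD}. First I would apply \Cref{rr:each-taxon-savable} and \Cref{rr:maxw<D} exhaustively; afterward every taxon is savable with a set of size at most~$k$, and we may assume $\max_\w < D$, so in particular $k < D$ (otherwise \Cref{rr:maxw<D} or the discussion preceding \Cref{thm:D} already returns \yes). By \Cref{obs:top-predator} we may also assume \Food has a single source~$\star$. Now construct a subdivided tree~$\Tree'$: replace each edge $e=uv$ of \Tree by a path of $\w(e)$ unit-weight edges, introducing $\w(e)-1$ new internal vertices. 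Since any solution~$S$ has $\PD(S) \le$ some small bound is not quite the point; rather, observe that a solution~$S$ with $\PD(S) \ge D$ touches edges of \Tree of total weight exactly (after rounding down to $D$) at most... the key point is that in~$\Tree'$ the spanning tree of $S\cup\{\rho\}$ truncated to the top $D$ unit-edges has at most $D$ edges, hence at most $D+1$ vertices. Combined with the at most~$k$ leaves of~$S$ themselves, the relevant vertex set in~$\Tree'$ has size $\Oh(D+k)$.

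Next I would set up the color-coding scheme. Let $N$ be the number of vertices of~$\Tree'$ (which is $\Oh(n \cdot \max_\w) = \Oh(n D)$, polynomial in the input). Build an $(N, D+k+1)$-perfect hash family $\mathcal{H}$ of size $e^{D+k}(D+k)^{\Oh(\log(D+k))}\cdot \log N$. For each coloring $c_f$ induced by $f\in\mathcal{H}$, I would run a dynamic program over the structure of the subdivided tree together with the food-web, analogous to \Cref{lem:k-stars}: the DP walks down the food-web as before, but now the "profit" contributed by a chosen taxon~$x$ is accounted edge-by-edge along the subdivided path from~$x$ up to~$\rho$ in~$\Tree'$, with each newly-covered unit-edge's colored head vertex added to the colorful set being built. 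Concretely, one processes the food-web from the source outward (as in \Cref{eqn:recurrence-k}) and, for the tree part, keeps track of which colors (out of the $D+k$ available) are already used; when merging the contributions of different predators one splits the colour set as in \Recc{eqn:recurrence-k}. The target is a colorful set whose used colors simultaneously (i) include the color of at least $D$ distinct subdivision-vertices lying on root-paths of chosen taxa, witnessing $\PD \ge D$, and (ii) correspond to at most $k$ chosen taxa. Accept if some $f$ yields such a set.

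For the running time: there are $e^{D+k}(D+k)^{\Oh(\log(D+k))}\log N$ hash functions; for each, the DP over colour-subsets of a $(D+k)$-element palette costs $3^{D+k}$ per food-web edge times a polynomial factor for walking the subdivided tree (the subdivided tree has $\Oh(nD)$ vertices, giving the polynomial overhead absorbed into the stated $nm + n^2$), so the product is $(3e)^{D+k} \cdot 2^{o(D+k)}\cdot \poly = \Oh(2^{3.03(D+k)+o(D)}\cdot nm + n^2)$ using $3e < 2^{3.03}$; the additive $n^2$ comes from \Cref{obs:top-predator} and the reduction rules. Correctness of the color-coding step is the standard two-way argument: a genuine solution~$S$ of size~$k$ touches a set~$W$ of at most $D+k$ relevant vertices in~$\Tree'$ (at most $k$ taxa plus the top $D$ subdivision-vertices on their root-paths), so some $f\in\mathcal{H}$ is injective on~$W$ and the DP finds it; conversely any colorful set the DP reports back-translates to a viable size-$\le k$ set with $\PD\ge D$.

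The main obstacle I anticipate is making precise the claim that a solution only "needs" $\Oh(D+k)$ colored vertices of~$\Tree'$, and wiring the DP so that it counts phylogenetic diversity exactly~$D$ (not more) using only that many colors: a solution could in principle cover far more than $D$ units of edge weight, so one must argue that it suffices to \emph{certify} $D$ units and that the set of certified subdivision-vertices, together with the $\le k$ chosen taxa, can always be taken to form a connected subtree of~$\Tree'$ of size $\Oh(D+k)$ onto which a hash function is injective. This requires choosing, for the witness, a minimal subtree of $\Tree'$ that spans the chosen taxa and has weight at least~$D$ when counting only edges actually covered; pruning arguments like those behind \Cref{rr:maxw<D} give the size bound. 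Once that combinatorial bound is nailed down, the DP itself is a routine adaptation of \Cref{lem:k-stars} with colour-subset merging as in \Recc{eqn:recurrence-k}, and the appendix proof can cite \cite{timePD} for the subdivision-and-colour technique.
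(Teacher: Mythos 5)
Your proposal is correct and matches the paper's proof in essence: the paper realizes your subdivision implicitly by assigning each edge a set of $\w(e)$ colors from $[D]$ and each taxon a color from $[k]$ (two perfect hash families rather than your single combined one), and then runs exactly the dynamic program you describe, analogous to \Cref{lem:k-stars}, where the tracked color set is only required to be a \emph{subset} of the colors covered by the chosen taxa. The obstacle you flag at the end dissolves for precisely that reason: injectivity is needed only on an arbitrary witness of at most $D+k$ elements (any $D$ covered weight-units plus the at most $k$ chosen taxa), not on a connected subtree, so no pruning argument is required.
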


\newcommand{\proofThmD}{
We define \cDPDDlong (\cDPDD), a colored version of \PDD, as follows.
In addition to the usual input of \PDD, we receive two colorings~$c$ and~$\hat c$ which assign each edge $e\in E(\Tree)$ a subset $c(e)$ of $[D]$, called \emph{a set of colors}, which is of size $\w(e)$ and each taxon $x\in X$ a \emph{color} $\hat c(x) \in [k]$.
We extend the function~$c$ to also assign color sets to taxa $x\in X$ by defining $c(x) := \bigcup_{e\in E'} c(e)$ where $E'$ is the set of edges with $x\in \off(e)$.
In~\cDPDD, we ask whether there is a viable set~$S\subseteq X$ of taxa exists such that~$c(S) = [D]$, and~$\hat c$ is colorful.

\ifJournal
In the following we show how to solve \cDPDD and then we show how to apply standard color-coding techniques to reduce from \PDD to \cDPDD.
\fi

Finding a solution for \cDPDD can be done with techniques similar to~\Cref{lem:k-stars}.
\begin{lemma}
	\label{lem:colored-D}
	\cDPDD can be solved in $\Oh(3^{D+k} \cdot n\cdot m)$~time.
\end{lemma}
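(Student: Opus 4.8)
The plan is to adapt the dynamic programming scheme of \Cref{lem:k-stars} to the colored setting of \cDPDD. By \Cref{obs:top-predator}, assume \Food has a single source~$\star$. I will design a table~$\DP$ indexed by a taxon~$x$, a color set~$C \subseteq [D]$, and a color set~$\hat C \subseteq [k]$ such that $\DP[x, C, \hat C]$ records whether there is an~$X_{\ge x}$-viable set $S \subseteq X_{\ge x}$ with $c(S) = C$, with $\hat c(S) = \hat C$, and with $\hat c$ colorful on~$S$. As in the proof of \Cref{lem:k-stars}, to process a taxon~$x$ with predators $y_1, \dots, y_q$ I will use an auxiliary table $\DP'[x, p, C, \hat C]$ accumulating contributions from the subtrees rooted at $y_1, \dots, y_p$ together with~$x$ itself. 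The base case sets the table for a source-only taxon~$x$ (in the reachability sense, i.e.\ $x$ with no predators among~$X_{\ge x}$) to the entry corresponding to choosing just~$\{x\}$: here the relevant color set is $c(x)$ and the relevant $\hat c$-color is $\{\hat c(x)\}$, and this is the unique viable singleton in its component.

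The recurrence mirrors \Recc{eqn:recurrence-k}, but now merging over \emph{both} color palettes simultaneously:
\begin{equation*}
	\DP'[x, p+1, \{\hat c(x)\}\cup \hat C, C]
	:= \bigvee_{\hat C' \subseteq \hat C,\ C' \subseteq C}
	\DP'[x, p, \{\hat c(x)\}\cup (\hat C \setminus \hat C'), C \setminus C']
	\wedge \DP[y_{p+1}, \hat C', C'],
\end{equation*}
interpreting the entries as booleans (one can equally track attainable color-set pairs). The key structural fact, exactly as in \Cref{lem:k-stars}, is that because $x$ and $y_{p+1}$ are the unique sources of $\Food[X_{\ge x}]$ and $\Food[X_{\ge y_{p+1}}]$ respectively, whenever the $y_{p+1}$-part of the chosen set is non-empty it must contain~$y_{p+1}$, and since $y_{p+1} \in \predators{x}$ and $x$ is in the left part, the union is again viable with sole source~$x$; this handles correctness of viability. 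Colorfulness and the color-set identities follow from disjointness of the split: $\hat C'$ and $\hat C\setminus \hat C'$ are disjoint (and similarly for the $C$-palette), and taking unions gives back $\hat C$ and $C$. The initialization $\DP'[x,1,\cdot,\cdot] := \DP[y_1,\cdot,\cdot]$ shifted by~$x$'s own colors, and the finalization $\DP[x,\cdot,\cdot] := \DP'[x,q,\cdot,\cdot]$, are as before. We answer \yes iff $\DP[\star, [D], \hat C]$ is \texttt{true} for some $\hat C \subseteq [k]$; note that once $c(S) = [D]$ we automatically have $|S| \le D$, but the colorfulness of~$\hat c$ additionally enforces $|S| \le k$, which is why both palettes are needed.

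For the running time: computing the base cases and the color-set operations for a single taxon costs, per recurrence evaluation, a sum over pairs $(\hat C', C')$ with $\hat C' \subseteq \hat C \subseteq [k]$ and $C' \subseteq C \subseteq [D]$. Distributing each color of $[D]$ into one of the three roles ``in $C'$'', ``in $C\setminus C'$'', ``outside $C$'' gives a factor~$3^D$, and likewise $3^k$ for $[k]$, so the total work over all taxa and all predator edges is $\Oh(3^{D+k}\cdot n\cdot m)$, matching the claimed bound. The main obstacle I anticipate is purely bookkeeping rather than conceptual: one must be careful that the extension of~$c$ to taxa via $c(x) = \bigcup_{e : x \in \off(e)} c(e)$ interacts correctly with the merge step --- in particular that when two sets $S_x, S_y$ from the two parts are combined, $c(S_x \cup S_y) = c(S_x) \cup c(S_y)$ holds (which it does, since $c$ of a set is by definition the union over its members) and that no double-counting of colors occurs, which is guaranteed because we demand the color-set split be into disjoint pieces. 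Everything else is a faithful transcription of the argument in \Cref{lem:k-stars} with the single palette $[k]$ replaced by the pair of palettes $([D], [k])$.
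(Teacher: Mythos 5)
Your overall architecture (single source via \Cref{obs:top-predator}, a table over taxa indexed by two color palettes, an auxiliary table accumulating predators one by one, and the $3^{D+k}$ accounting) matches the paper, but the semantics you chose for the $[D]$-palette breaks the recurrence. You require $c(S)=C$ \emph{exactly} and merge by splitting $C$ into \emph{disjoint} pieces, arguing ``no double-counting of colors occurs.'' That argument is valid for the taxon-palette $\hat c$ (the two parts are disjoint sets of taxa), but not for the edge-palette $c$: the colors $c(S_x)$ and $c(S_y)$ of the two merged parts are unions of colors of tree edges having offspring in $S_x$, resp.\ $S_y$, and taxa lying in different branches of the food-web can share ancestral edges of \Tree. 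Hence $c(S_x)\cap c(S_y)\neq\emptyset$ is the normal situation, and with exact equality plus disjoint splits your DP is incomplete. Concretely: let \Tree have root $\rho$ with child $\star$ and an internal child $v$ whose children are $x_1,x_2$, all edges of weight $1$, colored $c(\rho\star)=\{4\}$, $c(\rho v)=\{1\}$, $c(vx_1)=\{2\}$, $c(vx_2)=\{3\}$, and let \Food have edges $\star x_1,\star x_2$ with $D=4$. Then $S=\{\star,x_1,x_2\}$ is viable with $c(S)=[4]$, but at $\star$ the only exactly-achievable sets from the two predator branches are $\{1,2\}$ and $\{1,3\}$ (both containing color $1$), so no disjoint split of $[4]\setminus\{4\}$ is ever realized and your algorithm answers \no on a \yes-instance.

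The paper avoids this by defining $(C_1,C_2,X')$-feasibility with the covering condition $C_1\subseteq c(S)$ for the $[D]$-palette (while keeping $\hat c(S)=C_2$ exact and colorful for the $[k]$-palette). With ``cover at least'' semantics the disjoint split of the \emph{requirement} is sound and complete: each color demanded by the parent entry is assigned to one side, which must cover it but may also cover colors assigned to the other side, and in the converse direction one simply partitions the demanded colors according to which part happens to cover them. Everything else in your write-up (viability argument via the unique sources $x$ and $y_{p+1}$, base cases, final check at $\star$ for color set $[D]$, and the $3^{D}\cdot 3^{k}$ running-time count) is fine once the table semantics is repaired in this way; as a side remark, your claim that $c(S)=[D]$ forces $|S|\le D$ is not needed and is not justified, since distinct pendant edges may receive overlapping color sets under the hash-family coloring.
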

\begin{proof}
	\proofpara{Table definition}
	Let $\Instance = (\Tree,\Food,k,D,c,\hat c)$ be an instance of \cDPDD and we assume that~$\star \in X$ is the only source in \Food by \Cref{obs:top-predator}.
	
	Given~$x\in X$, sets of colors $C_1\subseteq [D]$, $C_2\subseteq [k]$, and a set of taxa $X'\subseteq X$:
	A set~$S \subseteq X' \subseteq X$ is $(C_1,C_2,X')$-feasible if
	\ifJournal
	\begin{enumerate}[a)]
	\else
	\begin{inparaenum}[a)]
	\fi
		\item\label{it:Da}$C_1$ is a subset of $c(S)$,
		\item\label{it:Db}$\hat c(S) = C_2$,
		\item\label{it:Dc}$\hat c(S)$ is colorful,
		\item\label{it:Dd}$S$ is $X'$-viable.
	\ifJournal
	\end{enumerate}
	\else
	\end{inparaenum}
	\fi
	
	We define a dynamic programming algorithm with tables $\DP$ and $\DP'$.
	We want that entry $\DP[x,C_1,C_2]$ for~$x\in X$ and $C_1\subseteq [D]$, $C_2\subseteq [k]$ to stores~1 if a~$(C_1,C_2,X_{\ge x})$-feasible set~$S$ exists.
	Otherwise, we want $\DP[x,C_1,C_2]$ to store~0.
	
	For any taxon $x \in X$ let $y_1,\dots,y_q$ be an arbitrary but fixed order of $\predators{x}$.
	In the auxiliary table $\DP'$ we want entry $\DP'[x,p,C_1,C_2]$ for~$p\in [q]$, and $C_1\subseteq [D]$, $C_2\subseteq [k]$ to store~1 if a~$(C_1,C_2,X')$-feasible set $S \subseteq X'$ exists, where $X' := \{x\} \cup X_{\le y_1} \cup \dots \cup X_{\le y_p}$.
	If no~$(C_1,C_2,X')$-feasible set $S \subseteq X'$ exists, we want $\DP'[x,p,C_1,C_2]$ to store~0.\todos{Potentially cut here some space for the Conference version by referring to Thm 3.1.}
	
	\proofpara{Algorithm}
	As a basic case for each~$x\in X$ and each~$p\in [|\predators{x}|]$ let~$\DP[x,\emptyset,\emptyset]$ and~$\DP'[x,p,\emptyset,\emptyset]$ store~1.
	Further, let~$\DP[x,C_1,C_2]$ and~$\DP'[x,p,C_1,C_2]$ store~0 if~$C_1 \not\subseteq c(x)$, or~$\hat c(x) \not\in C_2$, or~$|C_2| > |X_{\ge x}|$ for each~$x\in X$ and every~$C_1\subseteq [D]$, $C_2\subseteq [k]$.
	For each~$x\in X$ with~$\predators{x} = \emptyset$, we store~1 in $\DP[x,C_1,C_2]$ if $C_1=C_2=\emptyset$ or if $C_1 \subseteq c(x)$ and $C_2 = \{\hat c(x)\}$.
	Otherwise, we store~0.
	
	Fix a taxon $x\in X$.
	Assume~$\DP[y,C_1,C_2]$ is computed for each~$y\in \predators{x}$, every~$C_1\subseteq [D]$, and every $C_2\subseteq [k]$.
	For~$C_1 \subseteq [D] \setminus c(x)$ and~$C_2 \subseteq [k] \setminus \{\hat c(x)\}$, we set
	\begin{equation}
		\label{eqn:recurrence-D-pre}
		\DP'[x,1,c(x) \cup C_1,\{\hat c(x)\} \cup C_2] := \DP[y_1,C_1,C_2].
	\end{equation}
	
	Fix an integer $p\in [q]$.
	We assume $\DP'[x,p,C_1,C_2]$ is computed for every~$C_1\subseteq [D]$, and every $C_2\subseteq [k]$.
	Then for $C_1 \subseteq [D] \setminus c(x)$ and~$C_2 \subseteq [k] \setminus \{\hat c(x)\}$ we use the recurrence
	\begin{equation}
		\label{eqn:recurrence-D}
		\begin{array}l
			\DP'[x,p+1,c(x) \cup C_1,\{\hat c(x)\} \cup C_2] :=\\
			\max_{C_1'\subseteq C_1, C_2'\subseteq C_2}
				\DP'[x,p,c(x) \cup C_1\setminus C_1',\{\hat c(x)\} \cup C_2\setminus C_2']
				\cdot \DP[y_{p+1},C_1',C_2'].
		\end{array}
	\end{equation}
	Finally, we set $\DP[x,C_1,C_2] := \DP'[x,q,C_1,C_2]$ for every~$C_1\subseteq [D]$, and every $C_2\subseteq [k]$.
	
	Return \yes if~$\DP[\star,[D],C_2]$ stores~1 for some~$C_2 \subseteq [k]$.
	Otherwise, return \no.
	
	\proofpara{Correctness}
	\ifJournal
	The correctness can be shown analogously to the correctness of \Cref{lem:k-stars}.\todos{Ich habe hier einen ausformulierten Beweis im Kommentar, aber dann dachte ich, dass diese Kurzfassung es auch tut. Der ausformulierte Beweis müsste überarbeitet werden.}
	
	\proofpara{Running time}
	The tables $\DP$ and $\DP'$ contain $\Oh(n\cdot m \cdot 2^{D+k})$ entries.
	Each entry in the basic case can be computed in $\Oh(D^2 \cdot n)$~time.
	In Recurrence~(\ref{eqn:recurrence-D}), each color can occur either in $C_1'$, in $c(x) \cup C_1 \setminus C_1'$ or not in $c(x) \cup C_1$.
	Likewise with $\{\hat c(x)\} \cup Z$ and $Z'$.
	Therefore, all values in Recurrence~(\ref{eqn:recurrence-D}) can be computed in $\Oh(3^{D+k} \cdot n\cdot m)$ time
	which is thus also the overall running time.\todos{We could use subset convolutions to improve the result to $\Oh(2^{D+k} \cdot n\cdot m)$. Should we?}
	\else
	The correctness and the running time can be shown analogously to proof of \Cref{lem:k-stars}.
	\fi
\end{proof}

We can now prove~\Cref{thm:D} by showning how the standard \PDD can be reduced to \cDPDD via color coding.

\begin{proof}[Proof of \Cref{thm:D}]
	\proofpara{Reduction}
	Let $\Instance = (\Tree, \Food, k, D)$ be an instance of \PDD.
	We assume that \Food only has one source by~\Cref{obs:top-predator}
	and that $\max_\w < D$ by~\Cref{rr:maxw<D}.
	
	Let~$e_1, \dots, e_{|E(\Tree)|}$ and~$x_1, \dots, x_{n}$ be the edges and taxa of~\Tree, respectively.
	We define integers~$W_0 := 0$ and $W_j := \sum_{i=1}^{j} \w(e_{i})$ for each $j\in [|E(\Tree)|]$.
	Set $W := W_{|E(\Tree)|}$. \todo{Here one or two informal sentences about the idea would help}
	
	Compute a $(W, D)$-perfect hash family $\mathcal{H}_D$ and a~$(n,k)$-perfect hash family $\mathcal{H}_k$.  
	For every $g \in \mathcal{H}_k$,
	let $c_{g,2}$ be a coloring such that $c_{g,2}(x_j) = f(j)$ for each~$x_j \in X$.
	For every $f \in \mathcal{H}_D$,
	let $c_{f,1}$ be a coloring such that $c_{f,1}(e_j) = \{f(W_{j-1}+1), \dots, f(W_j)\}$ for each $e_j \in E(\Tree)$.
	
	For hash functions $f \in \mathcal{H}_D$ and~$g\in \mathcal{H}_k$,
	construct an instance $\Instance_{f,g} = (\Tree, \Food, k, D, c_{f,1}, c_{g,2})$ of \cDPDD.
	Solve every instance~$\Instance_{f,g}$ using~\Cref{lem:colored-D} and return \yes if and only if $\Instance_{f,g}$ is a \yes-instance for some~$f \in \mathcal{H}_D$, $g\in \mathcal{H}_k$.

	\proofpara{Correctness}
	We first show that if \Instance is a \yes instance then $\Instance_{f,g}$ is a \yes-instance for some~$f \in \mathcal{H}_D$, $g\in \mathcal{H}_k$.
	For any set of edges $E'$ with $\w(E') \ge D$, there is a corresponding subset of~$[W]$ of size at least $D$.
	Since $\mathcal{H}_k$ is a $(W, D)$-perfect hash family,~$c_{f,1}(E') = [D]$, for some $f \in \mathcal{H}_D$.
	Analogously, for each set $X'$ of taxa with $|X'| = k$ there is a hash function~$g\in \mathcal{H}_k$ such that $c_{g,2}(X') = [k]$.
	Thus in particular, if $S\subseteq X$ is a solution of size~$k$ for $\Instance$, then $c_{f,1}(S) = [D]$, for some $f \in \mathcal{H}_D$ and $c_{g,2}(S) = [k]$, for some $g \in \mathcal{H}_k$.
	It follows that one of the constructed instances of \cDPDD is a \yes-instance.
	
	Conversely, a solution for $\Instance_{f,g}$ for some~$f \in \mathcal{H}_D$, $g\in \mathcal{H}_k$ is also a solution for $\Instance$.

	\proofpara{Running Time}
	We require $\Oh(n^2)$~time for~\Cref{rr:maxw<D} and~\Cref{obs:top-predator}.
	We can construct~$\mathcal{H}_D$ and~$\mathcal{H}_k$ in $e^D D^{\Oh(\log D)} \cdot W \log W$ time.
	Solving instances of \cDPDD takes $\Oh(3^{D+k}\cdot nm)$~time each, and the number of instances is $|\mathcal{H}_D| \cdot |\mathcal{H}_k| = e^D D^{\Oh(\log D)} \cdot \log W \cdot e^k k^{\Oh(\log k)} \cdot \log n \in e^{D+k + o(D)} \cdot \log W$.
	
	Thus, the total running time is 
	$\Oh(e^{D+k + o(D)} \cdot \log W \cdot (W + 3^{D+k}\cdot nm))$.
	Because $W = \PD(X) < 2n\cdot D$ this simplifies to $\Oh((3e)^{D + k + o(D)} \cdot nm + n^2)$.
\end{proof}
}
\ifWABIShort
\else
\proofThmD
\fi

\ifJournal
\subsection{No poly kernels for $D$}
In this subsection, first we give a reduction from \SC to \sPDD to show that \sPDD does not admit a polynomial kernelization algorithm when parameterized by $D$,
assuming~\NPcoNPpoly~(\Cref{thm:D-kernel-PDss}).
We could automatically conclude the same result for \PDD.
However, we afterward provide a compression from \textsc{Graph Motif} to \PDD to show the following stronger result.
Even if $\Food$ is a forest, \sPDD does not admit a polynomial kernelization algorithm when parameterized by $D$,
assuming~\NPcoNPpoly~(\Cref{thm:D-kernel-PDts}).

Even though it is not proven, it is widely believed that \NPcoNPpoly.
If \NP was a subset of coNP/poly, the polynomial hierarchy would collapse on the third level~\cite{cygan}.

In the following we reduce from \SC to \sPDD.
In \SC, an input consists of a family of sets $\mathcal{Q}$ over a universe $\mathcal{U}$, and an integer $k$.
It is asked whether there exists a sub-family of sets $\mathcal{Q'} \subseteq \mathcal{Q}$ such that $\mathcal{Q'}$ has a carnality of at most $k$ and the union of~$\mathcal{Q'}$ covers the entire universe.
Assuming \NPcoNPpoly, \SC does not admit a polynomial kernel when parameterized by the size of the universe $|\mathcal{U}|$~\cite{dom,cygan}.

Our reduction from \SC to \sPDD is similar to the reduction from \VC to \sPDD presented in~\cite{faller}.

\begin{theorem} \label{thm:D-kernel-PDss}
	\sPDD does not admit a polynomial kernelization algorithm with respect to $D$,
	assuming \NPcoNPpoly.
\end{theorem}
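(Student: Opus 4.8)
# Proof Proposal for Theorem (No polynomial kernel for \sPDD{} parameterized by $D$)

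The plan is to establish a polynomial parameter transformation (PPT) from \SC{} parameterized by the universe size $|\mathcal{U}|$ to \sPDD{} parameterized by $D$, and then invoke the standard machinery: since \SC{}/$|\mathcal{U}|$ has no polynomial kernel unless \NPcoNPpoly{}~\cite{dom,cygan}, and a PPT into a problem that \emph{did} admit a polynomial kernel would yield one for the source, \sPDD{}/$D$ cannot admit a polynomial kernel either. So the whole content is the construction of the reduction, which the excerpt already hints is a variant of the \VC{}-to-\sPDD{} reduction of Faller et al.~\cite{faller}.

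The construction I would use: given \SC{} instance $(\mathcal{Q},\mathcal{U},k)$ with $\mathcal{U} = \{u_1,\dots,u_t\}$ and $\mathcal{Q} = \{Q_1,\dots,Q_r\}$, build a star phylogenetic tree $\Tree$ whose leaves are (a) one taxon $q_i$ for each set $Q_i$, carrying a small edge weight (say $1$), and (b) one taxon $e_j$ for each universe element $u_j$, carrying a large edge weight $M$ for a value $M$ to be fixed (e.g.\ $M := r+1$, so that grabbing all set-taxa is never as good as one missing element-taxon). For the food-web $\Food$, make every $q_i$ a source, and add an arc $q_i e_j$ whenever $u_j \in Q_i$; thus an element-taxon $e_j$ can be included in a viable set only if some set-taxon $q_i$ covering $u_j$ is also included. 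Finally set the budget to $k' := k + t$ and the diversity threshold to $D := t\cdot M + 1$ (or $t\cdot M$ plus a small slack forcing at least one set-taxon, tuned so that $D \le t\cdot M + k$). A viable set of size $k'$ achieving diversity $\geq D$ must contain all $t$ element-taxa (their combined weight $tM$ is needed, and no amount of unit-weight set-taxa can compensate for a missing $e_j$ once $M$ is large enough), hence it must contain, among its remaining $\le k$ slots, a collection of set-taxa whose corresponding sets cover every $u_j$ — i.e.\ a set cover of size $\le k$. Conversely a set cover $\mathcal{Q}' $ of size $\le k$ yields the viable set $\{q_i : Q_i \in \mathcal{Q}'\} \cup \{e_j : j \in [t]\}$ of size $\le k+t$ with diversity exactly $tM + |\mathcal{Q}'| \ge tM+1 = D$ (after padding with arbitrary extra set-taxa up to size $k'$, which only increases \PD{}; use \Cref{obs:solution-size}). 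One should double-check the star has all internal out-degrees $\ge 2$, which holds as long as $r+t \ge 2$, a trivial case assumption.

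The parameter check is the crucial point: I need $D$ to be bounded by a polynomial in $|\mathcal{U}| = t$. With $M = r+1$ we get $D = t(r+1)+1$, which is polynomial in $t$ only if $r$ is — and $r$, the number of sets, is \emph{not} bounded by $t$ in general. So this naive weighting fails, and fixing it is the main obstacle. The remedy is to keep the element-edge weights at $M = 1$ as well but instead scale: give each $e_j$ weight $1$ and each $q_i$ weight $0$ — except edge weights must be positive. The clean workaround is to not use set-taxa as leaves of $\Tree$ at all in a way that contributes to \PD{}: attach them with weight $1$ but set $D = t + 1$ and budget $k' = t + k$; then a solution of size $\le t+k$ with \PD{} $\ge t+1$ must include at least $t+1$ leaves of positive weight, forcing all $t$ element-taxa (otherwise \PD{} $\le t$ is impossible — wait, set-taxa also have weight $1$). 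Here lies the real subtlety: with all weights equal I cannot force the element-taxa specifically. The correct fix, which I would carry out carefully, is to make element-taxon weights $M := r+1$ but observe that we may first apply a kernelization-preserving preprocessing on \SC{} (remove duplicate sets, sets contained in others) so that $r \le 2^t$; that is still exponential. Therefore the honest route is: use weights $M = 1$ for element-taxa and place set-taxa \emph{not under the star root} but pendant below a single auxiliary structure, or — most simply — reduce instead from a variant of \SC{} where $r$ is polynomially bounded in $t$, or present the weights in binary so $M$ can be $r+1$ while $D = t(r+1)+1$ has \emph{encoding length} polynomial and, for the no-poly-kernel framework, it suffices that $D$ as a \emph{number} is polynomially bounded — which it is \emph{not}. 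The cleanest resolution, and the one I expect the authors use: weight every element-edge $1$, weight every set-edge $0$ is illegal, so instead drop set-taxa from $\Tree$ entirely by making each $q_i$ an \emph{internal} Steiner-like vertex — impossible in a phylogenetic tree since taxa are leaves. Consequently I would instead route set-taxa through the food-web only: let the taxa of $\Tree$ be exactly $\{e_1,\dots,e_t\}$ plus one dummy taxon $d$, all with weight $1$, set $D := t$, and encode the sets $Q_i$ as \emph{sources} in $\Food$ that are additional taxa but attach them as children of $\rho$ with weight... again stuck on positivity. I would therefore ultimately attach set-taxa with weight $1$ but enlarge the budget to exactly $k + t$ while setting $D := t \cdot 2 \cdot r$ — no. In short, the genuine difficulty, and the step deserving the most care, is arranging the edge weights so that (i) achieving diversity $D$ \emph{forces} the inclusion of all $t$ element-taxa, and (ii) $D$ remains polynomially bounded in $t$ alone; I expect this is handled exactly as in the \VC{}-reduction of \cite{faller} by giving element-taxa weight $r+1$ \emph{but then additionally observing that one may assume $r \le$ (something polynomial in $t$)} via a gap/padding argument, or by a two-layer construction; I would look up the \cite{faller} construction and mirror it, then verify the parameter bound line by line, and finally state the PPT consequence citing~\cite{dom,cygan}.
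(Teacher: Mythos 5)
Your high-level plan is the same as the paper's: a polynomial parameter transformation from \SC{} parameterized by $|\mathcal{U}|$ (no polynomial kernel under \NPcoNPpoly), with a star tree over taxa $\mathcal{Q}\cup\mathcal{U}$ and food-web arcs from set-taxa (sources) to the element-taxa they cover. However, the proposal has a genuine gap at exactly the step you flag as "the real subtlety": you never produce a weighting and threshold that simultaneously (i) force every element-taxon into a solution and (ii) keep $D$ polynomially bounded in $|\mathcal{U}|$. Your candidate fixes either blow up $D$ to $\Theta(|\mathcal{U}|\cdot|\mathcal{Q}|)$, are illegal (weight $0$, internal set-vertices), or collapse (all weights equal), and the proposal ends with a promise to look up the construction of Faller et al., which is not an argument. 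As written, the reduction is not completed, so the theorem is not proved.

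The missing idea is that no large weight gap is needed, because the \emph{budget} does the forcing. The paper sets $\w(\rho Q)=1$ for each $Q\in\mathcal{Q}$, $\w(\rho u)=2$ for each $u\in\mathcal{U}$, $k':=k+|\mathcal{U}|$ and $D:=k+2|\mathcal{U}|$. For any solution $S$ with $a:=|S\cap\mathcal{Q}|$ and $b:=|S\cap\mathcal{U}|$ one has $\PD(S)=a+2b=|S|+b\le k'+b$, so $\PD(S)\ge D=k'+|\mathcal{U}|$ already forces $b=|\mathcal{U}|$ and hence $a\le k$; viability then turns $S\cap\mathcal{Q}$ into a set cover of size at most $k$, and the converse direction pads a cover to size $k$ and adds all of $\mathcal{U}$. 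Since one may assume $k\le|\mathcal{U}|$ (otherwise picking one covering set per element decides the instance trivially), $D\le 3|\mathcal{U}|$, which gives the required polynomial parameter bound. This small counting trick is precisely what your attempts with a large weight $M=r+1$ were trying to substitute for, and it is the content your proof is missing.
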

\begin{proof}
	\proofpara{Reduction}
	Let an instance $\Instance = (\mathcal{Q},\mathcal{U},k)$ of \SC be given.
	We define an instance $\Instance' = (\Tree,\Food,k',D)$ of \sPDD as follows.
	Let \Tree be a star with root $\rho$ and leaves~$X := \mathcal{Q} \, \cup \, \mathcal{U}$.
	We set $\w(\rho Q) = 1$ for each $Q\in \mathcal{Q}$ and $\w(\rho u) = 2$ for each $u\in \mathcal{U}$.
	Further, the food-web \Food is a graph with vertices $X$ and we add an edge $Qu$ to \Food if and only if~$u\in \mathcal{U}$ and~$Q\in \mathcal{Q}$ and $u\in Q$.
	Finally, we set $k' := k + |\mathcal{U}|$ and $D := k + 2|\mathcal{U}|$.
	
	\proofpara{Correctness}
	We may assume that $k\le |\mathcal{U}|$.\todos{Add an explanation?}
	Therefore $D$ is bounded in $|\mathcal{U}|$.
	
	Now assume that $\mathcal{Q'}$ is a solution for \Instance.
	If necessary, add further sets to $\mathcal{Q'}$ until $|\mathcal{Q'}| = k$.
	Because $\mathcal{Q'}$ is a solution for \Instance, for each $u\in\mathcal{U}$ there is a $Q\in\mathcal{Q'}$ such that $u\in Q$.
	Hence, $S := \mathcal{Q'} \cup \mathcal{U}$ is viable, has size $|S| = k + |\mathcal{U}| = k'$ and $\PD(S) = |\mathcal{Q'}| + 2|\mathcal{U}| = k + 2|\mathcal{U}|$.
	
	On the converse, let $S$ be a solution for $\Instance'$ and assume $|S| = k'$.
	Let $S_{\mathcal{Q}}$ be the intersection of $S$ and $\mathcal{Q}$.
	Define $a := |S_{\mathcal{Q}}|$ and $b := |S|-a$.
	Then $\PD(S) = a + 2b = |S| + b$ and~$\PD(S) = k + 2|\mathcal{U}| = k' + |\mathcal{U}|$.
	We conclude that $b = |{\mathcal{U}}|$ and $a = |S|-b \le k'-|{\mathcal{U}}| = k$.
	Since $S$ is viable, for each $u\in \mathcal{U}$ there is a $Q\in S_{\mathcal{Q}}$ with $u\in Q$.
	Therefore, $S_{\mathcal{Q}}$ is a solution for \Instance.
\end{proof}

For \PDD we want to show the non-existence of a polynomial kernel even for the case that the food-web is restricted to a forest.
Here, we define a composition from \textsc{Graph Motif}, a problem in which one is given a graph $G$ with vertex-coloring $\chi$ and a multiset of colors $M$.
It is asked whether $G$ has a connected set of vertices whose multiset of colors equals $M$.
\textsc{Graph Motif} was shown to be \NP-hard even on trees~\cite{lacroix}.
It remains \NP-hard to compute a solution for an instance of \textsc{Graph Motif} on trees of maximum vertex-degree three, even if $M$ is a single-set\todos{does that word exist? I want to express the opposite of multi-set.} and there is a color $c^*\in M$ that only one vertex of $G$ takes~\cite{fellowsmotif}.

\begin{theorem} \label{thm:D-kernel-PDts}
	Even if the given food-web~$\Food$ is a forest,
	\PDD does not admit a polynomial kernelization algorithm with respect to $D$,
	assuming \NPcoNPpoly.
\end{theorem}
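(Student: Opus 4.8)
The plan is to prove \Cref{thm:D-kernel-PDts} by an \emph{OR-cross-composition} from \textsc{Graph Motif} into \PDD parameterized by~$D$, producing an instance whose food-web is a forest; together with the cross-composition framework (see~\cite{cygan}) and the \NP-hardness of \textsc{Graph Motif} on trees with a single-set motif~$M$ containing a colour~$c^\star$ realized by a unique vertex~\cite{lacroix,fellowsmotif}, this yields the theorem. (The \SC-based reduction behind \Cref{thm:D-kernel-PDss} cannot be reused here, since its food-web is bipartite but far from acyclic.) For the polynomial equivalence relation I would simply bucket instances by the motif size~$|M|$ (with malformed inputs forming one extra class); hence we may assume the given instances $(G_1,\chi_1,M_1),\dots,(G_t,\chi_t,M_t)$ all satisfy $|M_i|=\mu$, and, after relabelling colours, that $M_i=[\mu]$, that $c^\star=1$, and that every colour outside~$[\mu]$ has been merged into a single fresh colour~$0$ in each~$G_i$ (a colour-$0$ vertex can never lie in a solution, so this is harmless). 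Let~$r_i$ be the unique colour-$1$ vertex of~$G_i$ and root~$G_i$ at~$r_i$; if some colour of~$[\mu]$ occurs in no~$G_i$, output a fixed \no-instance.

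The composed \PDD-instance is built as follows. The food-web~$\Food$ is the disjoint union of $G_1,\dots,G_t$, each oriented away from its root~$r_i$; thus~$\Food$ is a forest whose sources are exactly the~$r_i$, with taxon set $X:=\biguplus_i V(G_i)$. The phylogenetic tree~$\Tree$ has a root~$\rho$; for each colour $c\in[\mu]$ realized by at least two taxa we add a child~$t_c$ of~$\rho$ with $\w(\rho t_c):=W_0$ for the constant $W_0:=2$ and make every colour-$c$ taxon a child of~$t_c$ with edge weight~$1$; for each colour $c\in[\mu]$ realized by a single taxon~$v$ we instead make~$v$ a child of~$\rho$ with edge weight $W_0+1$; and, if colour~$0$ occurs, we collect the colour-$0$ taxa under a further child~$t_0$ of~$\rho$ in the same way (with the $\rho t_0$-edge and the leaf edges of weight~$1$), padding~$\Food$ with one dummy colour-$0$ source if needed so that~$t_0$ has out-degree at least two. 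Finally we set $k:=\mu$ and $D:=W_0\mu+\mu=3\mu$. This is clearly polynomial time, $\Tree$ is a valid phylogenetic tree, and $D=3\mu\le 3\max_i|(G_i,\chi_i,M_i)|$ lies within the cross-composition budget.

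The analysis rests on two observations about a viable set $A\subseteq X$. First, writing $a:=|\{c\in[\mu]\mid A \text{ contains a colour-}c\text{ taxon}\}|$ and $j:=|\{w\in A\mid \chi(w)=0\}|$, a direct inspection of the weights in~$\Tree$ shows $\PD(A)=W_0\,a+|A|+[\,j\ge 1\,]$ (the edge weight $W_0+1$ of a single-taxon colour splits as ``$W_0$ for the colour'' plus ``$1$ for the taxon''). Second, because each~$G_i$ is a tree oriented away from~$r_i$, the set $A\cap V(G_i)$ is either empty or a connected subtree of~$G_i$ containing~$r_i$; so if~$A$ meets~$p\ge 1$ distinct instances it contains the~$p$ roots $r_{i_1},\dots,r_{i_p}$, which all have colour~$1$, whence $a\le|A|-j-(p-1)$. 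For the forward direction: if some~$G_i$ has a motif~$T$, then~$T$ is a connected $\mu$-vertex subtree with one vertex of each colour in~$[\mu]$, hence contains~$r_i$, so $A:=T$ is viable with $|A|=\mu=k$ and $\PD(A)=W_0\mu+\mu=D$. Conversely, let~$A$ be viable with $|A|\le\mu$ and $\PD(A)\ge 3\mu$. Combining the two observations with $W_0=2$ gives $3\mu\le\PD(A)\le 2(\mu-j-p+1)+\mu+[\,j\ge 1\,]$, i.e.\ $2(j+p-1)\le[\,j\ge 1\,]\le 1$; since $p\ge 1$ this forces $j=0$ and $p=1$, and then $\PD(A)=2a+|A|$ with $a\le|A|\le\mu$ forces $a=|A|=\mu$. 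Hence~$A$ lies entirely in one~$G_i$, is connected there, contains no colour-$0$ vertex and exactly one vertex of each colour in~$[\mu]$; that is,~$A$ is a motif of~$G_i$. Thus the \PDD-instance is a yes-instance iff some~$G_i$ is, completing the cross-composition and hence ruling out a polynomial kernel unless~\NPcoNPpoly.

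The main obstacle is precisely this last budget-accounting step: one must rule out that spreading the~$\mu$ chosen taxa over several instances, or spending part of the budget on colour-$0$ vertices, can still push the phylogenetic diversity up to~$D$. This is where it is essential (i)~that all instance-roots carry the common colour~$c^\star$, so touching~$p$ instances forfeits $p-1$ units of the colour count, and (ii)~that $W_0\ge 2$, so each unit of colour count is strictly more valuable than one taxon in the budget; only then does the monotone objective~$\PD$ fail to compensate for a ``wasted'' taxon. Keeping every internal vertex of~$\Tree$ of out-degree at least two is a routine side issue, handled by the single-taxon-colour and dummy-source cases above.
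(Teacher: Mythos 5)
Your proposal is correct and takes essentially the same route as the paper: an OR-composition from \textsc{Graph Motif} on trees with a unique $c^\star$-coloured vertex, taking the food-web to be the disjoint union of the input trees oriented away from those vertices, grouping the taxa by colour under internal children of the root of the phylogenetic tree, and setting $k=|M|$ and $D=\Theta(|M|)$ so that any viable solution meeting the diversity bound must use $|M|$ distinctly coloured taxa and hence (since all sources share colour $c^\star$) lie inside a single input instance. The only deviations are bookkeeping choices — weight-$2$ colour edges with $D=3\mu$ instead of unit weights with $D=2|M|$, an explicit colour-$0$ bucket and degree padding, and an equivalence relation on $|M|$ with colour relabelling rather than fixing $M$ itself — which merely make explicit some details the paper leaves implicit.
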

\begin{proof}
	We describe an OR-composition from \textsc{Graph Motif} to \PDD.
	We refer readers unfamiliar to OR-compositions to~\cite[Chapter~15.1]{cygan}.
	
	\proofpara{OR-composition}
	Fix a set of colors $M$.
	Let $\Instance_1,\dots,\Instance_q$, be instances $\Instance_i = (G_i = (V_i,E_i),M)$ of \textsc{Graph Motif} such that all $G_i$ are trees with maximum vertex-degree of three.
	Let $v_i$ be the only vertex of $V_i$ with $\chi(v_i) = c^* \in M$.
	
	We define an instance $\Instance = (\Tree,\Food,k,D)$ of \PDD as follows.
	Let \Tree be a tree with vertex set $\{\rho\} \cup M \cup X$ for leaves $X := \bigcup_{i=1}^q V_i$.
	Add an edges $\rho c$ to \Tree for each $c\in M$ and add an edge $cv$ if and only if $\chi(v) = c$.
	Each edge has weight~1.
	Orient each edge in $G_i$ away from~$v_i$ to receive $H_i$ for each $i\in [q]$.
	Let \Food be the food-web which is the union of $H_1,\dots,H_q$.
	We define $k := |M|$ and $D := 2\cdot|M|$.
	\Cref{fig:composition}\todos{Draw a TikZ?} depicts an example of this reduction.
	
	\proofpara{Correctness}
	We show that some instance of $\Instance_1,\dots,\Instance_q$ of \textsc{Graph Motif} is a \yes-instance if and only if \Instance is a \yes-instance of \PDD.
	
	Let $\Instance_i$ be a \yes-instance of \textsc{Graph Motif} for an $i\in [q]$.
	Consequently, there is a set of vertices $S\subseteq V_i$ such that $|S| = |M|$, $\chi(S) = M$ and $G_i[S]$ is connected.
	We conclude that~$v_i\in S$ because~$v_i$ is the only vertex with $\chi(v_i) = c^*$.
	Thus, $H_i[S]$ is a connected subtree of $H_i$ which contains $v_i$, the only source.
	We conclude that $S$ is viable in \Instance.
	By definition~$|S| = |M|$ and because each vertex in~$S$ has another color we conclude that~$\PD(S) = 2 \cdot |M|$.
	Hence, $S$ is a solution for \Instance.
	
	Conversely, let \Instance be a \yes-instance of \PDD.
	Consequently, there is a viable set $S$ of taxa such that $|S| \le |M|$ and $\PD(S) = 2|M|$.
	A taxon $x\in X$ has color~$c\in M$ if~$cx$ is an edge in \Tree.
	Observe that~$\PD(A\cup\{x\}) \le \PD(A) + 2$ for any set of taxa $A$.
	Further~$\PD(A\cup\{x\}) = \PD(A) + 2$ if and only if~$x$ has a color that none of the taxa of~$A$ has.
	We conclude that the taxa in $S$ have unique colors.
	Fix $j\in [q]$ such that $v_j\in S$.
	The index~$j$ is uniquely defined because there is a unique taxon in $S$ with color $c^*$.
	Because the vertices~$v_1,\dots,v_q$ are the only sources in \Instance and $S$ is viable, we conclude $u\in V_i$ can not occur in $S$ for~$i\ne j$.
	Therefore $S\subseteq V_j$.
	Because $S$ is viable, $H_i[S]$ is connected and therefore also~$G_i[S]$.
	Thus, $S$ is a solution for instance $\Instance_j$ of \textsc{Graph Motif}.
\end{proof}

\fi

\ifJournal
\section{$\kbar$ and $\Dbar$}
\label{sec:kbar}
\subsection{Hardness for the acceptable diversity loss $\Dbar$}
\fi
In some instances, the diversity threshold~$D$ may be very large. Then, however, the acceptable loss of diversity~\Dbar could be relatively small.
Recall $\Dbar$ is defined as $\PD(X) - D$.
Encouraged by this observation, recently, several problems on maximizing phylogenetic diversity have been studied with respect to the acceptable diversity loss~\cite{MAPPD,timePD}.
In this section, we show that, unfortunately, \sPDD is already $\Wh 1$-hard with respect to $\Dbar$ even if edge-weights are at most two.

To show this result, we reduce from \rbnb.
In \rbnb, the input is an undirected bipartite graph $G$ with vertex bipartition $V=V_r \cup V_b$ and an integer $k$.
The question is whether there is a set $S\subseteq V_r$ of size at least $k$ such that the neighborhood of $V_r \setminus S$ is $V_b$.
\rbnb is \Wh 1-hard when parameterized by the size of the solution $k$~\cite{downey}.
\begin{proposition}
	\label{prop:Dbar}
	\sPDD is $\Wh 1$-hard with respect to $\Dbar$, even if $\max_\w=2$.
\end{proposition}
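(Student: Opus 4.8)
The plan is to give a parameterized reduction from \rbnb, which is \Wh 1-hard with respect to the solution size~$k$. Given an instance $(G,k)$ of \rbnb with bipartition $V(G)=V_r\cup V_b$, I would first dispose of the trivial cases: if $|V_r|<k$, or if some vertex of $V_b$ is isolated in $G$, then $(G,k)$ is a no-instance (in the latter case no choice of $S$ can make $N(V_r\setminus S)$ contain that vertex) and we output a fixed no-instance of \sPDD; otherwise every $b\in V_b$ has a neighbour in $V_r$. Now build $\Instance=(\Tree,\Food,k',D)$ of \sPDD as follows: let $\Tree$ be a star with leaf set $X:=V_r\cup V_b$ and weights $\w(\rho r)=1$ for $r\in V_r$, $\w(\rho b)=2$ for $b\in V_b$; let $\Food$ have vertex set $X$ and an arc $rb$ for every edge $\{r,b\}$ of $G$ with $r\in V_r$, $b\in V_b$, so that $\Food$ is a DAG with $\sourcespersonal{\Food}=V_r$; and set $k':=|V_r|+|V_b|-k$ and $D:=|V_r|+2|V_b|-k$. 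Then $\max_\w=2$ and $\Dbar=\PD(X)-D=(|V_r|+2|V_b|)-D=k$, so it suffices to prove that $\Instance$ is a yes-instance of \sPDD if and only if $(G,k)$ is a yes-instance of \rbnb.

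For the direction from \rbnb to \sPDD, I would first note that $(G,k)$ has a solution of size at least~$k$ iff it has one of size exactly~$k$: if $N(V_r\setminus S)=V_b$ and $S'\subseteq S$, then $V_r\setminus S'\supseteq V_r\setminus S$, hence $N(V_r\setminus S')=V_b$ too. So take $S\subseteq V_r$ with $|S|=k$ and $N(V_r\setminus S)=V_b$, and set $S^*:=V_b\cup(V_r\setminus S)$. Each $r\in V_r\setminus S$ is a source of $\Food$, and each $b\in V_b$ has a prey in $V_r\setminus S\subseteq S^*$ since $b\in N(V_r\setminus S)$; hence $\sourcespersonal{\Food[S^*]}=V_r\setminus S\subseteq\sourcespersonal{\Food}$, so $S^*$ is viable. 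Moreover $|S^*|=|V_b|+|V_r|-k=k'$, and since $\Tree$ is a star, $\PD(S^*)=2|V_b|+(|V_r|-k)=D$, so $S^*$ is a solution of $\Instance$.

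For the converse, let $S^*$ be a viable set with $|S^*|\le k'$ and $\PD(S^*)\ge D$, and write $a:=|S^*\cap V_r|$ and $b:=|S^*\cap V_b|$. Then $a+b\le|V_r|+|V_b|-k$, and since $\Tree$ is a star, $a+2b=\PD(S^*)\ge|V_r|+2|V_b|-k$; subtracting the first inequality from the second yields $b\ge|V_b|$, hence $b=|V_b|$ and $V_b\subseteq S^*$, and then the two bounds collapse to $a=|V_r|-k$. Set $S:=V_r\setminus S^*$, so $|S|=k$. Each $b\in V_b$ lies in $S^*$ and is not a source of $\Food$ (it has a red neighbour, since we removed isolated blue vertices); viability of $S^*$ therefore forces $b$ to have a prey in $S^*$, i.e.\ a neighbour in $S^*\cap V_r=V_r\setminus S$. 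Thus $N(V_r\setminus S)=V_b$, so $(G,k)$ is a yes-instance of \rbnb.

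The construction is clearly polynomial-time and, as computed above, maps an instance with parameter $k$ to one with $\Dbar=k$ and $\max_\w=2$, which establishes the claim. I expect the main point of care to be the bookkeeping of $k'$ and $D$: they must simultaneously make $\Dbar$ equal the \rbnb-parameter, force every blue taxon into the solution, and pin the number of selected red taxa to exactly $|V_r|-k$; the counting argument in the converse direction is where these three requirements have to line up, and the (minor) removal of isolated blue vertices is the extra care needed so that viability correctly encodes ``$b$ is covered by a surviving red neighbour''.
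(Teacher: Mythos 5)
Your proposal is correct and follows essentially the same route as the paper: the identical reduction from \rbnb with a star whose red leaves have weight~$1$ and blue leaves weight~$2$, arcs from $V_r$ to $V_b$, $k'=|V|-k$, $D=2|V_b|+|V_r|-k$ (so $\Dbar=k$), and the same counting argument in the converse direction. The only difference is your explicit preprocessing of isolated blue vertices (and of $|V_r|<k$), a corner case the paper's write-up glosses over but which your handling resolves cleanly.
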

\begin{proof}
	\proofpara{Reduction}
	Let $\Instance := (G=(V=V_r \cup V_b,E),k)$ be an instance of \rbnb.
	%	and let the red vertices be labeled $V_r = \{v_1,\dots,v_{|V_r|}\}$
	%	Let $M$ be a sufficiently large integer, i.e. $M > |V_r|$, and
	%	let $W_1,\dots,W_{|V_r|}$ be vertex-sets of size $M$.
	%	Define $W$ to be the union $W_1 \cup \dots \cup W_{|V_r|}$.
	We construct an instance $\Instance' = (\Tree, \Food, k', D)$ of \sPDD as follows.
	\ifJournal
	For better visualization, \Cref{fig:Dbar} depicts an example of this reduction.\todos{Before creating the TikZ, it seamed to be a good idea in my eyes. Now I would like to erase it.}
	\fi
	Let \Tree be a star with root $\rho\not\in V$ and leaves $V$.
	In \Tree, an edge $e=(\rho,u)$ has weight 1 if $u\in V_r$ and
	otherwise~$\w(e) = 2$, if $u\in V_b$.
	Define a food-web \Food with vertices $V$ and
	for each edge~$\{u,v\}\in E$, and every tuple of vertices $u\in V_b$, $v\in V_r$, add an edge $(u,v)$ to \Food.
	%
	%	Further, for each $i\in [|V_r|]$ and $w\in W_i$, add an edge $(w,v_i)$ to \Food.
	%
	Finally, set $k' := |V| - k$ and~$D := 2\cdot |V_b| + |V_r| - k$, or equivalently $\kbar = \Dbar = k$.
	
	\proofpara{Correctness}
	The reduction can be computed in polynomial time.
	We show that if \Instance is a \yes-instance of \rbnb then~$\Instance'$ is a \yes-instance of \PDD. Afterward, we show the converse.
	
	Assume that $\Instance$ is a \yes-instance of \rbnb.
	Therefore, there is a set~$S\subseteq V_r$ of size at least $k$ such that $N_G(V_r\setminus S) = V_b$.
	(We assume $|S| = k$ as $N_G(V_r\setminus S) = V_b$ still holds if we shrink $S$.)
	We define $S' := V\setminus S$ and show that $S'$ is a solution for $\Instance'$.
	The size of~$S'$ is $|V\setminus S| = |V| - |S| = k'$.
	Further, $\PD(S) = 2\cdot |V_b| + |V_r\setminus S| = 2\cdot |V_b| + |V_r| - k = D$.
	By definition, the vertices in $V_r$ are sources.
	Further, because $S$ is a solution for \Instance, each vertex of~$V_b$ has a neighbor in $V_r\setminus S$.
	So, $S'$ is viable and $\Instance'$ is a \yes-instance of \sPDD.
	
	Conversely, let $S'\subseteq V$ be a solution for instance $\Instance'$ of \sPDD.
	Without loss of generality, $S'$ contains $r$ vertices from $V_r$ and $b$ vertices of $V_b$.
	Consequently, $|V| - k \ge |S'| = b + r$ and $2\cdot |V_b| + |V_r| - k = D \le \PD(S') = 2b + r$.
	We conclude $r \le |V| - k - b$ and so~$2b \ge 2\cdot |V_b| + |V_r| - k - r \ge 2\cdot |V_b| + |V_r| - k - (|V| - k - b) = |V_b| + b$.
	Therefore, $b = |V_b|$ and $V_b \subseteq S'$.
	Further, $r = |V_r| - k$.
	We define $S := V_r \setminus S'$ and conclude $|S| = |V_r| - r = k$.
	Because $S'$ is viable, each vertex in $V_b$ has a neighbor in $S'\setminus V_b$.
	Therefore, $S$ is a solution for the \yes-instance \Instance of \rbnb.
\end{proof}

\ifJournal
\begin{figure}[t]
\centering
\begin{tikzpicture}[scale=0.8,every node/.style={scale=0.7}]
	\tikzstyle{txt}=[circle,fill=white,draw=white,inner sep=0pt]
	\tikzstyle{nde}=[circle,fill=black,draw=black,inner sep=2.5pt]
	
	\node[nde] (r0) at (0,0.5) {};
	\node[nde] (r1) at (0,1.5) {};
	\node[nde] (r2) at (0,2.5) {};
	\node[nde] (r3) at (0,3.5) {};
	
	\node[nde] (b0) at (3,0) {};
	\node[nde] (b1) at (3,1) {};
	\node[nde] (b2) at (3,2) {};
	\node[nde] (b3) at (3,3) {};
	\node[nde] (b4) at (3,4) {};
	
	\node[txt,xshift=9mm] [left=of r0] {$r_0$};
	\node[txt,xshift=9mm] [left=of r1] {$r_1$};
	\node[txt,xshift=9mm] [left=of r2] {$r_2$};
	\node[txt,xshift=9mm] [left=of r3] {$r_3$};
	
	\node[txt,xshift=-9mm] [right=of b0] {$b_0$};
	\node[txt,xshift=-9mm] [right=of b1] {$b_1$};
	\node[txt,xshift=-9mm] [right=of b2] {$b_2$};
	\node[txt,xshift=-9mm] [right=of b3] {$b_3$};
	\node[txt,xshift=-9mm] [right=of b4] {$b_4$};

	\node[txt,xshift=17mm,yshift=5mm] [left=of r3] {$(1)$};
	\node[txt,xshift=17mm,yshift=-8mm] [left=of r0] {$k=2$};
	
	\draw[thick] (r0) to (b0);
	\draw[thick] (r0) to (b3);
	\draw[thick] (r1) to (b0);
	\draw[thick] (r1) to (b2);
	\draw[thick] (r1) to (b4);
	\draw[thick] (r2) to (b1);
	\draw[thick] (r2) to (b2);
	\draw[thick] (r3) to (b1);
	\draw[thick] (r3) to (b3);
	\draw[thick] (r3) to (b4);
	
	\draw (4.2,-.5) to (4.2,4.5);
\end{tikzpicture}
\begin{tikzpicture}[scale=0.8,every node/.style={scale=0.7}]
	\tikzstyle{txt}=[circle,fill=white,draw=white,inner sep=0pt]
	\tikzstyle{nde}=[circle,fill=black,draw=black,inner sep=2.5pt]
	
	\node[nde] (r0) at (0,0.5) {};
	\node[nde] (r1) at (0,1.5) {};
	\node[nde] (r2) at (0,2.5) {};
	\node[nde] (r3) at (0,3.5) {};
	
	\node[nde] (b0) at (3,0) {};
	\node[nde] (b1) at (3,1) {};
	\node[nde] (b2) at (3,2) {};
	\node[nde] (b3) at (3,3) {};
	\node[nde] (b4) at (3,4) {};
	
	\node[txt,xshift=9mm] [left=of r0] {$r_0$};
	\node[txt,xshift=9mm] [left=of r1] {$r_1$};
	\node[txt,xshift=9mm] [left=of r2] {$r_2$};
	\node[txt,xshift=9mm] [left=of r3] {$r_3$};
	
	\node[txt,xshift=-9mm] [right=of b0] {$b_0$};
	\node[txt,xshift=-9mm] [right=of b1] {$b_1$};
	\node[txt,xshift=-9mm] [right=of b2] {$b_2$};
	\node[txt,xshift=-9mm] [right=of b3] {$b_3$};
	\node[txt,xshift=-9mm] [right=of b4] {$b_4$};

	\node[txt,xshift=17mm,yshift=5mm] [left=of r3] {$(2)$};
	\node[txt,xshift=17mm,yshift=-8mm] [left=of r0] {$k=2$};
	
	\draw[thick,arrows = {-Stealth[length=8pt]}] (r0) to (b0);
	\draw[thick,arrows = {-Stealth[length=8pt]}] (r0) to (b3);
	\draw[thick,arrows = {-Stealth[length=8pt]}] (r1) to (b0);
	\draw[thick,arrows = {-Stealth[length=8pt]}] (r1) to (b2);
	\draw[thick,arrows = {-Stealth[length=8pt]}] (r1) to (b4);
	\draw[thick,arrows = {-Stealth[length=8pt]}] (r2) to (b1);
	\draw[thick,arrows = {-Stealth[length=8pt]}] (r2) to (b2);
	\draw[thick,arrows = {-Stealth[length=8pt]}] (r3) to (b1);
	\draw[thick,arrows = {-Stealth[length=8pt]}] (r3) to (b3);
	\draw[thick,arrows = {-Stealth[length=8pt]}] (r3) to (b4);
\end{tikzpicture}
\caption{This figure shows an example of the reduction in \Cref{prop:Dbar}.
	A hypothetical instance \Instance of \rbnb is given in (1).
	In (2) the constructed food-web \Food is shown.
	}
\label{fig:Dbar}
\end{figure}%

\subsection{\FPT for $\kbar$ if there are few prey}
It is known that \PDD remains \NP-hard if the food-web is an in-tree or an out-tree~\cite{faller}.
By \Cref{prop:Dbar} we know that \PDD is \Wh{1}-hard when parameterized by the acceptable loss of diversity $\Dbar$ or the minimum number of extincting taxa $\kbar$.
In the following we show that in the special case that each taxon has at most one prey, \PDD is \FPT when parameterized with~$\kbar$.
Observe that each taxon has at most one prey if and only if the food-web is an out-forest.

\begin{proposition} \label{prop:kbar+indeg}
	\PDD can be solved in time~$\Oh(2^{3\kbar + o(\kbar)}n\log n + n^5)$, if each taxon has at most one prey.
\end{proposition}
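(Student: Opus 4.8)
The plan is to reduce the problem, via color coding on~$\kbar$ colours, to a colored variant that is handled by a dynamic program over the food-web in the spirit of~\Cref{lem:k-stars}. The first step is a reformulation. Since every taxon has at most one prey, \Cref{obs:viable} shows that a set~$S$ is viable exactly if it is closed under taking the (unique) prey; equivalently, the food-web is an out-forest, a viable~$S$ is a union of subtrees-to-the-root of the out-forest, and the complement~$U := X\setminus S$ of a viable set is precisely a union of complete subtrees of the food-web. By~\Cref{obs:solution-size}, a \yes-instance has a solution with~$|U| = \kbar$, and since~$\PD(X)-\PD(X\setminus U) = \sum_{e\in E(\Tree),\ \off(e)\subseteq U}\w(e)$, it suffices to decide whether some union~$U$ of complete food-web subtrees of size exactly~$\kbar$ has this \emph{loss} at most~$\Dbar$. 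Now I would preprocess: using~\Cref{obs:top-predator} make the food-web have a single source (so it becomes an out-tree), noting that this transformation changes neither~$\kbar$ nor~$\Dbar$; and contract into the root of~$\Tree$ every edge~$e$ with~$|\off(e)|>\kbar$ (such an edge can never have all of~$\off(e)$ deleted), accumulating its weight into a constant that is subtracted from~$D$ — this again leaves~$\kbar$ and~$\Dbar$ unchanged, and afterwards~$\height_{\Tree}\le\kbar$, because along any root-to-leaf path the offspring counts strictly decrease (every internal vertex of~$\Tree$ has out-degree at least two) starting from a value at most~$\kbar$. Together with the reduction rules of~\Cref{sec:prelims}, this preprocessing runs in~$\Oh(n^5)$ time.

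Next I would apply standard color coding: build an~$(n,\kbar)$-perfect hash family~$\mathcal{H}$ and, for each~$h\in\mathcal{H}$, search for a \emph{colorful} union~$U$ of complete food-web subtrees — which is then automatically of size exactly~$\kbar$ — whose loss is at most~$\Dbar$, answering \yes iff some~$h$ admits one; correctness is the usual argument, since any solution has size~$\kbar$ and is colorful under some~$h$, while every colorful solution of an~$h$-instance is a solution. For a fixed~$h$, the colored problem is solved by a dynamic program over the food-web out-tree, processed from its leaves towards the source, structurally identical to the one in~\Cref{lem:k-stars}: for a food-web vertex~$v$ and a colour set~$C\subseteq[\kbar]$, the table entry is the minimum loss determined \emph{inside} the food-web subtree rooted at~$v$, over all colorful deletions within that subtree with colour set~$C$ that are unions of complete subtrees (including the option of deleting the whole subtree rooted at~$v$). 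The food-web children of~$v$ are combined by the same recurrence that splits a colour set as an ordered sum over the children, which is responsible for the~$3^{\kbar}$ factor; the one new ingredient relative to~\Cref{lem:k-stars} is that, whenever the whole subtree rooted at~$v$ is deleted, I additionally add the weights of those tree-edges~$e$ for which~$\off(e)$ first becomes entirely contained in~$X_{\ge v}$ — that is,~$\off(e)\subseteq X_{\ge v}$ but~$\off(e)\not\subseteq X_{\ge v'}$ for every food-web child~$v'$ of~$v$ — and which are entirely deleted; in this way each lost tree-edge is paid for exactly once, at the food-web lowest common ancestor of its offspring. Because after the contraction step the phylogenetic tree has height at most~$\kbar$ and only tree-edges with at most~$\kbar$ offspring can ever be lost, this charging can be maintained in the table with only polynomial overhead, so a fixed~$h$ is handled in~$\Oh^{*}(3^{\kbar})$ time.

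Summing over the~$e^{\kbar}\kbar^{\Oh(\log\kbar)}\log n$ hash functions and adding the~$\Oh(n^5)$ preprocessing yields the claimed running time~$\Oh(2^{3\kbar + o(\kbar)}\, n\log n + n^5)$, along the lines of the analysis of~\Cref{thm:k-stars}. The step I expect to be the main obstacle is the loss accounting inside the dynamic program: the viability constraint lives in the food-web whereas the objective (the phylogenetic diversity) lives in the phylogenetic tree, so the program must carry enough information about a structure along which it is not recursing. The contraction that bounds~$\height_{\Tree}$ in terms of~$\kbar$, together with the observation that a potentially lost tree-edge can be charged to a single food-web vertex, is what I expect to make this bookkeeping fit into the~$\Oh^{*}(3^{\kbar})$ budget; a secondary point requiring care is that \Cref{obs:top-predator} and the edge-contraction genuinely preserve~$\kbar$ and~$\Dbar$, so that the parameter is not inflated.
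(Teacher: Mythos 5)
Your reformulation (replace viability by the condition that the extinct set~$U$ is predator-closed, i.e.\ a union of complete subtrees of the out-forest, and minimize the lost weight $\sum_{e:\off(e)\subseteq U}\w(e)$ over $|U|=\kbar$) and the preprocessing are sound, but the dynamic program that is supposed to do the actual work has a genuine gap, at exactly the spot you flag. Your table is indexed by a food-web vertex~$v$ and a color set $C\subseteq[\kbar]$, and you charge a lost tree edge~$e$ at the food-web vertex~$v$ that is the lowest vertex with $\off(e)\subseteq X_{\ge v}$, \emph{but only when the whole food-web subtree rooted at~$v$ is deleted}. This undercounts the loss: take a food-web source~$r$ with children $a,b$, where $a$ has children $a_1,a_2$ and $b$ has children $b_1,b_2$, and a tree edge~$e$ with $\off(e)=\{a_1,b_1\}$. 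Deleting exactly $U=\{a_1,b_1\}$ is a legal union of complete food-web subtrees and loses~$e$, yet $\off(e)$ is contained in no $X_{\ge v}$ below~$r$, and $X_{\ge r}$ is not entirely deleted, so your rule never adds $\w(e)$. Hence the DP can report a loss strictly below the true loss and answer \yes on a \no-instance. The obvious repair---charge $\w(e)$ at the food-web least common ancestor whenever $\off(e)\subseteq U$, regardless of whether the whole subtree is deleted---cannot be evaluated from your state: when combining the children of~$v$, the color sets do not reveal \emph{which} taxa were deleted in each branch, so the table cannot decide whether the offspring of a given edge are all covered. The remark that $\height_{\Tree}\le\kbar$ and $|\off(e)|\le\kbar$ make this bookkeeping polynomial is not substantiated, and no encoding of the needed information within an $\Oh^*(3^{\kbar})$ budget is given; this is the missing core of the proof.

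The paper takes a different route that avoids the issue entirely: it uses an $(n,3\kbar)$-universal set to produce $\{0,1\}$-colorings in which the $\kbar$ extinct taxa receive color~$0$ while at most $2\kbar$ mandatory survivors receive color~$1$ (one surviving offspring witnessing each maximal lost tree edge, plus the prey-neighbors of the extinct set, which number at most $\kbar$ because with at most one prey per taxon the extinct set is predator-closed). In the resulting colored problem the extinct set must be a union of connected components of the $0$-colored food-web glued to entirely-lost subtrees of~\Tree, and each colored instance reduces in polynomial time to \KP over these components ($\Oh(n^4)$), so the loss/size accounting is done by the knapsack, not inside a tree DP. This also explains a secondary quantitative problem with your plan: even if your DP were correct, $e^{\kbar}\kbar^{\Oh(\log\kbar)}$ hash functions times $3^{\kbar}$ per function gives $(3e)^{\kbar}=2^{3.03\kbar+o(\kbar)}$, which does not establish the claimed $\Oh(2^{3\kbar+o(\kbar)}n\log n+n^5)$ bound, whereas $2^{3\kbar}$ colorings times a polynomial-time subroutine does.
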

\todosi{Hier ist eine Observation auskommentiert. (Man kann annehmen, dass jedes taxa maximal $\kbar$ predadoren besitzt.) Ich denke man benötigt das nicht.}

In order to show the claimed result we resort to the technique of color coding.
We define a colored version of the problem, show how to solve it
and to the end of the section, we show how to reduce instances of the uncolored problem to the colored version.

We define \ckPDDlong (\ckPDD)\todos{Probably rename to not be in conflict with \cDPDDlong (\cDPDD).}, a colored version of the problem, as follows.
Additionally to the usual input $\Tree$, $\Food$, $k$, and $D$ of \PDD, we receive a coloring $c$ which assign each taxon $x\in X$ either~0 or~1.
In \ckPDD, we ask whether there is a set~$S\subseteq X$ that holds each of the following
\begin{itemize}
	\item the size of $S$ is at most $k$, and
	\item the phylogenetic diversity of $S$ is at least $D$, and
	\item $S$ is viable, and
	\item each taxon $x\in X\setminus S$ holds $c(x)=0$, and
	\item each neighbor $y$ of $X\setminus S$ in the food-web \Food holds $c(y) = 1$, and
	\item $\off(v) \subseteq X\setminus S$ implies $\off(u) \subseteq X\setminus S$ or there is a taxon $y\in \off(u)$ with $c(y) = 1$ for each edge $uv\in E(\Tree)$.
\end{itemize}

Observe that if the color $c(x)$ of a taxon $x\in X$ is~1, then $x$ has to be saved,
while $x$ may be saved or may go extinct if the color $c(x)$ is~0.

\begin{observation} \label{obs:equivalent-extinction-1}
	Given a \yes-instance $\Instance = (\Tree,\Food,k,D,c)$ of \ckPDD with solution~$S$ and
	a (possibly internal) vertex $w\in V(\Tree)$ with $c(x)=0$ for each $x\in \off(w)$.
	Then either $\off(w) \subseteq S$ or $\off(w) \subseteq X\setminus S$.
\end{observation}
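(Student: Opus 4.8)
The plan is to argue by contradiction, using only the last of the defining properties of a \ckPDD-solution (the one constraining edges of $\Tree$) together with the hypothesis that every taxon in $\off(w)$ has color~$0$; none of the other properties of $S$ are needed.

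Suppose, for contradiction, that $\off(w)$ meets both $S$ and $X\setminus S$. First I would pick a taxon $x\in\off(w)\setminus S$ and look at the path $w=v_0,v_1,\dots,v_t=x$ from $w$ down to $x$ in~$\Tree$. Every $v_i$ is a descendant of $w$, hence $\off(v_i)\subseteq\off(w)$; in particular every taxon of $\off(v_i)$ has color~$0$. Since $\off(v_t)=\{x\}\subseteq X\setminus S$, the set of indices $i$ with $\off(v_i)\subseteq X\setminus S$ is nonempty, so I let $j$ be its smallest element. If $j=0$ we would get $\off(w)\subseteq X\setminus S$, contradicting $\off(w)\cap S\neq\emptyset$; hence $j\ge 1$.

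Now I would apply the edge-condition of \ckPDD to the edge $v_{j-1}v_j\in E(\Tree)$: as $\off(v_j)\subseteq X\setminus S$, it forces either $\off(v_{j-1})\subseteq X\setminus S$ or the existence of some $y\in\off(v_{j-1})$ with $c(y)=1$. The former contradicts the minimality of $j$, and the latter contradicts the fact that $\off(v_{j-1})\subseteq\off(w)$ is monochromatic of color~$0$. Either way we reach a contradiction, so indeed $\off(w)\subseteq S$ or $\off(w)\subseteq X\setminus S$.

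This is essentially the whole argument, and I do not expect a genuine obstacle. The only points that need a moment's care are the degenerate case where $w$ itself is a leaf (then $\off(w)$ is a singleton and the claim is immediate, a case that is in any event subsumed by the contradiction setup), and verifying that the color-$0$ property of $\off(w)$ really does descend to every $v_i$ on the path, which is just the containment $\off(v_i)\subseteq\off(w)$ for descendants $v_i$ of~$w$.
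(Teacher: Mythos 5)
Your proof is correct and takes essentially the same approach as the paper: fix a taxon $x\in\off(w)\setminus S$, locate the boundary edge on the path towards $x$ where the offspring first become entirely contained in $X\setminus S$, apply the edge condition of \ckPDD to that edge, and use the color-$0$ hypothesis on $\off(w)$ to exclude the color-$1$ witness. The only cosmetic difference is that you stay on the path from $w$ down to $x$ (so the color-$0$ property applies immediately to $\off(v_{j-1})\subseteq\off(w)$), whereas the paper walks up from $x$ to the maximal vertex with fully excluded offspring and then argues that $w$ lies below it; both yield the same dichotomy.
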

\begin{proof}
	If $\off(w) \subseteq S$, we are done.
	
	Let there be a taxon $x\in \off(w)$ which is not in $S$.
	Let $v$ be the descendant of $x$ (possibly $v=x$) such that $\off(v) \subseteq X\setminus S$
	and $\off(u) \cap S \ne \emptyset$ for the parent $u$ of $v$.
	Because $\off(v) \subseteq X\setminus S$ and $\off(u) \cap S \ne \emptyset$, we conclude there is a taxon $y\in \off(u)$ with $c(y) = 1$.
	Therefore, $w$ is an ascendant of $v$ and we conclude $\off(w) \subseteq \off(v) \subseteq X\setminus S$.
\end{proof}

\begin{definition}
	\begin{itemize}
		\item[a)] Given a set of taxa and a coloring $c: X\to \{0,1\}$, we define $c^{-1}(0)$ and $c^{-1}(1)$ to be the subsets of $X$ that $c$ maps unto~0 or~1, respectively.
		\item[b)] Given a food-web \Food and a coloring $c$, we define $\Food_{c,0}$ to be the underlying undirected graph of the \Food induced by $c^{-1}(0)$.
	\end{itemize}
\end{definition}

\begin{observation} \label{obs:equivalent-extinction-2}
	Given a \yes-instance $\Instance = (\Tree,\Food,k,D,c)$ of \ckPDD with solution~$S$ and
	let $C$ be a connected component of $\Food_{c,0}$.
	Then $V(C) \subseteq S$ or $V(C) \subseteq X\setminus S$.
\end{observation}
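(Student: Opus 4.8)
The plan is to argue by contradiction, using only the fifth condition in the definition of a \ckPDD solution: every neighbour in $\Food$ of a taxon from $X\setminus S$ must receive colour~$1$. Suppose the component $C$ of $\Food_{c,0}$ were split by $S$, i.e.\ there are taxa $x\in V(C)\cap S$ and $z\in V(C)\setminus S$. I would then pick a path $x=p_0,p_1,\dots,p_\ell=z$ inside $C$ (such a path exists because $C$ is connected in the underlying undirected graph $\Food_{c,0}$) and consider the least index $i$ with $p_i\notin S$. This $i$ is well defined since $p_\ell=z\notin S$, and $i\ge 1$ since $p_0=x\in S$; hence $p_{i-1}\in S$ while $p_i\notin S$, and $\{p_{i-1},p_i\}$ is an edge of $\Food_{c,0}$, so $p_{i-1}p_i$ or $p_ip_{i-1}$ is an arc of $\Food$.

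At this point $p_{i-1}$ is a neighbour of $p_i\in X\setminus S$ in $\Food$, so the fifth condition forces $c(p_{i-1})=1$. On the other hand $p_{i-1}\in V(C)$, and by definition of $\Food_{c,0}$ every vertex of $C$ lies in $c^{-1}(0)$, so $c(p_{i-1})=0$. This contradiction shows that no such split is possible, i.e.\ $V(C)\subseteq S$ or $V(C)\subseteq X\setminus S$.

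I expect no real obstacle here; the only point to handle carefully is the reading of ``neighbour in the food-web'' in the fifth condition, since $\Food$ is directed: the argument relies on an undirected edge of $\Food_{c,0}$ corresponding to an arc of $\Food$ in one of the two directions, so that $p_{i-1}$ counts as a neighbour (prey or predator) of $p_i$. This matches the interpretation intended when the condition was stated, and, unlike \Cref{obs:equivalent-extinction-1}, the proof needs neither viability nor any reference to the phylogenetic tree $\Tree$ — this observation is precisely the food-web analogue of \Cref{obs:equivalent-extinction-1}, with a strictly simpler proof.
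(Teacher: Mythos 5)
Your proof is correct and uses the same key idea as the paper: the fifth condition of \ckPDD forces any surviving neighbour of an extinct taxon to be coloured~$1$, which is impossible inside a component of $\Food_{c,0}$, whose vertices all lie in $c^{-1}(0)$. The paper phrases this as a direct propagation of extinction through the component rather than as your path-plus-first-index contradiction, but the arguments are essentially identical, including the observation that neither viability nor $\Tree$ is needed.
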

\begin{proof}
	If $V(C) \subseteq S$, we are done.
	
	Since $S$ is a solution, each neighbor $y$ of $X\setminus S$ in the food-web holds $c(y) = 1$.
	Therefore, if there is a taxon $x\in V(C)$ which is not in $S$,
	then each neighbor of $x$ is in $X\setminus S$ or in~$c^{-1}(1)$.
	By definition, $C$ is a connected component and $V(C) \subseteq c^{-1}(0)$.
	We conclude that~$V(C) \subseteq X\setminus S$.
\end{proof}

\begin{lemma} \label{lem:c-kbar+outdeg}
	\ckPDD can be solved in $\Oh(n^4)$ time if each taxon has at most one prey.
\end{lemma}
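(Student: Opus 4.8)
The plan is to prove, using that every taxon has at most one prey, that the complements of the solutions of \ckPDD are exactly the unions of certain ``atoms'' with no precedence constraints among them, which reduces the problem to a knapsack with polynomially bounded item sizes.

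\textbf{A structural characterisation.} Write $T:=X\setminus S$ for the set of extinct taxa. Call a taxon $x$ \emph{deletable} if $c(X_{\ge x})=\{0\}$, i.e.\ every taxon reachable from $x$ in \Food is color-$0$. On the set $c^{-1}(0)$, form the undirected graph whose edges join two taxa whenever they are adjacent in $\Food_{c,0}$, or whenever they have a common ancestor $w$ in \Tree with $c(\off(w))=\{0\}$; let its connected components be the \emph{atoms}, and call an atom \emph{deletable} if each of its taxa is. For an inclusion-maximal vertex $w$ with $c(\off(w))=\{0\}$ set $\mathrm{cost}(w):=\sum_{e:\off(e)\subseteq\off(w)}\w(e)$ (such $w$ have pairwise vertex-disjoint subtrees). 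The claim is that $S$ solves \ckPDD exactly when $T$ is a union of deletable atoms $\alpha_1,\dots,\alpha_j$ with $\sum_i |\alpha_i|\ge\kbar$ and $\sum_i \mathrm{loss}(\alpha_i)\le\Dbar$, where $\mathrm{loss}(\alpha)$ is the sum of $\mathrm{cost}(w)$ over the maximal $w$ with $\off(w)\subseteq\alpha$. Granting this, $\PD(X)-\PD(S)$ equals exactly $\sum_i\mathrm{loss}(\alpha_i)$ and $|T|=\sum_i|\alpha_i|$, so both numeric requirements are additive over the chosen atoms.

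\textbf{Why it holds.} For the forward direction: the requirement that all extinct taxa are color-$0$ together with the tree-edge constraint forces $T$ to be a union of the sets $\off(w)$ for maximal all-color-$0$ vertices $w$ — walking up from any extinct taxon, every vertex strictly below its maximal all-color-$0$ ancestor has only color-$0$ offspring, so the tree-edge constraint drags its whole subtree into $T$ — while \Cref{obs:equivalent-extinction-2} forces $T$ to be a union of $\Food_{c,0}$-components; hence $T$ is a union of atoms. That each such atom is deletable is where the single-prey hypothesis enters: if $x\in T$ and $z\in X_{\ge x}$, a directed $\Food$-path from $x$ to $z$ meets each vertex as the unique prey of the next, so viability propagates extinction along it and forces $z\in T$, whence $c(z)=0$. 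For the backward direction: a union $T$ of deletable atoms is color-$0$; every $\Food$-neighbour of $T$ outside $T$ is color-$1$, since a color-$0$ one would lie in the same $\Food_{c,0}$-component, hence the same atom, hence in $T$; the tree-edge constraint and \Cref{obs:equivalent-extinction-1} hold because $T$ is a union of all-color-$0$ subtrees; and $S$ is viable because the unique prey of any $y\in S$, were it extinct, would lie — again using single prey — in a deletable atom together with $y$, contradicting $y\in S$.

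\textbf{The algorithm and the obstacle.} One computes, in polynomial time (a reachability computation per taxon for deletability, a union--find over the two kinds of edges above, and prefix sums over \Tree for the values $\mathrm{cost}(w)$ and hence the $\mathrm{loss}(\alpha_i)$), the deletable atoms together with their sizes $s_1,\dots,s_r$ and losses $\ell_1,\dots,\ell_r$. Since $\sum_i s_i\le n$, deciding whether some index set $I$ has $\sum_{i\in I}s_i\ge\kbar$ and $\sum_{i\in I}\ell_i\le\Dbar$ is a knapsack with polynomially bounded item sizes: a table $\DP[i][j]$ holding the minimum total loss of a subset of the first $i$ atoms of total size exactly $j\in\{0,\dots,n\}$ is filled in $\Oh(rn)$ time, and we answer \yes iff $\DP[r][j]\le\Dbar$ for some $j\ge\kbar$. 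All steps stay within $\Oh(n^4)$ time. The substance of the proof is the forward direction of the characterisation — that viability together with the two colour constraints forces $T$ to be a union of whole \emph{deletable} atoms — and this is precisely where ``at most one prey'' is indispensable, since it is what makes extinction propagate along predator chains; once that rigidity is in place there are no dependencies left between atoms, and the remaining work is only to check that the $\off(w)$-decomposition and the $\Food_{c,0}$-component decomposition really do capture all colour and tree-edge constraints of \ckPDD.
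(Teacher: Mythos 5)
Your proof is correct and follows essentially the same route as the paper's: the paper likewise groups the maximal all-color-$0$ subtrees into components via $\Food_{c,0}$ (its auxiliary graph $G$ and the family $\mathcal{A}$ are exactly your atoms), discards components from which a color-$1$ taxon is reachable (your deletability test, which is precisely where the single-prey hypothesis enters), and solves the result as a \KP instance with value $=$ number of taxa ($\ge\kbar$) and cost $=$ lost diversity ($\le\Dbar$). The only differences are presentational: you define the atoms directly on taxa and solve the knapsack by an explicit size-indexed DP, whereas the paper works with the subtree roots and invokes a pseudo-polynomial \KP algorithm.
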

\begin{proof}
	We reduce an instance $\Instance = (\Tree, \Food, k,D,c)$ of \ckPDD into an instance of \KP, in which the threshold of profit is limited in $\kbar$.
	In \KP we are given a set of items~$\mathcal{A}$, a cost-function $c: \mathcal{A}\mapsto \mathbb{N}$, a value-function $d: \mathcal{A}\mapsto \mathbb{N}$ and two integers $B$---the budget---and~$D$---the required value.
	It is asked whether there is a set $A \subseteq \mathcal{A}$ such that~$c(A) \le B$ and~$d(A) \ge D$.
	\KP is \NP-hard~\cite{karp} and can be solved in $\Oh(D\cdot |\mathcal A|^2)$ time~\cite{rehs}.
	
	\proofpara{Algorithm}
	Compute the set $Z$ of edges $uv$ of \Tree which hold that $\off(v) \subseteq c^{-1}(0)$ and there is a taxon $y\in \off(u)$ with $c(y) = 1$.
	For each $e=uv\in Z$, define integers~$p_v := |\off(v)|$ and~$q_v := \w(e) + \sum_{e'\in E(\Tree_v)} \w(e')$, where $\Tree_v$ is the subtree of \Tree rooted at $v$.
	$p_e$ is the number of taxa and $q_e$ is the diversity that would be lost if all taxa in $\off(v)$ fo extinct.
	Define a graph $G$ on vertex set $V_G := \{ v \mid uv \in Z \}$.
	
	Compute the connected components $C_1,\dots,C_\ell$ of $\Food_{c,0}$.
	Iterate over $C_i$.
	Compute the edges $u_1v_1, \dots, u_qv_q \in Z$ with $\off(v_j) \cap V(C_i) \ne \emptyset$ for $j\in [q]$.
	Make $v_1,\dots,v_q$ a clique in $G$.
	
	Compute the set $\mathcal A$ of connected components of $G$.
	
	Iterate over $v\in V_G$.
	If there is a taxon $y\in c^{-1}(1)$ and in \Food there is a path from any taxon $x\in \off(v)$ to $y$,
	then remove the connected component $A$ from $\mathcal A$.
	
	We define a \KP instance $\Instance' := (\mathcal A,c',d,B,D')$ as follows.
	The set of items is~$\mathcal A$,
	we define the budget~$B$ as $\PD(X) - D$ and the desired profit~$D'$ as $\kbar$.
	The cost- and value-function are defined as follows.
	For an item $A\in \mathcal A$, we define
	$c'(A)$ to be $\sum_{v\in V(A)} q_v$ and
	$d(A)$ to be $\sum_{v\in V(A)} p_v$.
	If $\Instance'$ is a \yes-instance of \KP, we return \yes.
	Otherwise, we return \no.

	\proofpara{Correctness}
	Observe that $\{\off(v) \mid uv\in Z \}$ and $\{C_1,\dots,C_\ell\} =: \mathcal{C}_\Food$ are both partitions of~$c^{-1}(0)$.
	Further, $\mathcal{A}$ is a partition of $Z$.
	For a connected component $A \in \mathcal{A}$ of $G$, we define the set $X(A) := \{ x\in X \mid x\in \off(v), v\in V(A) \}$.
	We show that the instance \Instance of \ckPDD is a \yes-instance if and only if the instance $\Instance'$  of \KP is a \yes-instance.
	
	Let $\Instance'$ be a \yes-instance of \KP.
	So, there is a set $S\subseteq \mathcal{A}$ with $c'(S) \le \PD(X) - D$ and $d(S) \ge \kbar$.
	Let $P\subseteq V_G$ be the union of the vertices $V(A)$ for~$A$ in~$S$ and
	let $Q \subseteq X$ to be the set union of taxa $X(A)$ for~$A$ in~$S$.
	We want to show that $R := X \setminus Q$ is a solution of the instance \Instance of \ckPDD.
	Because $d(S) \le \kbar$ we conclude $\kbar \ge \sum_{v\in P} p_v = \sum_{v\in P} |\off(v)| \ge |Q|$.
	Consequently $R$ has a size of at most $|X| - \kbar = k$.
	By the definition of $Z$, we know that each $v\in P$ has a parent $u$ which holds that there is a taxon $y\in \off(u)$ with $c(y) = 1$.
	Consequently,
	$$
	\PD(R) = \PD(X) - \sum_{v\in P} q_v = \PD(X) - d(S) \ge \PD(X) - (\PD(X) - D) = D.
	$$
	Let $x\in Q$ and $y \in X$ be taxa which hold that $y$ can be reached from $x$ in \Food.
	Because~$x$ is in~$Q$, there is a connected component $A\in S$ of $G$ and a vertex $v\in V(A)$ that holds~$x \in \off(v)$.
	By the construction we know that each taxon which can be reached from~$x$, including~$y$, are colored with~0 because~$A$ would have been removed from~$\mathcal A$ otherwise.
	Consequently, $y$ is in the same connected component as $x$ in $\Food_{c,0}$ and thus there is a vertex~$v' \in V(A)$ that holds~$y\in \off(v')$.
	Consequently $y$ is also in $Q$ and we conclude that $R$ is viable.
	By the definition of~$\mathcal{A}$ we conclude $Q \subseteq c^{-1}(0)$.
	By the construction the taxa of a connected component of $\Food_{c,0}$ are a subset of the union $\bigcup_{v\in V(A)} \off(v)$ for some unique~$A\in \mathcal{A}$.
	Therefore, the neighbors of~$Q$ in~\Food are colored with~1.
	So~$R$ is a solution of the \yes-instance~\Instance of~\ckPDD.

	On the converse, let \Instance be a \yes-instance of \ckPDD with solution $S$.
	Let $Z$ be the set of edges as defined in the algorithm
	and let $C_1,\dots,C_\ell$ be the connected components of $\Food_{c,0}$.
	Let $\mathcal{A}$ be the set of connected components of $G$.
	By \Cref{obs:equivalent-extinction-1} and \Cref{obs:equivalent-extinction-2} we conclude that for each connected component $A$ in $\mathcal{A}$ either $X(A) \subseteq S$ or $X(A) \subseteq X\setminus S$.
	Now let $\mathcal{S} \subseteq \mathcal{A}$ be the set of connected components $A$ of $G$ that hold $X(A) \subseteq X\setminus S$.
	We observe that because $S$ is a solution of \Instance, $X\setminus S$ has a size of $\kbar$ and $\PD(S) \ge D$.
	Therefore $d(\mathcal{S}) = \sum_{A\in \mathcal{S}, v\in A} p_v = \sum_{A\in \mathcal{S}, v\in A} |\off(v)| = |X\setminus S| \ge \kbar = D'$.
	Further $c'(\mathcal{S}) = \sum_{A\in \mathcal{S}, v\in A} q_v = \PD(X) - \PD(S) \ge \PD(X) - D = B$.
	Hence, $\mathcal{S}$ is a solution fo the \yes-instance $\Instance'$ of \KP.

	\proofpara{Running time}
	Observe that because each taxon has at most one prey, $|E(\Food)| \in \Oh(n)$.
	The size of $Z$ is at most $n$ and therefore also the size of $|\mathcal{A}|$.
	All steps in the reduction can be computed in $\Oh(n^4)$ time.
	Defining the instance $\Instance'$ is done in $\Oh(n)$ time.
	Computing whether~$\Instance'$ is a \yes-instance of \KP can be done in $\Oh(\kbar \cdot n^2)$~time.
	Therefore, the overall running time is $\Oh(n^4)$.
\end{proof}

It remains to show how to utilize the result of \Cref{lem:c-kbar+outdeg} to compute a solution for an instance of \PDD.
In order to do so, we resort on the concept of $(n,k)$-universal sets.
An \emph{$(n,k)$-universal set} is a family $\mathcal U$ of subsets of $[n]$ such that for any $S\subseteq [n]$ of size~$k$, $\{A \cap S \mid A \in \mathcal U\}$ contains all $2^k$ subsets of $S$.
For any $n,k \geq 1$, one can construct an~$(n,k)$-universal set of size $2^k k^{\Oh(\log k)}\log n$ in time $2^k k^{\Oh(\log k)}n\log n$~\cite{Naor1995SplittersAN}.
For an overview of color coding, we refer the reader to~\cite[Chapter 5.2]{cygan}.

\begin{proof}[Proof (of \Cref{prop:kbar+indeg})]
	\proofpara{Algorithm}
	Let $\Instance = (\Tree,\Food,k,D)$ be an instance of \PDD.
	Arbitrarily order the taxa $x_1,\dots,x_n \in X$.
	Construct an $(n,3\kbar)$-universal set $\mathcal U$.
	
	Iterate over $A \in \mathcal{U}$, define a $2$-coloring $c_A: X\to \{0,1\}$ by setting $c_A(x_i) := 1$ if and only if $i\in A$.
	Then solve the instances $\Instance_A := (\Tree, \Food, k, D, c_A)$ of \ckPDD.
	Return \yes if there is an $A \in \mathcal{U}$ such that $\Instance_A$ is a \yes-instance. Otherwise, return \no.
	
	\proofpara{Correctness}
	If $\Instance_A$ for some $A\in \mathcal{U}$ is a \yes-instance of \ckPDD with solution $S\subseteq X$, then, by the definition, $S$ is viable, $|S| \le k$ and $\PD(S) \ge D$.
	Therefore, \Instance is a \yes-instance of \PDD with solution $S$.
	
	Assume that \Instance is a \yes-instance of \PDD with solution $S$.
	Define the set $Y := X\setminus S$---the species that die out.
	Observe that $|Y| = |X| - |S| \ge |X| - k = \kbar$.
	Let $u_1v_1,\dots,u_\ell v_\ell$ be the edges in~$E(\Tree)$ that hold $\off(v_i) \subseteq Y$ and there is a taxon $\bar x_i \in \off(u_i) \setminus Y$.
	Define $Z_1 := \{\bar x_1, \dots, \bar x_\ell\}$ and observe $|Z_1| \le \ell \le |Y| \le \kbar$.
	Now let $Z_2$ be the set of neighbors of~$Y$ in~$\Food$.
	Because each taxon has at most one prey, we conclude that if $x\in Y$ then also $\predators{x} \subseteq Y$.
	Therefore, each taxon in $Z_2$ is the prey of at least one taxon of $Y$.
	Consequently, $|Z_2| \le |Y| \le \kbar$.
	Define $Z := Y \cup Z_1 \cup Z_2$ and by the previous we know $|Z| \le 3\kbar$.
	If necessary, fill $Z$ up to contain $3\kbar$ taxa.
	Define $Y' := \{i \mid x_i \in Y\}$ and $Z' := \{i \mid x_i \in Z\}$.
	
	Because $\mathcal{U}$ is an $(n,3\kbar)$-universal set, $\{A \cap S \mid A \in \mathcal U\}$ contains all $2^{3\kbar}$ subsets of $S$ for any $S\subseteq [n]$ of size $3\kbar$.
	Thus, there is an $A^* \in \mathcal{U}$ such that $A \cap Z' = Z'\setminus Y'$.
	Therefore $c_{A^*}$ maps each each $y\in Y$ to 0, each $z\in Z_1 \cup Z_2$ to 1 and each other taxon to any value of $\{0,1\}$.
	We conclude that the instance $\Instance_{A^*} := (\Tree, \Food, k, D, c_{A^*})$ of \ckPDD is a \yes-instance.

	\proofpara{Running Time}
	We can construct an $(n,3\kbar)$-universal set $\mathcal U$ of size $2^{3\kbar}(3\kbar)^{\Oh(\log (3\kbar))}\log n = 2^{3\kbar}2^{\Oh(\log^2 (\kbar))}\log n$ in time $2^{3\kbar}(3\kbar)^{\Oh(\log (3\kbar))}n\log n = 2^{3\kbar} 2^{\Oh(\log^2 (\kbar))}n\log n$~\cite{Naor1995SplittersAN}.
	
	For a given $A\in \mathcal{U}$, we can construct $\Instance_A$ in $\Oh(|A|) = \Oh(n)$ time.
	By \Cref{lem:c-kbar+outdeg}, we can compute whether $\Instance_A$ is a \yes-instance in $\Oh(n^4)$ time.
	Therefore, the overall running time is~$\Oh(2^{3\kbar} 2^{\Oh(\log^2 (\kbar))}n\log n + n^5) = \Oh(2^{3\kbar + o(\kbar)}n\log n + n^5)$.
\end{proof}

\fi

\section{Structural parameters of the food-web}
\label{sec:structural}
In this section, we study how the structure of the food-web affects the complexity of \sPDD and \PDD.
\ifJournal
\Cref{fig:results} and \Cref{tab:results} depict a summary of these complexity results.
\begin{figure}
	\tikzstyle{para}=[rectangle,draw=black,minimum height=.8cm,fill=gray!10,rounded corners=1mm, on grid]
	
	\newcommand{\tworows}[2]{\begin{tabular}{c}{#1}\\{#2}\end{tabular}}
	\newcommand{\threerows}[3]{\begin{tabular}{c}{#1}\\{#2}\\{#3}\end{tabular}}
	\newcommand{\distto}[1]{\tworows{Distance to}{#1}}
	\newcommand{\disttoc}[2]{\threerows{Distance to}{#1}{#2}}
	
	\DeclareRobustCommand{\tikzdot}[1]{\tikz[baseline=-0.6ex]{\node[draw,fill=#1,inner sep=2pt,circle] at (0,0) {};}}
	\DeclareRobustCommand{\tikzdottc}[2]{\tikz[baseline=-0.6ex]{\node[draw,diagonal fill={#1}{#2},inner sep=2pt,circle] at (0,0) {};}}
	
	\definecolor{r}{rgb}{1, 0.6, 0.6}
	\definecolor{g}{rgb}{0.8, 1, 0.7}
	\definecolor{grey}{rgb}{0.9453, 0.9453, 0.9453}
	\definecolor{grey2}{rgb}{0.85, 0.85, 0.85}
	
	\tikzset{
		diagonal fill/.style 2 args={fill=#2, path picture={
				\fill[#1, sharp corners] (path picture bounding box.south west) -|
				(path picture bounding box.north east) -- cycle;}},
		reversed diagonal fill/.style 2 args={fill=#2, path picture={
				\fill[#1, sharp corners] (path picture bounding box.north west) |- 
				(path picture bounding box.south east) -- cycle;}}
	}
	\centering
	\begin{tikzpicture}[node distance=2*0.45cm and 3.7*0.38cm, every node/.style={scale=0.57}]
		\linespread{1}
		% first
		\node[para,fill=g] (vc) {Minimum Vertex Cover};
		\node[para, diagonal fill=gr, xshift=35mm] (ml) [right=of vc] {Max Leaf \#};
		\node[para, xshift=-20mm,fill=g] (dc) [left=of vc] {\distto{Clique}};
		
		% second
		\node[para, fill=g, xshift=-12.5mm] (dcc) [below= of dc] {\distto{Co-Cluster}}
		edge (dc)
		edge[bend left=4] (vc);
		\node[para, diagonal fill=gr,xshift=20mm] (dcl) [below= of dc] {\distto{Cluster}}
		edge (dc)
		edge (vc);
		\node[para, diagonal fill=gr, xshift=8mm] (ddp) [below=of vc] {\distto{disjoint Paths}}
		edge (vc)
		edge (ml);
		\node[para,diagonal fill=gr] (fes) [below =of ml] {\tworows{Feedback}{Edge Set}}
		edge (ml);
		\node[para,diagonal fill=gr] (bw) [below right=of ml] {Bandwidth}
		edge (ml);
		\node[para, xshift=3.5mm, yshift=-0mm,diagonal fill=gr] (td) [right=of ddp] {Treedepth}
		edge[bend right=28] (vc);
		
		% third
		\node[para,diagonal fill=gr] (fvs) [below= of ddp] {\tworows{Feedback}{Vertex Set}}
		edge (ddp)
		edge[bend right=5] (fes);
		\node[para,fill=r] (mxd) [below= of bw] {\tworows{Maximum}{Degree}}
		edge (bw);
		\node[para, xshift=8mm, diagonal fill=gr] (pw) [below= of td] {Pathwidth}
		edge (ddp)
		edge (td)
		edge[bend right=8] (bw);
		
		% fourth
		\node[para, yshift=0mm, diagonal fill=gr] (tw) [below=of pw] {Treewidth}
		edge (fvs)
		edge (pw);
		\node[para, xshift=5mm, fill=r] (dbp) [below left= of fvs] {\distto{Bipartite}}
		edge (fvs);

		\node[para, yshift=-20mm, diagonal fill={white}{grey2}] [below= of dc] {\tworows{{\PDD}\;\;\;\;\;\;\;\;\;\;\;}{\;\;\;\;\;\;\;\;\;\;\sPDD}};
	\end{tikzpicture}
	\caption{
		This figure depicts the relationship between a structural parameter of the food-web and the complexity of solving \PDD and \sPDD.
		A parameter~$p$ is marked
		green (\tikzdot{g}) if \PDD admits an \FPT-algorithm with respect to~$p$,
		red (\tikzdot{r}) if \sPDD is \NP-hard for constant values of~$p$,
		or red and green~(\tikzdottc gr) if \PDD is \NP-hard for constant values of~$p$ while \sPDD admits an \FPT-algorithm with respect to~$p$.
		We do not know the status for parameters with white boxes.
		Two parameters~$p_1$ and~$p_2$ are connected with an edge if in every graph the parameter~$p_1$ further up van be bounded by a function in $p_2$.
		A more in-depth look into the hierarchy of graph-parameters can be found in~\cite{graphparameters}.
	}
	\label{fig:results}
\end{figure}

Some of these results are already shown by Faller et al.~\cite{faller}.
These include that \PDD remains \NP-hard on instances in which the food-web is a bipartite graph~\cite{faller} and \sPDD is already \NP-hard if the food-web is a tree of height three~\cite{faller}.
Further, Faller et al.~\cite{faller} also gave a reduction from \VC to \PDD.
Because \VC is \NP-hard for graphs of maximum degree three~\cite{mohar}, also the constructed food-web has a maximum degree of three.

In the next subsection, we have an in-depth look into the parameterization by the distance to cluster, also called cluster vertex deletion number~($\distclust$) of the food-web.
There we show that \PDD is \NP-hard even if the food-web is a cluster graph but \sPDD is \FPT when parameterized by~$\distclust$.
Afterward, we show that \PDD is \FPT when parameterized by the distance to co-cluster~$(\distcoclust)$
and that \sPDD is \FPT with respect to the treewith~$(\tw)$ of the food-web.
Consequently, \sPDD can be solved in polynomial time if the food-web is a tree, disproving Conjecture~4.2. in~\cite{faller}.

\else
In the first subsection, we have an in-depth look into the parameterization with respect to the distance of the food-web to a cluster graph, denoted~$\distclust$.
We show that \PDD is \NP-hard even if the food-web is a cluster graph but \sPDD is \FPT when parameterized by~$\distclust$.
Afterward, we show that \PDD is \FPT with respect to the distance to co-cluster and 
\sPDD is \FPT with respect to the treewith of the food-web, donoted by~$\tw$.
\fi

\subsection{Distance to cluster}
In this subsection, we consider the special case that given an instance of \PDD or \sPDD, we need to remove few vertices from the undirected underlying graph of the food-web \Food to receive a cluster graph. Here, a graph is a cluster graph if every connected component is a clique. 
Because \Food is acyclic, we have the following: Every clique in~\Food has exactly one vertex $v_0 \in V(C)$ such that $v_0 \in \prey{v}$ for each $v\in V(C)\setminus \{v_0\}$.
After applying \Cref{rr:redundant-edges} exhaustively to a cluster graph, each connected component of the food-web is an out-star.

\ifJournal
In this subsection we use the following classification of instances.
Recall that~$\spannbaum{Y}$ is the spanning tree of the vertices in~$Y$.
\begin{definition}
An instance $\Instance = (\Tree, \Food, k, D)$ of \PDD or \sPDD is \emph{source-separating} if $\spannbaum{\{\rho\} \cup \sources}$ and $\spannbaum{\{\rho\} \cup (X\setminus \sources)}$ only have $\rho$ as common vertex.
Here, $\rho$ is the root of \Tree.
\end{definition}
\Cref{fig:dist-cluster-flow} depicts in (1) an example of a source-separating instance.
\fi

In \Cref{thm:dist-cluster-FPT}, we show that \sPDD is \FPT with respect to the distance to cluster.
Herein, for a graph $G = (V,E)$ the \emph{distance to cluster $\distclust(G)$} is the smallest number~$d$ such that there exists a set $Y\subseteq V$ of size at most $d$ such that $G - Y$ is a cluster graph. \todok{Does it also work with parameter distance to out-stars? Does it also work with distance to ``every component has a dominating source''?}
\ifJournal
Afterward, we show that \PDD however is \NP-hard on cluster graphs, even if the instance is source-separating.
\else
Afterward, we show that \PDD however is \NP-hard on cluster graphs.
\fi

\begin{theorem}
	\label{thm:dist-cluster-FPT}
	\sPDD can be solved in $\Oh(6^d \cdot n^2 \cdot m \cdot k^2)$~time, when we are given a set~$Y\subseteq X$ of size $d$ such that $\Food-Y$ is a cluster graph.
\end{theorem}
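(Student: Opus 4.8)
The plan is to branch on the behaviour of the modulator $Y$ and then, for each branch, reduce to a weighted flow or matching problem on the cluster graph $\Food - Y$. First I would apply Reduction Rule~\ref{rr:redundant-edges} exhaustively, so that every connected component of $\Food - Y$ is an out-star with a unique source. Since $|Y| = d$, a solution $S$ interacts with $Y$ in a limited way: guess the set $Y_S := S \cap Y$ and, for each vertex $y \in Y_S$, guess whether $y$ is ``fed'' in $S$ by another vertex of $Y_S$ or by a vertex outside $Y$ (and, for the latter, a specific component of $\Food-Y$ from which a prey is taken). Rather than iterating over all $2^{2d}$ or worse possibilities crudely, the $6^d$ factor suggests assigning each vertex of $Y$ one of a constant number of rôles --- roughly: (i) not in $S$; (ii) in $S$ and a source of $\Food$; (iii) in $S$, fed from within $Y_S$; (iv) in $S$, fed from outside $Y$ --- together with the direction information needed to verify viability inside $\Food[Y_S]$. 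For each such role assignment one checks in polynomial time that $Y_S$ together with the chosen external-feeding requirements is internally consistent (i.e.\ that $\Food[Y_S]$ with the declared feeders forms a forest rooted at declared sources, cf.\ Observation~\ref{obs:viable}).

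Having fixed the interface, the remaining task is: choose a set $S' \subseteq X \setminus Y$ with $|S'| \le k - |Y_S|$ that (a)~is viable ``relative to $Y_S$'', meaning every vertex of $S'$ is either a source of $\Food$, has a prey in $S'$, or has a prey in $Y_S$; (b)~for each component $C$ of $\Food - Y$ that was declared to supply a feeder to some $y \in Y_S$, contains at least one suitable prey of $y$ from $C$; and (c)~maximizes $\PD(S \cup S')$. Because $\Food - Y$ is a disjoint union of out-stars, within each component $C$ the viable subsets are exactly: the empty set, or the source of $C$ together with any subset of its leaves. So the choice within each component is ``how many leaves to take, plus whether the source is forced in''. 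Since $\Tree$ is a \emph{star} (we are in the \sPDD setting), $\PD$ is additive over taxa: $\PD(A) = \sum_{x \in A} \w(\rho x)$. Hence within a component we would greedily prefer the heaviest leaves, but the global budget $k$ couples the components, and the forced-source / forced-feeder constraints couple them to $Y_S$.

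This coupling is handled by a dynamic program over the components of $\Food - Y$: a table entry $\DP[i, \kappa, \sigma]$ stores, after processing the first $i$ components, the maximum total weight of taxa selected using exactly $\kappa$ of the budget, where $\sigma$ is a bit-vector over $Y_S$ (or rather over the ``feeder obligations'') recording which external-feeding obligations have been satisfied so far. For each component one enumerates its relevant options --- which of the $O(n)$ leaves to take, capped by the remaining budget --- sorted by weight so that for a fixed count the optimum choice is the prefix of heaviest leaves, and whether taking a particular heavy leaf or the source discharges an obligation of some $y \in Y_S$. The number of obligation bit-vectors is at most $2^{|Y_S|} \le 2^d$, the budget index runs over $[k]_0$, there are $O(n)$ components each with $O(n)$ leaf-count options each needing $O(k)$ work to combine with $O(2^d)$ obligation states, which together with the outer $6^d$ branching and an $O(m)$ overhead for viability/feeder checks yields the claimed $\Oh(6^d \cdot n^2 \cdot m \cdot k^2)$ bound. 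Correctness follows by arguing both directions against Observation~\ref{obs:viable} and Observation~\ref{obs:solution-size}: any solution decomposes as $Y_S \cup S'$ with $S'$ respecting the component structure, and conversely any DP-selected set is viable because each chosen non-source has a prey either in its own out-star component or among the declared feeders in $Y_S$.

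The main obstacle I expect is bookkeeping the interaction between $Y_S$ and the cluster components cleanly enough that the role count stays at $6$ (not something larger) while still capturing all ways viability can be witnessed: in particular, a vertex of $Y_S$ might be fed by a component whose source is \emph{not} otherwise wanted, so ``declaring a feeder obligation'' must force that source into $S'$ at a cost, and one must be careful that a single component can serve several obligations simultaneously (any leaf/source of it that lies in $S'$ feeds every $y \in Y_S$ adjacent to it). Getting the obligation-satisfaction semantics right in the DP transition --- and confirming that restricting to ``prefix of heaviest leaves'' loses no optimal solution even in the presence of these forced vertices --- is the delicate part; everything else is routine.
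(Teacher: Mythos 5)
Your overall architecture matches the paper's: enumerate the trace $Y_S = S\cap Y$ on the modulator (a $2^d$ factor; note that once $Y_S$ is fixed, which of its members need an external prey is \emph{determined}, not a further guess), then run a dynamic program over the components of the cluster graph $\Food-Y$ whose state is the remaining budget together with the subset of modulator vertices whose prey requirement has already been met, combining components by enumerating subsets of that state. This is exactly how the paper obtains $2^d\cdot 3^d=6^d$ (\Cref{lem:dist-cluster-FPT} plus the outer enumeration). Your own accounting, ``outer $6^d$ branching'' times ``$2^d$ obligation vectors,'' would give $12^d$; the role guess and the obligation states have to be counted jointly, as above, to stay within the claimed bound.

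The genuine gap is in the within-component step. First, \Cref{rr:redundant-edges} does \emph{not} turn the components of $\Food-Y$ into out-stars: to delete an edge $vw$ the rule requires \emph{every} prey of $v$, including prey inside $Y$, to also be prey of $w$, so in the presence of the modulator a component remains, in general, a transitive tournament whose vertices may additionally have private prey in $Y$. Second, and more seriously, the claim that the viable subsets of a component are ``empty, or the source plus any subset of leaves,'' and the ensuing prefix-of-heaviest-leaves greedy, are not sound under viability relative to $Y_S$: a non-minimal vertex of a component may be selected without the component's bottom vertex if it has a prey in $Y_S$; the bottom vertex itself need not be a source of \Food (its prey may all lie in $Y$) and then itself needs a prey in $Y_S$; and when one component must discharge feeder obligations for several $y\in Y_S$, the optimal selection is not a weight prefix but must contain at least one prey of each obligated $y$, a small covering constraint that couples the obligations. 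You flag this as the delicate point, but it is precisely where a proof is required. The paper resolves it by abandoning any greedy choice: it runs an inner DP along the acyclic order of each clique with a Boolean flag recording whether some earlier vertex of the component has already been taken (so viability inside a component reduces to a condition on the first selected vertex, or to having prey in the saved part of $Y$), and it distributes the fed subset of $Y$ over components by subset enumeration. Replacing your greedy with such a per-component DP repairs the argument; as written, it does not go through.
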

\ifJournal
To prove this theorem, we first consider a special case in the following auxiliary lemma.
\else
To prove this theorem, we first show how to solve the case where we want to save all taxa in~$Y$.
We then show the theorem by reducing the general case to this special case.  This can be done as follows: We iterate over the~$\Oh(2^d)$ subsets of~$Y$. For each each subset~$Z\subseteq Y$, compute whether there is a solution~$S$ for \Instance with~$S \cap Y  = Z$. 
\fi
\begin{lemma}
	\label{lem:dist-cluster-FPT}
	Given an instance $\Instance = (\Tree,\Food,k,D)$ of \sPDD and a set $Y\subseteq X$ of size~$d$ such that $\Food-Y$ is a cluster graph, we can compute whether there is a viable set $S\cup Y$ with~$|S \cup Y|\le k$ and $\PD(S\cup Y) \ge D$ in~$\Oh(3^d \cdot n \cdot k^2)$~time.
\end{lemma}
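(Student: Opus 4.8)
The plan is to describe a dynamic programming algorithm over the food-web that, unlike the earlier algorithms in the paper, does not need to color-code the solution because the budget~$k$ now plays no role in the number of states: the key structural fact is that once we delete the fixed set~$Y$, every connected component of $\Food-Y$ is a clique, hence (after \Cref{rr:redundant-edges}) an out-star with a unique source. First I would fix, once and for all, that we \emph{must} include all of~$Y$ in the solution, so the task becomes: find a $\subseteq$-minimal-cost way to ``support'' every vertex of~$Y$ (make the chosen set viable) while also picking enough additional taxa to reach diversity~$D$, subject to $|S\cup Y|\le k$. Since \Tree is a star (this is \sPDD), $\PD(A)$ is simply $\sum_{x\in A}\w(\rho x)$, so diversity is just an additive weight over chosen taxa, and the whole problem decomposes nicely over the food-web.

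**The dynamic program.**
I would process the food-web in reverse topological order. The crucial observation is that a set is viable iff each vertex either is a source or has a prey in the set (\Cref{obs:viable}); once $Y$ is fixed to be included, the only coupling between the cliques of $\Food-Y$ comes through edges incident to~$Y$. So I would define a table indexed by: a vertex $v$ of $\Food$, a subset $Y'\subseteq Y$ (the vertices of $Y$ that are "already scheduled to be saved and whose support still needs to be routed through the part of the food-web at or below $v$''), and an integer $j\le k$ counting how many taxa have been selected so far among the descendants processed. The value stored is the maximum achievable diversity. The $2^d$ factor comes from the subset $Y'$, the $k^2$ factor from the two budget-type counters (size and, if needed, a running diversity counter), and the $3^d$ in the statement comes from the familiar ``each element of $Y$ is in $Y'\setminus Y''$, in $Y''$, or in neither'' split when combining a vertex with its several predators via a recurrence exactly in the style of \Recc{eqn:recurrence-k}. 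For vertices inside a single clique $C$ of $\Food-Y$, the support structure is trivial — picking the clique's source lets us pick any subset of the rest — so within a clique we only need a cheap inner DP that, for each target number of chosen vertices, records the best total $\w(\rho\cdot)$-weight; this is a simple sort-and-prefix computation costing $\Oh(n\log n)$ or $\Oh(nk)$ per clique. The global combination then walks over the cliques and over $Y$, threading the $Y'$-subsets through, at cost $\Oh(3^d\cdot n\cdot k^2)$.

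**Correctness and the main obstacle.**
Correctness would follow the template already used for \Cref{lem:k-stars}: show by induction in topological order that each table entry equals the claimed optimum, proving both directions (a witnessing viable set of the right diversity exists $\Rightarrow$ the entry is at least that large, and conversely any entry value is realized by an actual viable set). The only genuinely delicate point — and the part I expect to be the main obstacle — is bookkeeping the viability constraint \emph{across} the boundary between $Y$ and the cliques: a vertex $y\in Y$ may be fed by a vertex living in a clique, and a clique-vertex may be fed by a vertex of $Y$, so the ``is this vertex supported'' flag has to be passed correctly in both directions, and we must be careful that when we commit to saving a clique we account for its source's own support (which might itself come from $Y$). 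Handling this cleanly is exactly what the subset $Y'$ in the state is for: $Y'$ records which $Y$-vertices are being "charged'' to the current subtree for their support, and the recurrence only declares a configuration valid once every $y\in Y$ has been supported somewhere. With \Cref{rr:redundant-edges} applied first (so cliques really are out-stars), the within-clique reasoning collapses to the trivial case, and the cross-boundary reasoning is confined to $\Oh(d)$ special vertices, which is why the exponential dependence is only on~$d$. Finally, \Cref{thm:dist-cluster-FPT} follows by iterating this lemma over the $\Oh(2^d)$ subsets $Z\subseteq Y$ that we guess to be $S\cap Y$, deleting the rest of $Y$ and its incident arcs and recursing on the smaller instance — contributing the extra $2^d$ factor and the extra factors of $n$ and $m$ from the per-subset preprocessing.
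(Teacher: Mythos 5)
Your overall plan---thread a subset of $Y$ through a dynamic program that walks over the connected components of $\Food-Y$, exploit that diversity is additive because \Tree is a star, and pay $3^d$ for recombining subsets of $Y$ plus factors of $k$ for the size budget---is essentially the route the paper takes. The gap is in your within-clique step. You claim that inside a clique ``picking the clique's source lets us pick any subset of the rest'', so a sort-and-prefix table keyed only on the number of chosen vertices suffices. That is not enough, for two reasons. First, \emph{which} clique vertices you select determines which vertices of $Y$ receive prey support from this clique (a $y\in Y$ may have all of its prey spread over one or several cliques), so the per-clique table must additionally be indexed by the subset of $Y$ that the selection supports; otherwise the global combination over cliques cannot certify that every $y\in Y$ is viable in the end. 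Second, a non-source clique vertex can also be enabled by an arc from $Y$ into the clique, so selections omitting the clique source are feasible, and one must track whether a partial selection still needs an in-clique supporter. The paper's inner table $\DP_i[j,b,\ell,Z]$ carries exactly these two extra pieces of information (the supported subset $Z\subseteq Y$ and a flag $b$); with only ``best weight per count'' your inner routine would either reject feasible solutions or certify sets that are not viable.

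Two smaller inaccuracies: the cross-boundary interaction is not ``confined to $\Oh(d)$ special vertices''---a single $y\in Y$ can be adjacent to arbitrarily many clique vertices, which is precisely why the supported-subset index must appear in every per-clique entry; and the $k^2$ in the running time does not come from a ``running diversity counter'' (diversity is the stored value, not an index), but from guessing how the size budget is split between the already-processed components and the current one when the component tables are combined.
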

\begin{proof}
	We provide a dynamic programming algorithm. 
	Let $C_1,\dots,C_c$ be the connected components of $\Food-Y$ and
	let $x_1^{(i)},\dots,x_{|C_i|}^{(i)}$ be an order of $C_i$ such that $(x_{j_1}^{(i)},x_{j_2}^{(i)}) \in E(\Food)$ for~$j_1<j_2$.
	
	\proofpara{Table definition}
	A set $S \subseteq X\setminus Y$ of taxa is \emph{$(\ell,Z)$-feasible}, if $|S| \le \ell$ and $S\cup Z$ is viable.
	The dynamic programming algorithm has tables $\DP$ and $\DP_i$ for each~$i\in [c]$.
	The entry $\DP[i,\ell,Z]$  for~$i\in [c]$, $j\in [|C_i|]$, $\ell\in [k]_0$, $Z \subseteq Y$, and $b\in \{0,1\}$ stores the largest phylogenetic diversity~$\PD(S)$ of an $(\ell,Z)$-feasible set $S \subseteq C_1 \cup \dots \cup C_{i}$ and $-\infty$ if no such set $S$ exists.

	The table entries~$\DP_i[j,b,\ell,Z]$ additionally have a dimension~$b$ with $b\in \{0,1\}$.
	For~$b=0$, an entry $\DP_i[j,b,\ell,Z]$ with $b\in \{0,1\}$ stores the largest phylogenetic diversity~$\PD(S)$ of an $(\ell,Z)$-feasible set $S \subseteq \{ x_1^{(i)}, \dots, x_{j}^{(i)} \}$.
	For $b=1$, additionally some vertex~$v_{j'}^{(i)}$ with~$j' < j$ needs to be contained in $S$.
	
	\proofpara{Algorithm}
	Iterate over the edges of $\Food$.
	For each edge $uv\in E(\Food)$ with $u,v\in Y$, remove all edges incoming at $v$, including $uv$, from $E(\Food)$. After this removal,~$v$ is a new source.
	
	We initialize the base cases of $\DP_i$ by setting $\DP_i[j,0,0,Z]:=0$ for each $i\in [c]$, each~$j\in [|C_i|]$, and every~$Z \subseteq \sources$.
	Moreover,~$\DP_i[1,b,\ell,Z]:=\w(\rho v_1^{(i)})$ if $\ell \ge 1$ and~$Z \subseteq \predators{v_1^{(i)}} \cup \sources$; and $\DP_i[1,b,\ell,Z]:=-\infty$, otherwise.
	
	To compute further values for $j\in [|C_i|-1]$, $b\in \{0,1\}$, and $\ell \in [k]$ we use the recurrences
	\begin{equation}
		\label{eqn:recurrence-dist-cluster-1}
		\DP_i[j+1, b, \ell, Z] = \max\{ \DP_i[j, b, \ell, Z], \DP_i[j, b', \ell-1, Z \setminus \predators{v_{j+1}^{{(i)}}}] + \w(\rho v_{j+1}^{{(i)}})\},
	\end{equation}
	where~$b' = 0$ if there is an edge from a vertex in~$Y$ to~$x_{j+1}^{(i)}$ and otherwise~$b' = 1$.
	
	Finally, we set $\DP[1,\ell,Z] := \DP_1[|C_1|,0,\ell,Z]$ and compute further values with
	\begin{equation}
		\label{eqn:recurrence-dist-cluster-2}
		\DP[i+1, \ell, Z] = \max_{Z' \subseteq Z, \ell' \in [\ell]_0} \DP[i, \ell', Z'] + \DP_{i+1}[|C_{i+1}|, 0, \ell-\ell', Z\setminus Z'].
	\end{equation}
	
	There is a viable set $S\cup Y$ with $|S \cup Y|\le k$ and $\PD(S\cup Y) \ge D$ if and only if~$\DP[c, k-|Y|, Z] \ge D - \PD(Y)$.
	
	\proofpara{Correctness}
	Assume that $\DP$ stores the intended values.
	Then, if~$\DP[c, k-|Y|, Z] \ge D - \PD(Y)$, there is an $(\ell,Z)$-feasible set $S \subseteq X\setminus Y$.
	First, this implies that  $S\cup Y$ is viable, Moreover, since $S$ has size at most $k-|Y|$, we obtain $|S \cup Y| = k$.
	Finally, because~\Tree is a star, $\PD(S) \ge D - \PD(Y)$ implies $\PD(S \cup Y) \ge D$.
	The converse direction can be shown analogously.
	
	It remains to show that $\DP$ and $\DP_i$ store the right values.
	The basic cases are correct.
	Towards the correctness of \Recc{eqn:recurrence-dist-cluster-1},
	as an induction hypothesis, assume that $\DP_i[j,b,\ell,Z]$ stores the desired value %the biggest phylogenetic diversity of an $(\ell,Z)$-feasible set $S\subseteq \{ x_1^{(i)}, \dots, x_{j}^{(i)} \}$ with $S \ne \emptyset$ if $b=1$
	for a fixed $j\in [|C_i|-1]$, each $i\in [c]$, $b\in \{0,1\}$, $\ell\in [k]_0$ and every~$Z\subseteq Y$.
	Let $\DP_i[j+1,b,\ell,Z]$ store $d$.
	We show that there is an $(\ell,Z)$-feasible set $S \subseteq \{ x_1^{(i)}, \dots, x_{j+1}^{(i)} \}$.
	By \Recc{eqn:recurrence-dist-cluster-1}, 
	$\DP_i[j, b, \ell, Z]$ stores~$d$ or 
	$\DP_i[j, 1, \ell-1, Z \setminus \predators{v_{j+1}^{{(i)}}}]$ stores $d - \w(\rho v_{j+1}^{{(i)}})$.
	If $\DP_i[j, b, \ell, Z]$ stores $d$ then there is an $(\ell,Z)$-feasible set $S \subseteq \{ x_1^{(i)}, \dots, x_{j}^{(i)} \} \subseteq \{ x_1^{(i)}, \dots, x_{j+1}^{(i)} \}$.
	If $\DP_i[j, 1, \ell-1, Z \setminus \predators{v_{j+1}^{{(i)}}}]$ stores $d - \w(\rho v_{j+1}^{{(i)}})$ then there is an 
	$(\ell-1,Z \setminus \predators{v_{j+1}^{{(i)}}})$-feasible set $S \subseteq \{ x_1^{(i)}, \dots, x_{j}^{(i)} \}$ containing $x_1^{(i)}$ or $x_{j'}^{(i)} \in \predators{Y}$.
	Consequently, also $S \cup \{x_{j+1}^{(i)}\}$ is $(\ell,Z)$-feasible.
	
	Now, let $S \subseteq \{ x_1^{(i)}, \dots, x_{j+1}^{(i)} \}$ be an $(\ell,Z)$-feasible set.
	We show that $\DP_i[j+1,b,\ell,Z]$ stores at least $\PD(S)$.
	If $S \subseteq \{ x_1^{(i)}, \dots, x_{j}^{(i)} \}$ then we know from the induction hypothesis that $\DP_i[j,b,\ell,Z]$ stores $\PD(S)$ and then also $\DP_i[j+1,b,\ell,Z]$ stores $\PD(S)$.
	If $x_{j+1}^{(i)} \in S$, then $S$ contains $x_{1}^{(i)}$ or some~$x_{j'}^{(i)} \in \predators{Y}$.
	Define $S' := S \setminus \{x_{j+1}^{(i)}\}$.
	Then, $|S'| = \ell-1$ and $S' \cup (Z \setminus \predators{x_{j+1}^{(i)}})$ is viable because $S$ is $(\ell,Z)$-feasible.
	Consequently, $\DP_i[j,1,\ell-1,Z \setminus \predators{x_{j+1}^{(i)}}] \ge \PD(S') = \PD(S) - \w(\rho x_{j+1}^{(i)})$.
	Therefore, $\DP_i[j+1,b,\ell,Z] \ge \PD(S)$.

	Now, we focus on the correctness of \Recc{eqn:recurrence-dist-cluster-2}.
	Let $\DP[i+1,\ell,Z]$ store $d$.
	We show that there is an $(\ell,Z)$-feasible set $S\subseteq C_1 \cup \dots \cup C_{i+1}$ with $\PD(S) = d$.
	Because $\DP[i+1,\ell,Z]$ stores $d$, by \Recc{eqn:recurrence-dist-cluster-2}, there are $Z'\subseteq Z$ and $\ell' \in [\ell]_0$ such that $\DP[i, \ell', Z'] = d_1$, $\DP_{i+1}[|C_{i+1}|, 0, \ell-\ell', Z\setminus Z'] = d_2$ and $d_1 + d_2 = d$.
	By the induction hypothesis, there is an~$(\ell',Z')$-feasible set $S_1 \subseteq C_1 \cup \dots \cup C_{i}$
	and an $(\ell-\ell',Z\setminus Z')$-feasible set $S_2 \subseteq C_{i+1}$
	such that~$\PD(S_1) = d_1$ and~$\PD(S_2) = d_2$.
	Then, $S := S_1 \cup S_2$ holds $|S| \le |S_1| + |S_2| \le \ell' + (\ell - \ell') = \ell$.
	Further, because $Y$ has no outgoing edges $Z' \subseteq \predators{S_1} \cup \sources$ and~$Z\setminus Z' \subseteq \predators{S_2} \cup \sources$.
	Therefore, $Z \subseteq \predators{S} \cup \sources$ and $S\cup Z$ is viable.
	We conclude that~$S$ is the desired set.
	
	Let there be an $(\ell,Z)$-feasible set $S\subseteq C_1 \cup \dots \cup C_{i+1}$ with $\PD(S) = d$.
	We show that $\DP[i+1,\ell,Z]$ stores at least $d$.
	Define $S_1 := S \cap (C_1 \cup \dots \cup C_{i})$ and~$Z' := \predators{S_1} \cap Z$.
	We conclude that $S_1\cap Z'$ is viable.
	Then, $S_1$ is $(\ell',Z')$-feasible, where~$\ell' := |S_1|$.
	Define~$S_2 := S \cap C_{i+1} = S \setminus S_1$.
	Because $S \cup Z$ is viable and $Z$ does not have outgoing edges, we know that $Z \subseteq \predators{S} \cup \sources$.
	So, $Z \setminus Z' \subseteq \predators{S_2} \cup \sources$ and because $|S_2| = |S| - |S_1| = \ell - \ell'$
	we conclude that $S_2$ is $(\ell - \ell',Z \setminus Z')$-feasible.
	Consequently, $\DP[i,\ell',Z'] \ge \PD(S_1)$ and $\DP_{i+1}[|C_{i+1}|,\ell-\ell',Z\setminus Z'] \ge \PD(S_2)$.
	Hence, $\DP[i+1,\ell,Z] \ge \PD(S_1) + \PD(S_2) = \PD(S)$ because \Tree is a star.
	
	\proofpara{Running time}
	The tables $\DP$ and $\DP_i$ for $i\in [c]$ have $\Oh(2^d \cdot n \cdot k)$ entries in total.
	Whether one of the basic cases applies can be checked in linear time.
	We can compute the set $Z \setminus \predators{x}$ for any given $Z\subseteq Y$ and $x\in X$ in $\Oh(d^2)$ time.
	Therefore, the $\Oh(2^d \cdot n \cdot k)$ times we need to apply \Recc{eqn:recurrence-D-pre} consume $\Oh(2^d d^2 \cdot n \cdot k)$ time in total.
	In \Recc{eqn:recurrence-D}, each $x\in Y$ can be in $Z'$, in $Z\setminus Z'$ or in $Y\setminus Z$ so that we can compute all the table entries of $\DP$ in~$\Oh(3^d \cdot n \cdot k^2)$ which is also the overall running time.\todos{Use subset convolutions?}
\end{proof}

\ifJournal
\begin{proof}[Proof (of \Cref{thm:dist-cluster-FPT})]
	\proofpara{Algorithm}
	We iterate over all subsets $Z$ of $Y$.
	Define $R_0 := Y \setminus Z$.
	We want that $Z$ is the set of taxa that are surviving while the taxa in $R_0$ are going extinct.
	Compute the set $R \subseteq X$ of taxa $r$ for which in $\Food$ each path from $r$ to $s$ contains a taxon of~$R_0$, for each $s \in \sources$.
	Compute $\Tree_Z := \Tree - R$ and $\Food_Z := \Food - R$.
	Continue with the next~$Z$ if~$Z \cap R \ne \emptyset$.
	
	Compute with \Cref{lem:dist-cluster-FPT} whether in $\Instance = (\Tree_Z, \Food_Z, k, D)$ there is a viable set $S' = S\cup Z$ such that $|S'| \le k$ and $\PD(S') \ge D$.
	Return \yes, if such a set exists.
	Otherwise, continue with the next subset~$Z$ of $Y$.
	After iterating over all subsets $Z$ of $Y$, return \no.
	
	\proofpara{Correctness}
	Let $Z$ be a given subset of $Y$.
	Assume that there is a solution $S$ with $Y\cap S = Z$.
	Let $x$ be a vertex such that each path from $x$ to any source $s\in \sources{}$ contains a vertex of $Y \setminus Z$.
	Because $S$ is viable, $x \not \in S$.
	
	The rest of the correctness follows by \Cref{lem:dist-cluster-FPT}.
	
	\proofpara{Running time}
	For each subset $Z$ of $Y$, we can compute $\Tree_Z$ and $\Food_Z$ in $\Oh(m\cdot n)$ time.
	By \Cref{lem:dist-cluster-FPT}, we can compute whether there is a solution $S$ with $Y\cap S = Z$ in $\Oh(3^d \cdot n \cdot k^2)$~time.
	Therefore, the overall running time is $\Oh(6^d \cdot n^2 \cdot m \cdot k^2)$.
\end{proof}
\fi

Next, we show that, in contrast to \sPDD, \PDD is \NP-hard even when the food-web is restricted to be a cluster graph. We obtain this hardness
by a reduction from \VC on cubic graphs.
In cubic graphs, the degree of each vertex is \emph{exactly} three.
In \VC we are given an undirected graph~$G = (V,E)$ and an integer~$k$ and ask whether a set~$C\subseteq V$ of size at most~$k$ exists such that~$u\in C$ or~$v\in C$ for each~$\{u,v\} \in E$.
The set~$C$ is called a \emph{vertex cover}.
\VC remains \NP-hard on cubic graphs~\cite{mohar}.

\begin{theorem}
	\label{thm:dist-cluster-hardness}
	\PDD is \NP-hard
	\ifJournal
	on source-separating instances in which the food-web is a cluster graph
	and each connected component has at most four vertices.
	\else
	even if the food-web is a cluster graph.
	\fi
\end{theorem}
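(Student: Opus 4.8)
The plan is to reduce from \VC on cubic graphs, which is \NP-hard~\cite{mohar}. Let $(G=(V,E),k)$ be a cubic instance; write $n:=|V|$, $m:=|E|$, and let $E(v)$ denote the three edges incident with a vertex~$v$. I would build an instance $\Instance'=(\Tree,\Food,k',D')$ of \PDD as follows. The taxon set is $X:=\{s_v : v\in V\}\cup\{f_v^e : v\in V,\ e\in E(v)\}$, so $|X|=n+2m$. The food-web~$\Food$ has one connected component per vertex~$v$, on the four taxa $\{s_v\}\cup\{f_v^e : e\in E(v)\}$; I would fix an arbitrary linear order on these with $s_v$ first and orient \emph{all} pairs forward, so each component is a transitive tournament on four vertices. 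Then $\Food$ is a cluster graph, each component has exactly four vertices, and $\sourcespersonal{\Food}=\{s_v : v\in V\}$. For the phylogenetic tree~$\Tree$ I would take a root~$\rho$ whose children are the $n$ leaves $s_v$ (each via an edge of weight~$1$) together with $m$ internal vertices $p_e$, one per edge $e=\{u,v\}$, where $\rho p_e$ has weight $W:=|X|+1$ and $p_e$ has exactly the two children $f_u^e,f_v^e$, reached by edges of weight~$1$. This is a valid phylogenetic $X$-tree (every internal vertex has out-degree at least two), and it is constructed in polynomial time. The instance is \emph{source-separating}: the spanning tree of $\{\rho\}\cup\sourcespersonal{\Food}$ is just $\rho$ with the leaves $s_v$ attached, while the spanning tree of $\{\rho\}\cup(X\setminus\sourcespersonal{\Food})$ uses only $\rho$, the $p_e$, and the $f$-taxa, so the two share only~$\rho$. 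Finally, set $k':=k+m$ and $D':=m\cdot W$.

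\textbf{Two structural facts.} First, for any viable $S\subseteq X$ and any $v$: if $S$ contains some $f_v^e$, then $s_v\in S$. Indeed, within the component of~$v$ the only source of~$\Food$ is $s_v$, so by \Cref{obs:viable} the subtree rooted at a source that spans $S$ inside that component must contain $s_v$. Second, let $q(S)$ be the number of edges $e=\{u,v\}\in E$ with $\{f_u^e,f_v^e\}\cap S\neq\emptyset$. The edge $\rho p_e$ lies in $E_{\Tree}(S)$ exactly when $\off(p_e)\cap S\neq\emptyset$, i.e.\ exactly for the $q(S)$ ``covered'' edges, contributing $q(S)\cdot W$; the remaining edges of~$\Tree$ all have weight~$1$ and there are only $n+2m=|X|<W$ of them, so $q(S)\cdot W\le\PD(S)<(q(S)+1)\cdot W$. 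In particular, $\PD(S)\ge D'=m\cdot W$ forces $q(S)=m$, i.e.\ $S$ ``covers'' every edge of~$G$.

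\textbf{Equivalence.} For the forward direction, let $C$ be a vertex cover of~$G$ with $|C|\le k$. For each edge~$e$ pick an endpoint $w(e)\in C\cap e$ and set $S:=\{s_v:v\in C\}\cup\{f_{w(e)}^e:e\in E\}$. Whenever $f_v^e\in S$ we have $v=w(e)\in C$, hence $s_v\in S$, so $S$ is viable; moreover $|S|\le|C|+m\le k+m=k'$ and $\PD(S)\ge\sum_{e\in E}\w(\rho p_e)=m\cdot W=D'$, so $\Instance'$ is a \yes-instance. Conversely, let $S$ be a solution of $\Instance'$. By the second fact $q(S)=m$, so for each $e\in E$ we may choose $w(e)\in e$ with $f_{w(e)}^e\in S$; then $C:=\{w(e):e\in E\}$ is a vertex cover of~$G$. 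The $m$ taxa $f_{w(e)}^e$ are pairwise distinct, and by the first fact $s_v\in S$ for every $v\in C$, giving $|C|$ further distinct taxa; hence $|S|\ge|C|+m$, and with $|S|\le k'=k+m$ we get $|C|\le k$. For the non-journal statement one simply omits the remarks about source-separating instances and the component-size bound.

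\textbf{Main obstacle.} The only part requiring care is calibrating the weight~$W$ and the numbers $k',D'$ so that the diversity threshold encodes exactly the Boolean condition ``every edge of~$G$ is covered'' (one large weight per edge does this, the weight-$1$ edges being mere slack), while the budget then converts the \emph{number} of forced source-taxa into the vertex-cover size. The clique (transitive-tournament) structure of each component is precisely what guarantees that selecting an $f$-taxon covering an edge costs the corresponding source-taxon, which is what makes this translation tight.
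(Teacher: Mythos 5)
Your reduction is correct and is essentially the paper's own proof: both reduce from \VC on cubic graphs using a source taxon per vertex whose component contains the three incidence taxa, a heavy tree edge per graph edge above the two incidence taxa, budget $k+|E|$, and a threshold forcing every graph edge to be covered (you set $D'=mW$ with $W>|X|$, the paper sets $D=N|E|+k$; you make each component a transitive tournament, the paper uses out-stars and remarks that clique edges can be added). The differences are only cosmetic calibrations of weights and thresholds.
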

\begin{proof}
	\proofpara{Reduction}
	Let $(G,k)$ be an instance of \VC, where $G = (V,E)$ is cubic.
	We define an instance $\Instance = (\Tree,\Food,k',D)$ of \PDD as follows.
	Let $\Tree$ have a root $\rho$.
	For each vertex $v\in V$, we add a child $v$ of $\rho$.
	For each edge $e=\{u,v\}\in E$, we add a child~$e$ of~$\rho$ and two children~$[u,e]$ and~$[v,e]$ of $e$.
	Let $N$ be a big integer.
	We set the weight of~$\rho e$ to~$N-1$ for each edge~$e$ of $E$.
	All other edges of~\Tree have weight 1.
	Additionally, for each edge~$e=\{u,v\}\in E$ we add edges $(u,[u,e])$ and $(v,[v,e])$ to \Food.
	Finally, we set $k' := |E| + k$ and $D := N\cdot |E| + k$.
	
	\proofpara{Correctness}
	Instance \Instance of \PDD is constructed in polynomial time.
	\ifJournal
	The leaves in~\Tree are~$V$ and~$\{ [u,e], [v,e] \mid e=\{u,v\} \in E \}$.
	\fi
	The sources of~\Food are $V$.
	\ifJournal
	Therefore, \Instance is source-separating.
	\fi
	Let~$e_1$, $e_2$, and~$e_3$ be the edges incident with~$v\in V(G)$.
	Each connected component in \Food contains four vertices, $v$, and~$[v,e_i]$ for~$i\in\{1,2,3\}$.
	\ifJournal
	(One could add edges $([v,e_1],[v,e_2])$, $([v,e_1],[v,e_3])$, and~$([v,e_2],[v,e_3])$ to meet the formal requirement of a cluster graph.)
	\fi
	
	We show that $(G,k)$ is a \yes-instance of \VC if and only if \Instance is a \yes-instance of \PDD.
	Let $C\subseteq V$ be a vertex cover of $G$ of size at most $k$.
	If necessary, add vertices to $C$ until $|C|=k$.
	For each edge $e\in E$, let~$v_e$ be an endpoint of $e$ that is contained in $C$. Note that~$v_e$ exists since~$C$ is a vertex cover.
	We show that $S := C \cup \{ [v_e,e] \mid e\in E \}$ is a solution for \Instance:
	The size of $S$ is $|C| + |E| = k + |E|$.
	By definition, for each taxon $[v_e,e]$ we have $v_e\in C\subseteq S$, so $S$ is viable.
	Further, as~$S$ contains the taxon~$[v_e,e]$ for each edge~$e\in E$, we conclude that $\PD(S) \ge N\cdot |E| + \PD(C) = N\cdot |E| + k = D$.
	Therefore, $S$ is a solution of~\Instance.
	
	Let $S$ be a solution of instance \Instance of \PDD.
	Define $C := S \cap V(G)$ and define $S' := S \setminus C$.
	Because $\PD(S) > D$, we conclude that for each $e\in E$ at least one vertex~$[u,e]$ with $u\in e$ is contained in $S'$.
	Thus, $|S'| \ge |E|$ and we conclude that $|C| \le k$.
	Because $S$ is viable we conclude that $u\in C$ for each $[u,e] \in S'$.
	Therefore, $C$ is a vertex cover of size at most~$k$ of~$G$.
\end{proof}

\ifJournal
\PDD is \NP-hard even if each connected component in the food-web is a path of length three~\cite{faller}.
So, we wonder if the hardness still holds if we restrict the food-web even more such that each connected component of the food-web contains at most two vertices.
We were not able to show that \PDD is \NP-hard for this special case.
However, in the next theorem we show that \PDD can be solved in polynomial time if we restrict the instance even further to be source-separating.

\begin{theorem} \label{thm:dist-cluster-flow}
	\PDD can be solved in $\Oh(k\cdot n^2 \cdot \log^2 n)$ time on source-separating instances, if the food-web has only isolated edges.
\end{theorem}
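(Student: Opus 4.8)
The plan is to exploit the rigidity forced by the two hypotheses and then reduce to \MCNF. Since the food-web consists of isolated edges, every taxon has at most one food-web neighbour; hence the non-sources are precisely a set $\{v_1,\dots,v_p\}$ in which each $v_i$ has a unique prey $u_i\in\sources$, and a set $S$ is viable exactly if $u_i\in S$ whenever $v_i\in S$. Source-separatingness gives a clean decomposition of the objective: let $E_s$ consist of the edges of \Tree whose offspring are sources and $E_{ns}$ of those whose offspring are non-sources; no edge lies in both (its lower endpoint would otherwise lie in both spanning trees of the definition and differ from $\rho$), and every edge lies in one of them, so $\PD(S)$ equals the $\w$-weight of the edges of $E_s$ with an offspring in $S\cap\sources$ plus the $\w$-weight of the edges of $E_{ns}$ with an offspring in $S\setminus\sources$. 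First I would dispose of the trivial case $k\ge n$ (answer \yes iff $\PD(X)\ge D$, since $X$ is viable), apply \Cref{rr:maxw<D}, and, by \Cref{obs:solution-size}, assume we look for a viable set of size exactly~$k$.

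The heart of the algorithm is an \MCNF subroutine run once for each $j\in\{0,1,\dots,\lfloor k/2\rfloor\}$, where $j$ is the intended number of saved non-sources — equivalently, the number of predator–prey pairs that are fully protected — so the number of saved sources is $k-j$: the $j$ prey $u_i$ forced by the chosen $v_i$ together with $k-2j$ further sources. For a fixed $j$ I would build a network modelling \Tree: each tree edge $e$ is replaced by two parallel arcs, one of capacity~$1$ and cost $-\w(e)$ and one of capacity~$\infty$ and cost~$0$, with $E_s$-arcs oriented towards $\rho$ and $E_{ns}$-arcs away from $\rho$. Because the underlying graph is a tree, the flow across $e$ is forced to equal the number of chosen leaves routed through $e$, so the cheap arc is used precisely when $\off(e)$ meets the chosen set, crediting $\w(e)$ once; this is the standard device turning the fixed charge ``pay $\w(e)$ once if edge $e$ is used'' into a genuine flow cost, and it is exact only because the offspring sets form a laminar family. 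A super-source $\sigma$ feeds, through an arc of capacity $k-2j$, every source by a private unit-capacity arc (so a source is taken at most once and sends a unit up its $E_s$-path to $\rho$ and out via an arc $\rho\to\tau$ of capacity $k-j$); and through an arc of capacity $j$ it feeds $j$ ``pair units'', where a pair unit for index $i$ travels $\sigma\to u_i\to(\text{up }E_s)\to\rho\to(\text{down }E_{ns})\to v_i\to\tau$, so that protecting $v_i$ physically forces the credit for $u_i$, uses exactly one unit of flow, and keeps the total budget $2j+(k-2j)=k$ hard-wired into the capacities. A minimum-cost integral flow of value $k-j$ then yields, via the decomposition above, a viable size-$k$ set with $j$ saved non-sources of maximum \PD; I answer \yes iff this maximum is at least $D$ for some~$j$.

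Correctness would be shown by the two standard directions: from a viable $S$ of size $k$ with $|S\setminus\sources|=j$ one reads off a flow of value $k-j$ and cost $-\PD(S)$, sending a pair unit for each $v_i\in S$ and a source unit for each non-forced source of $S$ (using that $E_s$ and $E_{ns}$ meet only at $\rho$, so the pair-unit walks are simple paths); conversely, a flow decomposition of an integral flow of value $k-j$ into such paths recovers a viable set of size at most $k$ whose \PD is the negated cost, which by \Cref{obs:solution-size} can be completed to size exactly~$k$ without loss — the only point to check being that an optimal flow never wastefully pushes a unit along a cheap arc whose reward is already collected. For the running time, each network has $\Oh(n)$ vertices and arcs, all capacities and the flow value are at most $n$, so a capacity-scaling minimum-cost-flow algorithm runs in $\Oh(\log n)$ scaling phases, each a shortest-path computation by Dijkstra with potentials in $\Oh(n\log n)$ time, i.e.\ $\Oh(n^2\log^2 n)$ per value of $j$; over the $\Oh(k)$ choices of $j$, plus $\Oh(n^2)$ for \Cref{rr:each-taxon-savable} and \Cref{rr:maxw<D}, this totals $\Oh(k\cdot n^2\cdot\log^2 n)$.

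The main obstacle is that one cannot capture both features in a single flow: the fixed charge per used tree edge only linearises on a laminar instance, and the dependency ``$v_i$ saved $\Rightarrow u_i$ saved'' is a precedence constraint, not a capacity or a cost — a one-shot encoding either forbids nothing (and then undercounts the budget whenever a predator is protected ``for free'' while walking up its prey's path) or overcounts it. Fixing the number $j$ of fully protected pairs removes the precedence (all $j$ pairs are ``on''), and routing each pair's single unit up the prey's $E_s$-path and back down the $E_{ns}$-path to the predator makes the dependency topological while keeping the budget exactly $k$; the technical heart is verifying that the resulting flow optimum coincides with the \PDD optimum restricted to solutions with exactly $j$ saved non-sources.
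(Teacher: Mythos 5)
Your overall strategy is the same as the paper's -- guess the number $j$ of fully protected predator--prey pairs, turn the fixed charge ``pay $\w(e)$ once if a tree edge is used'' into a unit-capacity arc of cost $-\w(e)$ in parallel with a zero-cost arc, and solve $\Oh(k)$ instances of \MCNF in $\Oh(n^2\log^2 n)$ time each -- but your concrete network does not enforce the one constraint that matters, and the proof breaks exactly at the point you yourself flag as the technical heart. In a single-commodity flow the units are interchangeable at $\rho$: nothing ties the $E_{ns}$-branch a unit descends into to the $E_s$-branch it ascended from, so a ``pair unit'' that enters at some prey $u_a$ may descend to an arbitrary predator $v_b$. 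Concretely, let $\Tree$ be the star with leaves $u_1,u_2,v_1,v_2$, weights $\w(\rho u_1)=\w(\rho v_2)=1$ and $\w(\rho u_2)=\w(\rho v_1)=10$, food-web edges $u_1v_1$ and $u_2v_2$, $k=2$, $D=20$; this instance is source-separating, survives your preprocessing, and every viable set of size two has diversity $11$, so it is a \no-instance. Yet for $j=1$ your network admits the flow $\sigma\to u_2\to\rho\to v_1\to\tau$ through the two cheap arcs, of value $k-j=1$ and cost $-20$, so your algorithm answers \yes. The same mixing also wrecks the budget: a ``free source'' unit may continue past $\rho$ down some $E_{ns}$-path and exit via $v_i\to\tau$, saving two taxa (one a predator whose prey is absent) while being counted as one, so with $j=0$ a value-$k$ flow can touch up to $2k$ leaves. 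Your forward direction (solution $\Rightarrow$ flow) is fine; the converse silently assumes that a flow decomposition uses only the intended $u_i$-then-$v_i$ routes, and that is false.

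The missing idea -- and the way the paper's construction does it -- is to hard-wire the pairing into the graph rather than into intended routes. Because the instance is source-separating, the spanning subtrees of $\{\rho\}\cup\sources$ and of $\{\rho\}\cup(X\setminus\sources)$ share only $\rho$, so one may keep them as vertex-disjoint copies with duplicated roots $\rho_1$ and $\rho_2$ (and $\PD$ splits over their edge sets). Orient the non-source copy away from $\rho_1$, the source copy towards $\rho_2=:t$, give every edge the cheap/expensive parallel pair, and -- crucially -- insert the food-web edges themselves as unit-capacity arcs from each predator leaf $v_i$ to its unique prey leaf $u_i$. A pair unit then travels $s\to\rho_1\to(\text{down to }v_i)\to u_i\to(\text{up to }t)$: once it has collected the reward on $v_i$'s side it has no choice but to pass through $u_i$, so viability is forced structurally; the $k-2k'$ free units enter source leaves through a separate vertex $\nu$ and go up, so the size bound $2k'+(k-2k')=k$ really is hard-wired by the capacities $c(s\rho_1)=k'$ and $c(s\nu)=k-2k'$. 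If you reroute your pair units through these predator-to-prey arcs (with the duplicated root) instead of through $\rho$, your correctness argument and your running-time analysis go through essentially as you wrote them.
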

\begin{proof}
	In this algorithm we use techniques similar to the way Bordewich~et~al. showed that phylogeny across two trees can be computed efficiently~\cite{bordewich2trees}.
	We reduce a source-separating instance of \PDD to $\Oh(k)$ instances of \MCNFlong (\MCNF), which can be solved in time~$\Oh(n^2 \log^2 n)$ each~\cite{minoux,networkflows}.
	
	In \MCNF one is given a directed graph $G = (V,E)$ with two designated vertices $s,t \in V$ (called the \emph{source $s$} and the \emph{sink $t$}) and two integers $F$ and $C$, where each edge $e \in E$ has a positive capacity $c(e)$ and a cost $a(e)$.
	The cost may be negative and $G$ may have parallel edges.
	A \emph{flow} $f$ assigns each edge $e$ a non-negative integer value $f(e)$.
	The \emph{cost of a flow~$f$} is $\sum_{e\in E} f(e) \cdot a(e)$.
	A flow $f$ is \emph{$q$-proper} if $f(e) \le c(e)$ for each edge $e$, $\sum_{u\in V} f(uv) = \sum_{w\in V} f(vw)$ for each vertex $v\in V\setminus \{s,t\}$ and $q = \sum_{w\in V} f(sw) = \sum_{u\in V} f(ut)$.
	For an instance $(G,s,t,F,C,c,a)$ of \MCNF,
	we ask whether there is a $F$-proper flow in $G$ of cost at most $C$.
	
	\proofpara{Algorithm}
	Let $\Instance = (\Tree, \Food, k, D)$ be a source-separating instance of \PDD.
	Let $\Tree_1$ (and~$\Tree_2$) be the subtree of $\Tree$ spanning over the vertices $\{\rho\} \cup \sources$ (or respectively $\{\rho\} \cup (X \setminus \sources)$).
	To avoid confusion, for~$i\in\{1,2\}$ the root of $\Tree_i$ is called $\rho_i$.
	Iterate over~$k' \in [ \lfloor k/2 \rfloor ]_0$.
	For an example of the reduction consider \Cref{fig:dist-cluster-flow}.
	
	We define an instance $\Instance'_{k'} = (G,s,t,F,C,c,a)$ of \MCNF as follows.
	The set of vertices of~$G$ are the vertices of $\Tree_1$ and $\Tree_2$ and two new vertices $s$ and $\nu$.
	The vertex~$s$ is the source and $t:=\rho_2$ is the sink of $G$.
	We add edges $s\rho_1$ and $s\nu$, where $c(s\rho_1) = k'$ and~$c(s\nu) = k-2k'$.
	For each leaf $x$ of $\Tree_2$ add an edge $\nu x$ of capacity $k$.
	For each edge~$xy$ of \Food add an edge~$yx$ with capacity 1.
	Each edge so far has a cost of 0.
	For each edge $e=uv$ in $\Tree_1$ add two parallel edges $e$ and $e'$
	where $e$ has a capacity of 1 and cost of $-\w(e)$ and $e'$ has a capacity of~$k-1$ and cost 0.
	For each edge $uv$ in $\Tree_2$ add two parallel edges $vu$ and $(vu)'$ (note that the edges are reversed)
	where $vu$ has a capacity of 1 and cost of $-\w(uv)$ and $(vu)'$ has a capacity of~$k-1$ and cost 0.
	To complete the instance, we set $F := k-k'$ and $C := -D$.
	
	We compute a solution for $\Instance'_{k'}$ and return \yes if $\Instance'_{k'}$ is a \yes-instance of \MCNF.
	If $\Instance'_{k'}$ is a \no-instance of \MCNF for each $k'$ then we return \no.

	\proofpara{Correctness}
	The correctness goes around similar lines as the algorithm in~\cite{bordewich2trees}.
	We show first that if \Instance is a \yes-instance of \PDD then there is a $k'$ such that $\Instance'_{k'}$ is a \yes-instance of~\MCNF.
	Afterward we show that if $\Instance'_{k'}$ is a \yes-instance of \MCNF for specific $k'$ then \Instance is a \yes-instance of \PDD.
	
	Assume that \Instance is a \yes-instance of \PDD with solution $S\subseteq X$.
	If necessary add vertices to $S$ until $|S| = k$.
	Let $S_1$ be the subset of vertices of $S$ which are not sources in \Food.
	Let $k'$ be the size of $S_1$.
	Further, let $S_2$ be the set of taxa $\{ x\in X \mid xy\in E(\Food), y\in S_1 \}$.
	Because~$S$ is viable, $S_2\subseteq S$.
	Then we define $S_3 := S \setminus (S_1 \cup S_2)$.
	We define a flow $f$ in $\Instance'_{k'}$ as follows.
	Let $E_i$ be the set of edges between $\rho_i$ and $S_i$ for $i \in [3]$.
	We set $f(e) = 1$ for each $e\in E_1$ and~$f(vu)=1$ for each $uv \in E_2 \cup E_3$.
	So far we have defined the flow that ensures that the cost is $-D$.
	Now we ensure that the flow $f$ is $k-k'$-proper.
	For each edge $xy\in E(\Food)$ with~$y\in S_1$ we set $f(yx) = 1$. 
	Further, for each $x\in S_3$ we set $f(\nu x) = 1$.
	We then set~$f(s\rho_1) = k'$ and $f(sv) = k-2k'$.
	For each edge $e$ of $\Tree_1$, we set $f(e') = | \off(e) \cap S_1 | -1$.
	That is, the~$f(e')$ is the number of offspring $e'$ has in $S_1$ minus 1.\todos{Irrelevanter Satz?}
	For each edge $e=uv$ of $\Tree_2$ we set~$f((vu)') = | \off(e) \cap (S_2 \cup S_3) | -1$.
	It remains to show that $f$ is $k-k'$-proper.
	
	\proofpara{Claim: The flow $f$ is $k-k'$-proper}
	\begin{claimproof}
		One can easily verify that $f(e) \le c(e)$ for each $e\in E(G)$.
		
		Observe the size of $S_3$ is $|S| - |S_1 \cup S_2| = k - 2k'$.
		The flow leaving $s$ is $\sum_{u\in V} f(su) = f(s\rho_1) + f(sv) = k' + k - 2k' = k - k'$.
		The flow entering $t=\rho_2$ is $\sum_{w\in V} f(wt) = |S_2| + |S_3| = k' + (k - 2k') = k - k'$.
		By definition, the flow entering and leaving $\rho_1$ has size $k'$.
		The flow leaving $\nu$ is $|S_3|$ and so equals to the flow entering~$\nu$, $f(s\nu) = k-2k'$.
		Each leaf $x$ has a flow of 1 incoming and leaving if $x\in S$ and 0 otherwise.
		For an internal vertex $v\ne \rho$ of $\Tree$ with children $w_1,\dots,w_z$ and parent $u$ we observe
		$\off(uv) = \bigcup_{i=1}^z \off(vw_i)$.
		With this observation we conclude that $f$ is $k-k'$-proper.
	\end{claimproof}
	
	Assume now that $f$ is a $k-k'$-proper flow of $\Instance'_{k'}$.
	Let $S$ be the set of vertices that are corresponding to leaves in $\Tree$ and that have a positive entering flow.
	We show that $S$ is a solution for instance \Instance of \PDD.
	Since $f(s\rho_1) \le k'$, we conclude that $|S \cap \off(\rho_1)| \le k'$ and
	since $f(s\nu) \le k-2k'$, we conclude that $|S \cap \off(\rho_2)| \le k-2k' + |S \cap \off(\rho_1)| \le k-k'$.
	Thus, the size of $S$ is at most $k$.
	If $x\in \off(\rho_1)$ is in $S$ then $x$ has a positive flow leaving $x$ and we conclude that $f(xy) > 0$ for $yx \in E(\Food)$.
	Therefore, $y$ is in $S$ and $S$ is viable.
	Let $E_1$ be the set of edges $e$ with $f(e)>0$ and $a(e) < 0$.
	Recall that $a(e)$ is the cost of $e$.
	Only edges corresponding to an edge in $\Tree$ are in $E_1$.
	Let $E_1'$ be the corresponding edges of $E_1$, especially, edges in $\Tree_2$ are turned around.
	The we observe
	\begin{eqnarray*}
		\PD(S) \ge \sum_{e\in E_1'} \w(e)
		& = & -\sum_{e\in E_1} f(e)\cdot a(e) \ge D.
	\end{eqnarray*}
	Therefore, $S$ is a solution for instance \Instance of \PDD.
	
	\proofpara{Running time}
	For a given $k'$, we can construct the instance $\Instance'_{k'}$ in linear time.
	Whether an instance of \MCNF is a \yes-instance can be computed in $\Oh(n^2 \log^2 n)$~time~\cite{minoux,networkflows}.
	$\Instance'_{k'}$ contains at most $2n$ vertices and at most $6n$ edges.
	So, the overall running time is~$\Oh(k\cdot n^2 \cdot \log^2 n)$.
\end{proof}
\begin{figure}[t]
\centering
\begin{tikzpicture}[scale=0.8,every node/.style={scale=0.7}]
	\tikzstyle{txt}=[circle,fill=white,draw=white,inner sep=0pt]
	\tikzstyle{nde}=[circle,fill=black,draw=black,inner sep=2.5pt]
	\tikzstyle{ndeg}=[circle,fill=lime,draw=black,inner sep=2.5pt]
	\tikzstyle{ndeo}=[circle,fill=orange,draw=black,inner sep=2.5pt]
	
	\node[nde] (root) at (0.5,0) {};
	
	\node[nde, xshift= -25mm] (u1) [below=of root] {};
	\node[nde] (u2) [below=of root] {};
	\node[nde, xshift= 25mm] (u3) [below=of root] {};
	
	\node[ndeo, xshift= -5mm] (v11) [below=of u1] {};
	\node[nde, xshift= 5mm] (v12) [below=of u1] {};
	
	\node[ndeg, xshift= -10mm] (v21) [below=of u2] {};
	\node[ndeg] (v22) [below=of u2] {};
	\node[ndeg, xshift= 10mm] (v23) [below=of u2] {};
	
	\node[nde, xshift= -5mm] (v31) [below=of u3] {};
	\node[ndeg, xshift= 5mm] (v32) [below=of u3] {};
	
	\node[ndeo, xshift= -10mm] (w121) [below=of v12] {};
	\node[ndeo] (w122) [below=of v12] {};
	\node[ndeo, xshift= 10mm] (w123) [below=of v12] {};
	
	\node[ndeg, xshift= -5mm] (w311) [below=of v31] {};
	\node[ndeg, xshift= 5mm] (w312) [below=of v31] {};
	
	\node[txt, xshift= -13mm] [right=of root] {$\rho$};
	
	\node[txt, yshift= 13mm] [below=of v11] {$x_0$};
	\node[txt, yshift= 13mm] [below=of v21] {$x_4$};
	\node[txt, yshift= 13mm] [below=of v22] {$x_5$};
	\node[txt, yshift= 13mm] [below=of v23] {$x_6$};
	\node[txt, yshift= 13mm] [below=of v32] {$x_9$};
	
	\node[txt, yshift= 13mm] [below=of w121] {$x_1$};
	\node[txt, yshift= 13mm] [below=of w122] {$x_2$};
	\node[txt, yshift= 13mm] [below=of w123] {$x_3$};
	\node[txt, yshift= 13mm] [below=of w311] {$x_7$};
	\node[txt, yshift= 13mm] [below=of w312] {$x_8$};
	
	\node[txt] at (-1,0) {$\Tree=$};
	
	\node[txt] at (-2.1,0) {(1)};
	
	\draw[blue, arrows = {-Stealth[length=8pt]}] (root) to (u1);
	\draw[blue, arrows = {-Stealth[length=8pt]}] (root) to (u2);
	\draw[blue, arrows = {-Stealth[length=8pt]}] (root) to (u3);
	
	\draw[blue, arrows = {-Stealth[length=8pt]}] (u1) to (v11);
	\draw[arrows = {-Stealth[length=8pt]}] (u1) to (v12);
	
	\draw[arrows = {-Stealth[length=8pt]}] (u2) to (v21);
	\draw[blue, arrows = {-Stealth[length=8pt]}] (u2) to (v22);
	\draw[blue, arrows = {-Stealth[length=8pt]}] (u2) to (v23);
	
	\draw[arrows = {-Stealth[length=8pt]}] (u3) to (v31);
	\draw[blue, arrows = {-Stealth[length=8pt]}] (u3) to (v32);
	
	\draw[arrows = {-Stealth[length=8pt]}] (v12) to (w121);
	\draw[arrows = {-Stealth[length=8pt]}] (v12) to (w122);
	\draw[arrows = {-Stealth[length=8pt]}] (v12) to (w123);
	
	\draw[arrows = {-Stealth[length=8pt]}] (v31) to (w311);
	\draw[arrows = {-Stealth[length=8pt]}] (v31) to (w312);
	
	\draw[dashed] (3.5,0.5) to (3.5,-5);
	
	\node[txt] at (4,0.25) {$\Food=$};
	
	\node[ndeo] (a1) at (4,-0.5) {};
	\node[ndeo] (b1) at (4.5,-0.5) {};
	\node[ndeg] (a2) at (4,-1.5) {};
	\node[ndeg] (b2) at (4.5,-1.5) {};
	\node[ndeo] (a3) at (4,-2.75) {};
	\node[ndeo] (b3) at (4.5,-2.75) {};
	\node[ndeg] (a4) at (4,-3.75) {};
	\node[ndeg] (b4) at (4.5,-3.75) {};
	\node[ndeg] (a5) at (4,-4.5) {};
	\node[ndeg] (b5) at (4.5,-4.5) {};
	
	\draw[arrows = {-Stealth[length=8pt]}] (a2) to (a1);
	\draw[arrows = {-Stealth[length=8pt]}] (b2) to (b1);
	\draw[arrows = {-Stealth[length=8pt]}] (a4) to (a3);
	\draw[arrows = {-Stealth[length=8pt]}] (b4) to (b3);
	
	\node[txt, yshift= 14mm] [below=of a2] {$x_7$};
	\node[txt, yshift= 14mm] [below=of a4] {$x_6$};
	\node[txt, yshift= 14mm] [below=of a5] {$x_8$};
	\node[txt, yshift= 14mm] [below=of b2] {$x_5$};
	\node[txt, yshift= 14mm] [below=of b4] {$x_9$};
	\node[txt, yshift= 14mm] [below=of b5] {$x_4$};
	
	\node[txt, yshift= -14mm] [above=of a1] {$x_2$};
	\node[txt, yshift= -14mm] [above=of b1] {$x_0$};
	\node[txt, yshift= -14mm] [above=of a3] {$x_3$};
	\node[txt, yshift= -14mm] [above=of b3] {$x_1$};
	
	\draw (5,0.5) to (5,-5);
\end{tikzpicture}
\begin{tikzpicture}[scale=0.8,every node/.style={scale=0.7}]
	\tikzstyle{txt}=[circle,fill=white,draw=white,inner sep=0pt]
	\tikzstyle{nde}=[circle,fill=black,draw=black,inner sep=2.5pt]
	\tikzstyle{ndeg}=[circle,fill=lime,draw=black,inner sep=2.5pt]
	\tikzstyle{ndeo}=[circle,fill=orange,draw=black,inner sep=2.5pt]
	
	\node[nde] (root1) at (-1,-.7) {};
	\node[nde, yshift= 7mm] (u1) [below=of root1] {};
	\node[ndeo, xshift= -5mm, yshift= 7mm] (v11) [below=of u1] {};
	\node[nde, xshift= 5mm, yshift= 7mm] (v12) [below=of u1] {};
	
	\node[ndeo, xshift= -10mm, yshift= 5mm] (w121) [below=of v12] {};
	\node[ndeo, yshift= 5mm] (w122) [below=of v12] {};
	\node[ndeo, xshift= 10mm, yshift= 5mm] (w123) [below=of v12] {};
	
	\node[nde] (root2) at (1.5,0) {};
	\node[nde, yshift= 5mm] (u2) [below=of root2] {};
	\node[nde, xshift= 25mm, yshift= 5mm] (u3) [below=of root2] {};
	
	\node[ndeg, xshift= -10mm, yshift= 5mm] (v21) [below=of u2] {};
	\node[ndeg, yshift= 5mm] (v22) [below=of u2] {};
	\node[ndeg, xshift= 10mm, yshift= 5mm] (v23) [below=of u2] {};
	
	\node[nde, xshift= -5mm, yshift= 5mm] (v31) [below=of u3] {};
	\node[ndeg, xshift= 5mm, yshift= 5mm] (v32) [below=of u3] {};
	
	\node[ndeg, xshift= -5mm, yshift= 8mm] (w311) [below=of v31] {};
	\node[ndeg, xshift= 5mm, yshift= 8mm] (w312) [below=of v31] {};
	
	\node[nde] (s) at (-2,-4.5) {};
	\node[nde] (v) at (2,-4.5) {};
	
	\node[txt, xshift= 13mm] [left=of root1] {$\rho_1$};
	\node[txt, xshift= 13mm] [left=of root2] {$t=\rho_2$};
	
	\node[txt, yshift= 13mm] [below=of s] {$s$};
	\node[txt, yshift= 13mm] [below=of v] {$\nu$};
	
	\node[txt, yshift= 14mm] [below=of v11] {$x_0$};
	\node[txt, xshift= 14mm] [left=of v21] {$x_4$};
	\node[txt, xshift= 14mm] [left=of v22] {$x_5$};
	\node[txt, xshift= 14mm] [left=of v23] {$x_6$};
	\node[txt, xshift= 14mm] [left=of v32] {$x_9$};
	
	\node[txt, yshift= 13mm] [below=of w121] {$x_1$};
	\node[txt, yshift= 13mm] [below=of w122] {$x_2$};
	\node[txt, yshift= 13mm] [below=of w123] {$x_3$};
	\node[txt, xshift= 14mm] [left=of w311] {$x_7$};
	\node[txt, yshift= 14mm] [below=of w312] {$x_8$};
	
	\node[txt] at (-2,-.5) {$G=$};
	
	\node[txt] at (-2.1,0) {(2)};
	
	\draw[arrows = {-Stealth[length=4pt]}, bend left=-20, dotted] (root1) to (u1);
	\draw[blue, arrows = {-Stealth[length=4pt]}, bend left=20, dotted] (u2) to (root2);
	\draw[arrows = {-Stealth[length=4pt]}, bend left=20, dotted] (u3) to (root2);
	
	\draw[arrows = {-Stealth[length=4pt]}, bend left=-20, dotted] (u1) to (v11);
	\draw[arrows = {-Stealth[length=4pt]}, bend left=-20, dotted] (u1) to (v12);
	
	\draw[arrows = {-Stealth[length=4pt]}, bend left=-20, dotted] (v21) to (u2);
	\draw[arrows = {-Stealth[length=4pt]}, bend left=-20, dotted] (v22) to (u2);
	\draw[arrows = {-Stealth[length=4pt]}, bend left=-20, dotted] (v23) to (u2);
	
	\draw[arrows = {-Stealth[length=4pt]}, bend left=-21, dotted] (v31) to (u3);
	\draw[arrows = {-Stealth[length=4pt]}, bend left=-20, dotted] (v32) to (u3);
	
	\draw[arrows = {-Stealth[length=4pt]}, bend left=-20, dotted] (v12) to (w121);
	\draw[arrows = {-Stealth[length=4pt]}, bend left=-20, dotted] (v12) to (w122);
	\draw[arrows = {-Stealth[length=4pt]}, bend left=-20, dotted] (v12) to (w123);
	
	\draw[arrows = {-Stealth[length=4pt]}, bend left=-28, dotted] (w311) to (v31);
	\draw[arrows = {-Stealth[length=4pt]}, bend left=-20, dotted] (w312) to (v31);
	
	\draw[blue, arrows = {-Stealth[length=4pt]}, bend left=20] (root1) to (u1);
	\draw[blue, arrows = {-Stealth[length=4pt]}, bend left=-20] (u2) to (root2);
	\draw[blue, arrows = {-Stealth[length=4pt]}, bend left=-20] (u3) to (root2);
	
	\draw[blue, arrows = {-Stealth[length=4pt]}, bend left=20] (u1) to (v11);
	\draw[arrows = {-Stealth[length=4pt]}, bend left=20] (u1) to (v12);
	
	\draw[arrows = {-Stealth[length=4pt]}, bend left=20] (v21) to (u2);
	\draw[blue, arrows = {-Stealth[length=4pt]}, bend left=20] (v22) to (u2);
	\draw[blue, arrows = {-Stealth[length=4pt]}, bend left=20] (v23) to (u2);
	
	\draw[arrows = {-Stealth[length=4pt]}, bend left=20] (v31) to (u3);
	\draw[blue, arrows = {-Stealth[length=4pt]}, bend left=22] (v32) to (u3);
	
	\draw[arrows = {-Stealth[length=4pt]}, bend left=20] (v12) to (w121);
	\draw[arrows = {-Stealth[length=4pt]}, bend left=20] (v12) to (w122);
	\draw[arrows = {-Stealth[length=4pt]}, bend left=20] (v12) to (w123);
	
	\draw[arrows = {-Stealth[length=4pt]}, bend left=20] (w311) to (v31);
	\draw[arrows = {-Stealth[length=4pt]}, bend left=28] (w312) to (v31);

	\draw[blue, arrows = {-Stealth[length=4pt]}, bend left=-40] (v11) to (v22);
	\draw[arrows = {-Stealth[length=4pt]}, bend left=-29] (w121) to (v32);
	\draw[arrows = {-Stealth[length=4pt]}, bend left=-24] (w122) to (w311);
	\draw[arrows = {-Stealth[length=4pt]}, bend left=-15] (w123) to (v23);

	\draw[blue, arrows = {-Stealth[length=4pt]}] (s) to (v);
	\draw[blue, arrows = {-Stealth[length=4pt]}, bend left=20] (s) to (root1);
	
	\draw[arrows = {-Stealth[length=4pt]}, bend left=7, dotted] (v) to (v21);
	\draw[arrows = {-Stealth[length=4pt]}, bend left=7, dotted] (v) to (v22);
	\draw[blue, arrows = {-Stealth[length=4pt]}, bend left=5, dotted] (v) to (v23);
	\draw[arrows = {-Stealth[length=4pt]}, bend left=5, dotted] (v) to (w311);
	\draw[arrows = {-Stealth[length=4pt]}, bend left=6, dotted] (v) to (w312);
	\draw[blue, arrows = {-Stealth[length=4pt]}, bend left=-42, dotted] (v) to (v32);
\end{tikzpicture}
\caption{This figure shows in (1) a hypothetical source-separating instance of \PDD.
	The sources are drawn in green and the predators in orange.
	In (2) the instance on \MCNF is shown, as it would have been constructed in \Cref{thm:dist-cluster-flow}.
	For the sake of readability, capacities and costs are omitted and some line are drawn dotted.
	The blue edges mark how a solution~$\{x_0,x_5,x_6,x_9\}$ transfer to a flow.
}
\label{fig:dist-cluster-flow}
\end{figure}%

\fi

\subsection{Distance to co-cluster}
In this section, we show that \PDD is \FPT with respect to the distance to co-cluster of the food-web.
A  graph is a co-cluster graph if its complement graph is a cluster graph.
Herein, the complement graph is the graph obtained by replacing edges with non-edges and vice versa. In other words, a graph is a co-cluster graph if its vertex set can be partitioned into independent sets such that each pair of vertices from different independent sets is adjacent.

We define an auxiliary problem \HStw in which we are given a universe $\mathcal{U}$, a family of sets $\mathcal{W}$ over $\mathcal{U}$, a $\mathcal{U}$-tree \Tree, and integers $k$ and $D$.
We ask whether there is a set $S \subseteq \mathcal{U}$ of size at most $k$ such that $\PD(S) \ge D$ and $S\cap W \ne \emptyset$ for each $W\in \mathcal{W}$.
\ifWABIShort
Solutions to this problem can be found with a dynamic programming algorithm over the tree, similar to the idea in~\cite{pardi07}.
The proof is therefore in the appendix.
\fi

\ifWABIShort
\begin{lemma}[$\star$]
\else
\begin{lemma}
\fi
	\label{lem:HStw}
	\HStw can be solved in $\Oh(3^{|\mathcal{W}|} \cdot n)$ time.
\end{lemma}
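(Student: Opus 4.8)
The plan is to solve \HStw by a bottom-up dynamic program over the phylogenetic tree~$\Tree$, rooted at its root~$\rho$, where the only extra information carried for each subtree is which subset of the (few) required sets~$\mathcal{W}$ is already hit by the taxa chosen inside that subtree. Concretely, for a vertex~$v$ of~$\Tree$, a subset~$W' \subseteq \mathcal{W}$, and an integer~$\ell \in [k]_0$, I would let~$\DP[v,W',\ell]$ be the maximum of~$\sum\{\w(e) : e \in E(\Tree_v),\ \off(e)\cap S \neq \emptyset\}$ over all~$S \subseteq \off(v)$ with~$|S|\le \ell$ such that~$W'$ is contained in the set of members of~$\mathcal{W}$ that~$S$ hits (and~$-\infty$ if no such~$S$ exists). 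I would additionally keep a boolean flag recording whether the chosen~$S$ inside~$\Tree_v$ is non-empty, since this is exactly what decides whether the edge from~$v$ to its parent has to be paid for later; if some~$W\in\mathcal{W}$ is empty the instance is a trivial no-instance, so otherwise a non-empty~$W'$ already forces the flag to be~$1$. The base case is a leaf~$x$: choosing nothing gives value~$0$ with flag~$0$, and choosing~$x$ gives value~$0$ (there are no internal edges in~$\Tree_x$) with flag~$1$ for every~$\ell\ge 1$ and every~$W' \subseteq \{W\in\mathcal W : x\in W\}$, all other entries~$-\infty$.

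For an internal vertex~$v$ with children~$u_1,\dots,u_q$ I would compute~$\DP[v,\cdot,\cdot]$ by folding in the children one at a time, maintaining an auxiliary table over the processed prefix. When~$u_{j+1}$ is folded in, we add~$\w(v u_{j+1})$ precisely when the taxa chosen inside~$\Tree_{u_{j+1}}$ are non-empty (hence the flag), the new hit-subset is the union of the prefix's hit-subset and that of~$\Tree_{u_{j+1}}$, and the sizes add. The point that gives the~$3^{|\mathcal{W}|}$ factor rather than~$4^{|\mathcal{W}|}$ is that~$\DP[v,W',\ell]$ is monotone non-increasing in~$W'$ under inclusion (a larger required set is a stronger constraint), so in the convolution~$\max_{W'_a \cup W'_b = W'}$ we may restrict to the case~$W'_a \cap W'_b = \emptyset$: replacing~$W'_b$ by~$W'\setminus W'_a \subseteq W'_b$ never decreases the value. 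The pairs~$(W',W'_a)$ with~$W'_a \subseteq W'$ number~$3^{|\mathcal{W}|}$, since each element of~$\mathcal{W}$ is in~$W'_a$, in~$W'\setminus W'_a$, or outside~$W'$. After all children are folded in this is~$\DP[v,\cdot,\cdot]$, and \Instance is a yes-instance iff~$\DP[\rho,\mathcal{W},k] \ge D$.

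I would prove correctness by the usual induction on~$\Tree$: restricting a feasible~$S$ to each~$\Tree_v$ shows the claimed entries are attained, and conversely the recurrences can be unrolled to reconstruct a set~$S$ of the stated size, phylogenetic diversity, and hitting property. For the running time, each vertex has~$\Oh(2^{|\mathcal{W}|}\cdot k)$ table cells (the flag is a constant factor), folding in one child costs~$\Oh(3^{|\mathcal{W}|})$ set-pairs times the cost of the budget-convolution, and summing over the~$\Oh(n)$ parent--child pairs of~$\Tree$, with the standard truncated accounting for the budget dimension, yields the bound~$\Oh(3^{|\mathcal{W}|}\cdot n)$. I expect the only real obstacle to be bookkeeping rather than conceptual: charging~$\w(v u_{j+1})$ exactly once and exactly when~$\Tree_{u_{j+1}}$ actually contributes a selected taxon, keeping the ``select nothing here'' option (flag~$0$, value~$0$) available throughout the merges, and organising the budget-convolution so that it does not introduce a factor that would spoil linearity in~$n$.
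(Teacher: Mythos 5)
Your algorithm is essentially the paper's: a bottom-up tree DP indexed by the subset of $\mathcal{W}$ already hit, with a binary flag recording whether the set chosen in the subtree is non-empty (the paper's index $b\in\{0,1\}$ plays exactly this role, deciding whether $\w(vw_i)$ is charged at the parent), children folded in one at a time through an auxiliary prefix table, and the $3^{|\mathcal{W}|}$ factor obtained by restricting to disjoint splits $\mathcal{M}'\subseteq\mathcal{M}$ via the monotonicity you describe. The correctness part of your plan is fine and matches the paper's argument.

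The gap is in the running-time claim. You carry an extra budget coordinate $\ell\in[k]_0$, and then the stated bound $\Oh(3^{|\mathcal{W}|}\cdot n)$ does not follow: the table alone has $\Theta(2^{|\mathcal{W}|}\cdot k)$ cells per vertex, i.e.\ $\Theta(2^{|\mathcal{W}|}\cdot nk)$ overall, and the ``standard truncated accounting'' for the budget convolution bounds the total number of budget pairs over all merges by $\Oh(nk)$, not $\Oh(n)$; so what your proof actually yields is $\Oh(3^{|\mathcal{W}|}\cdot nk)$ (or worse with a naive convolution), which is weaker than the lemma as stated, although still sufficient, up to polynomial factors, for the applications such as the distance-to-co-cluster algorithm. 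The paper reaches $\Oh(3^{|\mathcal{W}|}\cdot n)$ precisely because its table carries \emph{no} size coordinate at all: entries are indexed only by the vertex, the flag $b$, and the subset of $\mathcal{W}$, and the final test is $\DP[\rho,1,\mathcal{W}]\ge D$; the cardinality bound $|S|\le k$ is simply not tracked there. So you must either drop the $\ell$-coordinate (matching the paper, at the cost of not enforcing $|S|\le k$ inside the DP) or keep it and weaken the claimed bound to $\Oh(3^{|\mathcal{W}|}\cdot nk)$; as written, the proposal claims the strong bound while using the table that cannot deliver it.
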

\newcommand{\proofHStw}{
\ifWABIShort
\begin{proof}[Proof of~\Cref{lem:HStw}]
\else
\begin{proof}
\fi
	We adopt the dynamic programming algorithm with which one can solve \MPD with weighted costs for saving taxa in pseudo-polynomial time~\cite{pardi07}.
	
	\proofpara{Table definition}
	We define two tables $\DP$ and and an auxiliary table $\DP'$.
	We want that entry $\DP[v,0,\mathcal{M}]$ for~$v\in V(\Tree)$, and~$\mathcal{M} \subseteq \mathcal{W}$ stores 0 if $\mathcal{M}$ is empty and $\DP[v,1,\mathcal{M}]$ stores the biggest phylogenetic diversity $\PDsub{\Tree_v}(S)$ of a set $S\subseteq \off(v)$ in the subtree $\Tree_v$ rooted at $v$ where~$S\cap M \ne \emptyset$ for each $M\in \mathcal{M}$.
	Otherwise, we store $-\infty$.
	For a vertex~$v \in V(\Tree)$ with children~$w_1,\dots,w_j$,
	in $\DP'[v,i,b,\mathcal{M}]$ we only consider~$S \subseteq \off(w_1) \cup \dots \cup \off(w_i)$.
	
	\proofpara{Algorithm}
	For a leaf $u \in \mathcal{U}$ of \Tree, in $\DP[u,1,\mathcal{M}]$ store 0 if~$u\in M$ for each $M\in\mathcal{M}$.
	Otherwise, store $-\infty$.
	For a given vertex~$v$ of~\Tree, in $\DP[v,0,\mathcal{M}]$ and $\DP'[v,i,0,\mathcal{M}]$ store 0 if $\mathcal{M} = \emptyset$.
	Otherwise, store $-\infty$.
	
	Let $v$ be an internal vertex of \Tree with children $w_1,\dots,w_z$.
	We set $\DP'[v,1,b,\mathcal{M}] := \DP[w_1,b,\mathcal{M}] + b\cdot \w(vw_1)$.
	To compute further values, we use the recurrence
	\begin{equation}
		\label{eqn:recurrence-HStw}
		\DP'[v,i+1,1,\mathcal{M}] =
			\max_{\mathcal{M}', b_1, b_2}
			\DP'[v,i,b_1,\mathcal{M}'] + \DP[w_{i+1},b_2,\mathcal{M}\setminus \mathcal{M}'] + b_2 \cdot \w(vw_{i+1}).
	\end{equation}
	Here, the maximum is taken over $\mathcal{M}' \subseteq \mathcal{M}$ and $b_1,b_2 \in \{0,1\}$ where we additionally require~$b_1+b_2 \ge 1$.	
	Finally, $\DP[v,b,\mathcal{M}] := \DP'[v,z,b,\mathcal{M}]$.
	
	If $\DP[\rho, 1, \mathcal{W}] \ge D$, return \yes.
	Otherwise, return \no. 
	
	\proofpara{Correctness}
	The basic cases are correct.
	Overall, the correctness of Recurrence~(\ref{eqn:recurrence-HStw}) can be shown analogously to~\cite{pardi07}.
	
	\proofpara{Running time}
	As a tree has at most $2n$ vertices, the overall size of $\DP$ and $\DP'$ is $\Oh(2^{|\mathcal{W}|} \cdot n)$.
	In Recurrence~(\ref{eqn:recurrence-HStw}), there are three options for each $M\in \mathcal{W}$.
	Hence that the total number of terms considered in the entire table can be computed in $\Oh(3^{|\mathcal{W}|} \cdot n)$ time.
\end{proof}
}
\ifWABIShort
\else
\proofHStw
\fi

In the following we reduce from \PDD to \HStw.
Herein, we select a subset of the modulator~$Y$ to survive.
Additionally, we select the first taxon~$x_i$ which survives in~$X \setminus Y$.
Because~$\Food - Y$ is a co-cluster graph,~$x_i$ is in a specific independent set~$I \subseteq X$ and any taxon~$X \setminus (I \cup Y)$ feed on~$x_i$.
Then, by selecting taxon~$x_j \in X \setminus (I \cup Y)$, any other taxon in~$X \setminus Y$ has some prey.
Subsequently, a solution is found by \Cref{lem:HStw}.

\begin{theorem}
	\label{thm:co-cluster}
	\PDD can be solved in $\Oh(6^d \cdot n^3)$ time, when we are given a set~$Y\subseteq X$ of size $d$ such that $\Food-Y$ is a co-cluster graph.
\end{theorem}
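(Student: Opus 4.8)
The plan is to reduce, for each of $\Oh(2^d n^2)$ guesses about how a solution meets the modulator~$Y$, to an instance of \HStw and invoke \Cref{lem:HStw}. Fix a topological order $x_1,\dots,x_n$ of \Food (so that prey precede predators), and recall that since $\Food-Y$ is a co-cluster graph the set $X\setminus Y$ partitions into independent sets $I_1,\dots,I_p$ such that every arc of \Food between two distinct parts is present. First I would iterate over all subsets $Z\subseteq Y$ (the taxa of~$Y$ to be kept), over the \emph{earliest} kept taxon $x_i\in X\setminus Y$ (or the option that none is kept), writing $I$ for the part containing~$x_i$, and, whenever the solution keeps a taxon outside $I\cup Y$, over the earliest such taxon~$x_j$.

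The crucial point is that, once $(Z,x_i,x_j)$ is fixed, viability of the taxa in $X\setminus Y$ becomes trivial. Every kept taxon outside $I\cup Y$ is later than~$x_i$ and, lying in a different part, is adjacent to $x_i$, hence feeds on~$x_i$; every kept taxon of~$I$ later than~$x_j$ feeds on~$x_j$ for the same reason; a kept taxon $v\in I$ whose index is strictly between those of $x_i$ and $x_j$ has no prey in~$I$ (an independent set) and no kept prey in $(X\setminus Y)\setminus I$ (such a prey would be kept and precede~$x_j$, contradicting the choice of~$x_j$), so $v$ must be a source of \Food or have a prey in~$Z$; and the same dichotomy applies to~$x_i$ itself. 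I would therefore delete from the universe all taxa of $Y\setminus Z$, all taxa of $X\setminus Y$ preceding~$x_i$, all taxa of $(X\setminus Y)\setminus I$ preceding~$x_j$, and all taxa of~$I$ strictly between $x_i$ and $x_j$ that are neither sources of \Food nor have a prey in~$Z$; if $x_i$ is not a source of \Food and has no prey in~$Z$, the guess is discarded. What then remains of the viability requirement concerns only~$Y$: using \Cref{obs:viable} and acyclicity of~\Food, $Z\cup S'$ is viable for a retained set $S'$ exactly when every $z\in Z$ that is a source of $\Food[Z]$ is a source of \Food or has a prey in~$S'$. This I would encode as the at most $|Z|\le d$ hitting-set constraints $\prey{z}\cap\mathcal U'$ over the remaining universe $\mathcal U'\subseteq X\setminus Y$ (discarding the guess if one of these sets is empty), add the singleton constraints $\{x_i\}$ and $\{x_j\}$ to force these taxa to be kept, restrict \Tree to the taxa of~$\mathcal U'$ (keeping the root) after zeroing the weight of every edge whose offspring set $\off$ meets~$Z$ and contracting any resulting zero-weight internal edge, and set the budget to $k-|Z|$ and the diversity threshold to $D-\PD(Z)$. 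This is an instance of \HStw with at most $d+2$ constraints; the easy cases where no~$x_i$ or no~$x_j$ exists are handled directly or by the analogous simpler reduction.

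For correctness, the forward direction is routine: given a set $S'$ returned for the \HStw instance, $S:=Z\cup S'$ has diversity at least~$D$ and size at most~$k$ by construction, and it is viable because one can build the spanning forest of \Cref{obs:viable} in which every kept taxon outside $I\cup Y$ points to~$x_i$, every kept taxon of~$I$ later than~$x_j$ points to~$x_j$, $x_i$ points to a prey in~$Z$ (or is a root), and each $z\in Z$ points to a prey in~$Z$, or, if $z$ is a source of $\Food[Z]$, to the prey in $S'$ supplied by its constraint (or is a root); every arc used runs from an earlier to a later taxon, so the forest contains no cycle. Conversely, for any solution~$S$ the guess $Z=S\cap Y$, $x_i=\min(S\setminus Y)$, $x_j=\min((S\setminus Y)\setminus I)$ survives every deletion and every constraint. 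The step requiring the most care is exactly this last verification: that no taxon kept by an optimal solution is ever deleted from the universe, in particular that $x_i$ and every in-between taxon of~$I$ genuinely must obtain their prey from~$Z$ and not from the (unrestricted) set~$S'$. This rests on combining the co-cluster structure, the topological order, and the DAG property of~\Food, and on using acyclicity to break the apparent circularity in the viability argument (a $Y$-vertex whose prey lies in~$S'$, whose own viability seems to depend back on~$Y$).

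For the running time, there are $\Oh(2^d)$ choices of~$Z$ and $\Oh(n^2)$ choices of $(x_i,x_j)$, and for each the reduced \HStw instance can be built and solved via \Cref{lem:HStw} in $\Oh(3^{d+2}\cdot n)=\Oh(3^d\cdot n)$ time, giving $\Oh(6^d\cdot n^3)$ in total.
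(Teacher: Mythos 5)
Your proposal is correct and follows essentially the same route as the paper: guess the surviving part $Z\subseteq Y$ of the modulator, the earliest surviving taxon $x_i\in X\setminus Y$ (which must be a source or have prey in $Z$) and the earliest surviving taxon $x_j$ outside $I\cup Y$, exploit the co-cluster structure so that all later taxa feed on $x_i$ or $x_j$, and encode the remaining viability requirements of $Z$ as at most $d+\Oh(1)$ hitting-set constraints solved via \Cref{lem:HStw}, with only cosmetic bookkeeping differences (you keep $x_i,x_j$ in the universe with singleton constraints, the paper removes them and adjusts $k$ and $D$). Your explicit removal of taxa of $I$ strictly between $x_i$ and $x_j$ that are neither sources of \Food nor have prey in $Z$ is a welcome refinement: the paper's universe $\mathcal{U}_{i,j}$ keeps all of $\{x_{i+1},\dots,x_{j-1}\}\cap I$ and its correctness discussion does not address the viability of such taxa, so your version makes this step watertight.
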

\begin{proof}
	\proofpara{Algorithm}
	Given an instance $\Instance = (\Tree,\Food,k,D)$ of \PDD.
	Let $x_1,\dots,x_n$ be a topological ordering of $X$ which is induced by \Food.
	Iterate over the subsets $Z$ of $Y$.
	Let $P_Z$ be the sources of~\Food in~$X\setminus Y$ and let $Q_Z$ be $\predators{Z} \setminus Y$, the taxa in $X\setminus Y$ which are being fed by~$Z$.
	Further, define~$R_Z := P_Z \cup Q_Z \subseteq X\setminus Y$.
	Iterate over the vertices $x_i \in R_Z$.
	Let $x_i$ be from the independent set $I$ of the co-cluster graph $\Food-Y$.
	Iterate over the vertices~$x_j \in X \setminus (Y \cup I)$.
	
	For each set~$Z$, and taxa~$x_i$,~$x_j$, with \Cref{lem:HStw} we compute the optimal solution for the case that~$Z$ is the set of taxa of $Y$ that survive while all taxa of $Y\setminus Z$ go extinct, $x_i$ is the first taxon in $X\setminus Y$, and $x_j$ the first taxon in $X \setminus (Y \cup I)$ to survive.
	(The special cases that only taxa from $I \cup Y$ or only from $Y$ survive are omitted here.)

	We define an instance $\Instance_{Z,i,j}$ of \HStw as follows.
	Let the universe $\mathcal{U}_{i,j}$ be the union of $\{ x_{i+1}, \dots, x_{j-1} \} \cap I$ and $\{ x_{j+1}, \dots, x_n \} \setminus Y$.
	\ifJournal
	In other words, we let the taxa in $Y$ and in $\{x_1, \dots, x_{i}\}$ and in $\{x_1, \dots, x_{j}\} \setminus I$ go extinct.
	\fi
	For each taxon~$x\in Z$ compute $\prey{x}$.
	If~$x\not\in \sources$ and~$\prey{x} \cap (Z \cup \{x_i,x_j\}) = \emptyset$, then add $\prey{x} \setminus Y$ to the family of sets $\mathcal{W}_{Z,i,j}$.
	Contract edges~$e\in E(\Tree)$ with~$\off(e) \cap (Z \cup \{x_i,x_j\}) \ne \emptyset$ to receive~$\Tree_{Z,i,j}$.
	Finally, we define $k' := k - |Z| - 2$ and $D' := D - \PD(Z \cup \{x_i,x_j\})$.
	
	Solve~$\Instance_{Z,i,j}$.
	If $\Instance_{Z,i,j}$ is a \yes-instance then return \yes.
	Otherwise, continue with the iteration.
	If $\Instance_{Z,i,j}$ is a \no-instance for every $Z\subseteq Y$, and each $i,j\in [n]$, then return \no.
	
	\proofpara{Correctness}
	We show that if the algorithm returns \yes, then $\Instance$ is a \yes-instance of~\PDD.
	Afterward, we show the converse.
	
	Let the algorithm return \yes.
	Then, there is a set $Z \subseteq Y$, and there are taxa~$x_i \in X \setminus Y$ and~$x_j \in X \setminus (Y \cup V(I))$ $\Instance_{Z,i,j}$ is a \yes-instance of \HStw.
	Here, $I$ is the independent set such that~$v\in V(I)$
	Consequently, there is a set $S \subseteq \mathcal{U}_{i,j}$
	of size at most $k - |Z| - 2$
	such that $\PDsub{\Tree_{Z,i,j}}(S) \ge D' = D - \PD(Z \cup \{x_i,x_j\})$
	and $S \cap W \ne \emptyset$ for each $W\in \mathcal{W}_{Z,i,j}$.
	We show that $S^* := S \cup Z \cup \{x_i,x_j\}$ is a solution for instance \Instance of \PDD.
	Clearly, $|S^*| = |S| + |Z| + 2 \le k$
	and $\PD(S^*) = \PDsub{\Tree_{Z,i,j}}(S) + \PD(Z \cup \{x_i,x_j\}) \ge D$ as $\Tree_{Z,i,j}$ is the $Z \cup \{x_i,x_j\}$-contraction of $\Tree$.
	Further, by definition $x_i \in (\sources \cup \predators{Z}) \setminus Y$.
	Because $\Food-Y$ is a co-cluster graph and $x_j$ is not in $I$, the independent set in which $x_i$ is, we conclude that $x_j\in \predators{x_i}$.
	As $S \cap W \ne \emptyset$ for each $W\in \mathcal{W}_{Z,i,j}$,
	each taxon $x\in Z$ holds $\prey{x} \cap (Z \cup \{x_i,x_j\}) = \emptyset$ or $\prey{x} \cap S \ne \emptyset$ so that $\prey{x} \cap S^* \ne \emptyset$.
	Therefore, $S^*$ is viable and indeed a solution for \Instance.

	Assume now that $S$ is a solution for instance \Instance of \PDD.
	We define $Z := S \cap Y$ and let $x_i$ and $x_j$ be the taxa in $S\setminus Y$, respectively $S\setminus (Y \cup I)$, with the smallest index.
	As before, $I$ is the independent set of $x_i$.
	We show that instance $\Instance_{Z,i,j}$ of \HStw has solution $S^* := S \setminus (Z \cup \{x_i,x_j\})$.
	Clearly, $|S^*| = |S| - |Z| - 2 \le k'$
	and by the definition of $\Tree_{Z,i,j}$ we also conclude $\PDsub{\Tree_{Z,i,j}}(S^*) \ge D'$.
	Let $M \in \mathcal{W}_{Z,i,j}$.
	By definition, there is a taxon $z\in Z$ with $M = \prey{z} \setminus Y$,
	and $z \not\in \sources$,
	and $\prey{z} \cap (Z \cup \{x_i,x_j\}) = \emptyset$.
	Consequently, as $S$ is viable, there is a taxon $x\in S \cap \prey{z}$ so that $S\cap M \ne \emptyset$.
	Hence,~$S^*$ is a solution of instance $\Instance_{Z,i,j}$ of \HStw.

	\proofpara{Running time}
	For a given $Z\subseteq Y$, we can compute
	the topological order $x_1,\dots,x_n$ and the set $R_Y$ in $\Oh(n^2)$ time.
	The iterations over $x_i$ and $x_j$ take $\Oh(n^2)$ time.
	Observe, $|\mathcal{W}_{Z,i,j}| \le |Z|$.
	By \Cref{lem:HStw} checking whether $\Instance_{Z,i,j}$ is a \yes-instance takes $\Oh(3^d n)$~time each.
	The overall running time is $\Oh(6^d \cdot n^3)$ time.
\end{proof}

\subsection{Treewidth}
Conjecture~4.2. formulated by Faller et~al.~\cite{faller} supposes that \sPDD remains \NP-hard on instances where the underlying graph of the food-web is a tree.
In this subsection, we disprove this conjecture by showing that \sPDD can be solved in polynomial time on food-webs which are trees (assuming P$\ne$NP).
We even show a stronger result: \sPDD is \FPT with respect to the treewidth of the food-web.

\ifWABIShort
\begin{theorem}[$\star$]
\else
\begin{theorem}
\fi
	\label{thm:tw}
	\sPDD can be solved in $\Oh(9^\tw \cdot nk)$~time.
\end{theorem}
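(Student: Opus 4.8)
The plan is to exploit that, since $\Tree$ is a star, phylogenetic diversity decomposes additively over the chosen taxa: writing $\w(x):=\w(\rho x)$ for the unique edge of $\Tree$ above a leaf $x$, we have $\PD(S)=\sum_{x\in S}\w(x)$ for every $S\subseteq X$. Thus \sPDD is equivalent to deciding, for the vertex-weighted DAG $\Food$, whether there is a viable set $S\subseteq X$ with $|S|\le k$ and $\sum_{x\in S}\w(x)\ge D$. By \Cref{obs:viable}, viability of $S$ is the local condition that every $x\in S$ is either a source of $\Food$ or has a prey in $S$. This ``every selected non-source vertex needs a selected in-neighbour'' condition is exactly the kind of constraint that a dynamic program over a tree decomposition of the (underlying undirected graph of the) food-web can handle, and this is where the restriction to stars is essential: for a general $\Tree$, $\PD$ does not split over individual taxa, which is why \PDD is hard already for treewidth~$1$.

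Concretely, I would first take a nice tree decomposition of $\Food$ of width $\tw$ with $\Oh(n)$ nodes of the standard five types (leaf, introduce-vertex, introduce-edge, forget, join). The DP table is indexed by a node $t$, a state $\sigma$ assigning to each vertex $v\in X_t$ one of three labels --- $\mathsf{out}$ ($v\notin S$), $\mathsf{sat}$ ($v\in S$ and $v$ is a source or already has a prey in $S$ among the vertices introduced so far), $\mathsf{pend}$ ($v\in S$ but not yet so satisfied) --- and an integer $j\in\{0,\dots,k\}$; the value stored is the maximum total weight of a partial solution on the part of $\Food$ processed so far that is consistent with $\sigma$, uses exactly $j$ taxa, and in which every already-forgotten selected non-source vertex has a prey in $S$. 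The transitions are routine. At an introduce-vertex node for $v$ we branch on $v\in S$ or not, labelling $v$ by $\mathsf{sat}$ if $v\in\sources$ and by $\mathsf{pend}$ otherwise, updating $j$ and the weight accordingly. At an introduce-edge node for $uv$ (with $u$ a prey of $v$) we upgrade the label of $v$ from $\mathsf{pend}$ to $\mathsf{sat}$ whenever $u$ is selected. At a forget node for $v$ we keep only the branches in which $v$ is $\mathsf{out}$ or $\mathsf{sat}$; discarding the $\mathsf{pend}$ branches is sound because, by the tree-decomposition property all neighbours of $v$ have been introduced before $v$ is forgotten, and since $\Food$ is a DAG every possible witnessing prey of $v$ has already been seen. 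At a join node we combine the two children over pairs of states that agree on which bag vertices lie in $S$, labelling a bag vertex $\mathsf{sat}$ if it is $\mathsf{sat}$ in at least one child, and correcting $j$ and the weight so that the bag vertices are not counted twice. The answer is \yes{} iff the (empty-bag) root node has an entry with some $j\le k$ and value at least $D$; the degenerate case $k\ge n$ is disposed of by testing $S=X$ directly.

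Correctness is a standard induction over the decomposition: one shows that every partial solution of the stated form is represented by some table entry and, conversely, that every entry is realised by such a partial solution. The only points that need care are (i) the validity of deleting $\mathsf{pend}$ branches at forget nodes, argued above, and (ii) the soundness of $\mathrm{OR}$-ing the ``satisfied'' flags at a join: if a vertex is $\mathsf{pend}$ in both children, it is genuinely not yet recorded as satisfied at the join, because any bag vertex feeding it contributes only once the corresponding edge is introduced, which happens above the join. For the running time there are $3^{\tw+1}$ states per node and $\Oh(n)$ nodes; all transitions except join perform $\Oh(1)$ work per state per value of $j$, and a join node iterates over $\Oh(3^{2(\tw+1)})$ state pairs, with the budget dimension adding a factor $\Oh(k)$; summing over the $\Oh(n)$ nodes yields the claimed $\Oh(9^{\tw}\cdot nk)$ bound.

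The main obstacle is getting the join node exactly right: one must avoid double-counting both the weight and the size of the bag vertices that are selected on both sides, and one must justify that merging two $\mathsf{pend}$ copies of a vertex cannot create satisfaction out of thin air. A secondary but conceptually important subtlety is that, unlike in the algorithms for $k$, $D$, or $k+\height_\Tree$, one must \emph{not} first invoke \Cref{obs:top-predator} to reduce to a single-source food-web, since adding a universal predator would raise the treewidth to $\Theta(n)$; instead the DP deals with the possibly many sources of $\Food$ directly, which is precisely the role of the $\mathsf{sat}$ label for source vertices.
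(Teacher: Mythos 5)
Your proposal is correct and takes essentially the same route as the paper: the paper also runs a three-state dynamic program (black/red/green, i.e.\ your out/pend/sat) with a budget dimension over a nice tree decomposition of the food-web, exploiting that a star tree makes $\PD$ additive over taxa, and it extracts the same $\Oh(9^{\tw}\cdot nk)$ bound from the join nodes. The only cosmetic difference is bookkeeping: the paper resolves prey--predator relations at introduce-vertex nodes by re-colouring $\predators{v}\cap(R\cup G)$ inside the bag and combines children via ``qualified'' colour pairs, whereas you use introduce-edge nodes and subtract double-counted bag vertices at joins.
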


To show~\Cref{thm:tw}, we define a dynamic programming algorithm over a tree-de\-com\-po\-si\-tion of~\Food.
In each bag, we divide the taxa into three sets indicating that they
	a)~are supposed to go extinct,
	b)~will be saved but still need prey,
	c)~or will be saved without restrictions.
The algorithm is similar to the standard treewidth algorithm for \mbox{\DS~\cite{cygan}}.

\newcommand{\proofTW}{
\ifWABIShort
\begin{proof}[Proof of~\Cref{thm:tw}]
\else
\begin{proof}
\fi
	Let $\Instance = (\Tree, \Food, k, D)$ be an instance of \sPDD.
	We define a dynamic programming algorithm over a nice tree-decomposition~$T$ of~$\Food=(V_\Food,E_\Food)$.% in which vertices are introduced without incident edges and all edges are induced exactly once.
	
	We do not define tree-decompositions.
	Common definitions can be found in~\cite{treedecom,cygan}.
	
	For a node~$t\in T$, let $Q_t$ be the bag associated with~$t$ and let $V_t$ be the union of bags in the subtree of~$T$ rooted a~$t$.
	
	\proofpara{Definition of the Table}
	We index solutions by a partition $R\cup G\cup B$ of $Q_t$, and a non-negative integer $s$.
	For a set of taxa~$Y \subseteq V_t$, we call a vertex $u \in V_t$ \emph{red with respect to $Y$}
	if $u$ is in $Y$ and $u$ has a predator but no prey in $Y$.
	We call $u$ \emph{green with respect to $Y$}
	if $u$ is in $Y$ and
		a)~$u$ is a source in \Food, or
		b)~has prey in $Y$.
	Finally, we call $u$ \emph{black with respect to $Y$}
	if $u$ is not in $Y$.
	
	For a node~$t$ of the tree-decomposition, a partition $R\cup G\cup B$ of $Q_t$, and an integer $s$, a set of taxa~$Y \subseteq V_t$ is called \emph{$(t,R,G,B,s)$-feasible}, if all the following conditions hold.
	\ifJournal
	\begin{enumerate}
	\else
	\begin{inparaenum}
	\fi
		\item[(T1)]\label{it:safeness}Each vertex in $Y\setminus Q_t$ is green with respect to $Y$.
		\item[(T2)]\label{it:red}The vertices $R\subseteq Q_t$ are red with respect to $Y$.
		%are not leaves in \Net and have an incoming but no outgoing edge in $F$.
		\item[(T3)]\label{it:green}The vertices $G\subseteq Q_t$ are green with respect to $Y$. 
		%leaves in \Net or have an outgoing edge in $F$.
		\item[(T4)]\label{it:black}The vertices $B\subseteq Q_t$ are black with respect to $Y$.
		%not leaves in \Net and are not incident with an edge of $F$.
		\item[(T5)]\label{it:taxa}The size of $Y$ is $s$.
	\ifJournal
	\end{enumerate}
	\else
	\end{inparaenum}
	\fi

	In table entry~$\DP[t,A,R,G,B,s]$, we store the largest diversity~$\PD(Y)$ of a $(t,R,G,B,s)$-feasible set~$Y$.
	If there is no $(t,R,G,B,s)$-feasible sets~$Y$, we store $-\infty$.
	Let $r$ be the root of the nice tree-decomposition $T$.
	Then, $\DP[r,\emptyset,\emptyset,\emptyset,k]$ stores an optimal diversity.
	So, we return~\yes if $\DP[r,\emptyset,\emptyset,\emptyset,k]\ge D$ and \no otherwise.
	
	\ifJournal
	In the following, any time a table non-defined entry $\DP[t,R,G,B,s]$ is called (in particular, if $s < 0$), we take $\DP[t,R,G,B,s]$ to be $-\infty$.
	\fi
	
	We regard the different types of nodes of a tree-decomposition separately and discuss their correctness.

	\proofpara{Leaf Node}
	For a leaf~$t$ of~$T$ the bags~$Q_t$ and $V_t$ are empty. We store
	\begin{eqnarray} \label{tw:leaf}
		\DP[t,\emptyset,\emptyset,\emptyset,0] &=& 0.
	\end{eqnarray}
	For all other values, we store $\DP[t,R,G,B,s] = -\infty$.
	
	\Recc{tw:leaf} is correct by definition.

	\proofpara{Introduce Node}
	Suppose that~$t$ is an \emph{introduce node}, that is,~$t$ has a single child~$t'$ with~$Q_t = Q_{t'} \cup \{v\}$.
	We store $\DP[t,R,G,B,s] = \DP[t',R,G,B\setminus \{v\},s]$ if $v\in B$. If $v\in G$ but $v$ is not a source and $\prey{v} \cap (R\cup G) = \emptyset$, we store $\DP[t,R,G,B,s] = -\infty$. Otherwise, we store
	\begin{eqnarray} \label{tw:insertvertex}
		\DP[t,R,G,B,s] &=& \max_{A\subseteq \predators{v} \cap (R\cup G)} \DP[t',R',G',B,s - 1] + \w(\rho v).
	\end{eqnarray}
	Herein, $R'$ and $G'$ are defined as $R' = (R \setminus (\{v\} \cup \predators{v})) \cup A$ and $G' = (G \cup (\predators{v} \cap (R \cup G))) \setminus (A \cup \{v\})$.

	If $v\in B$, then $v\not\in Y$ for any $(t,R,G,B,s)$-feasible set~$Y$.
	If $v \in G$ but is not a source and has no prey in $R\cup G$, then $v$ is not green with respect to~$Y$ for any~$Y$.
	So, these two cases store the desired value.
	Towards the correctness of \Recc{tw:insertvertex}:
	If $v\in G\cup R$, then we want to select $v$ and therefore we need to add $\w(\rho v)$ to the value of~$\DP[t,R,G,B,s]$.
	Further, by the selection of $v$ we know that the predators of $v$ could be green (but maybe are still stored as red) in $t$, but in $t'$ could be red and therefore we reallocate $\predators{v} \cap (R\cup G)$ and let $A$ be red beforehand.

	\proofpara{Forget Node}
	Suppose that~$t$ is a \emph{forget node}, that is,~$t$ has a single child~$t'$ and~$Q_t = Q_{t'} \setminus \{v\}$.
	We store
	\begin{eqnarray} \label{tw:forget}
		\DP[t,R,G,B,s] =  \max \{
		\DP[t',R,G\cup\{v\},B,s];
		\DP[t',R,G,B\cup\{v\},s]
		\}.
	\end{eqnarray}

	\Recc{tw:forget} follows from the definition that vertices in $v\in V_t \setminus Q_t$ are either black or green with respect to $(t,R,G,B,s)$-feasible sets.

	\proofpara{Join Node}
	Suppose that~$t\in T$ is a \emph{join node}, that is,~$t$ has two children~$t_1$ and~$t_2$ with~$Q_t = Q_{t_1} = Q_{t_2}$.
	We call two partitions $R_1\cup G_1\cup B_1$ and $R_2\cup G_2 \cup B_2$ of $Q_t$ \emph{qualified} for $R\cup G\cup B$ if $R=(R_1\cup R_2) \setminus (G_1 \cup G_2)$ and $G = G_1\cup G_2$ (and consequently $B = B_1 \cap B_2$). See~\Cref{fig:joinColorings}.
	We store
	\begin{eqnarray} \label{tw:join}
		\DP[t,R,G,B,s] &=& \max_{(\pi_1, \pi_2) \in{\cal Q},s'} ~ \DP[t_1,R_1,G_1,B_1,s'] + \DP[t_2,R_2,G_2,B_2,s-s'],
	\end{eqnarray}
	%maximum is taken over all partitions $R_1\cup G_1\cup B_1$ and $R_2\cup G_2 \cup B_2$  that are qualified for $R,G,B$ and all $s' \in [s]$.
	where ${\cal Q}$ is the set of pairs of partitions $\pi_1 = R_1\cup G_1\cup B_1$ and $\pi_2 = R_2\cup G_2 \cup B_2$ that are qualified for $R,G,B$, and $s' \in [s]_0$.

	By~\Cref{fig:joinColorings} we observe that in \Recc{tw:join} we consider the correct combinations of~$R$,~$G$, and~$B$.

	\newcommand{\darkgreen}{green!50!black}
	\begin{figure}
		\begin{center}
			\begin{tabular}{c|c|c|c|}\cline{2-4}
				 & $B_1$ & \color{red}{$R_1$} & \color{\darkgreen}{$G_1$}\\\hline
				\multicolumn{1}{|l|}{$B_2$} & $B$ & \color{red}{$R$} & \color{\darkgreen}{$G$} \\\hline
				\multicolumn{1}{|l|}{\color{red}{$R_2$}} & \color{red}{$R$} & \color{red}{$R$} & \color{\darkgreen}{$G$} \\\hline
				\multicolumn{1}{|l|}{\color{\darkgreen}{$G_2$}} & \color{\darkgreen}{$G$} & \color{\darkgreen}{$G$} & \color{\darkgreen}{$G$} \\\hline
			\end{tabular} 
			
			\caption{This table shows the relationship between the partitions $R_1\cup G_1\cup B_1$, $R_2\cup G_2 \cup B_2$ and $R\cup G\cup B$ in the case of a join node, when $R_1\cup G_1\cup B_1$ and $R_2\cup G_2 \cup B_2$ are qualified for $R\cup G\cup B$. The table shows which of the set $R,G$, or~$B$ an element $v \in Q_t$ will be in, depending on its membership in $R_1,G_1,B_1,R_2,G_2$, and $B_2$. For example if $v \in R_1$ and $v \in B_2$, then $v \in R$.}
			\label{fig:joinColorings}
		\end{center}
	\end{figure}
	
	\proofpara{Running Time}
	The table contains $\Oh(3^\tw \cdot nk)$ entries, as a tree decomposition contains $\Oh(n)$ nodes.
	Each leaf and forget node can be computed in linear time.
	An introduce node can be computed in $\Oh(2^\tw\cdot n)$~time.
	In a join node, considering only the qualified sets, $(\pi_1,\pi_2)$ already define $R$, $G$, and $B$.
	Thus, all join nodes can be computed in $\Oh(9^\tw \cdot nk)$~time,
	which is also the overall running time.
\end{proof}
}
\ifWABIShort
\else
\proofTW
\fi

\ifJournal
\subsection{Hardness Results}
Based on the results of Faller~et~al.~\cite{faller}, in \Cref{cor:maxleaf} we show that \PDD is \NP-hard on instances in which the food-web has a max leaf number of~two (and therefore providing intractability for all smaller parameters).

\begin{corollary}
	\label{cor:maxleaf}
	\PDD is \NP-hard even if the food-web has max leaf number two.
\end{corollary}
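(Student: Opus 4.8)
The plan is to derive the corollary from the hardness result of Faller et al.~\cite{faller} that \PDD is \NP-hard already when every connected component of the food-web is a path on four vertices (a path of length three). Taking such a hard instance as a black box, the only thing left to arrange is a bound on the maximum leaf number. If one adopts the convention that the maximum leaf number of a disconnected graph is the maximum of the maximum leaf numbers of its components, there is nothing to do, since a path on four vertices has maximum leaf number two. Otherwise the many $P_4$-components make the number of leaves of a spanning forest grow with the number of components, and the substantive part of the proof is to concatenate all components into a single path without changing the answer; that is the construction I would carry out.

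Concretely, let $\Instance_0=(\Tree_0,\Food_0,k,D_0)$ be a hard instance of \PDD whose food-web is the disjoint union of directed paths $a_i\to b_i\to c_i\to d_i$ for $i\in[c]$. First I would rescale: multiply the weight of every edge of $\Tree_0$ by a factor $M:=2n+1$, where $n$ is the number of taxa of the final instance. Then, for each $i\in[c-1]$, I would introduce two fresh taxa $u_i,w_i$, add the arcs $d_i\to u_i$, $w_i\to u_i$ and $a_{i+1}\to w_i$ to the food-web, and attach $u_i$ and $w_i$ as new leaves below the root of the phylogenetic tree, each via an edge of weight $1$. The resulting instance is $\Instance=(\Tree,\Food,k,MD_0)$; crucially the underlying undirected graph of $\Food$ is exactly the single path $a_1-b_1-c_1-d_1-u_1-w_1-a_2-b_2-\dots-a_c-b_c-c_c-d_c$, so its maximum leaf number is two, and $\Food$ is still a DAG.

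For correctness I would establish two short claims. First, the new arcs only give the fresh taxa incoming arcs; no original taxon gains prey, and the set of sources of $\Food$ is still $\{a_1,\dots,a_c\}$, the sources of $\Food_0$. Hence for any $S$ disjoint from $\{u_i,w_i:i\in[c-1]\}$ one has $\Food[S]=\Food_0[S]$, so $S$ is viable in $\Food$ iff it is viable in $\Food_0$, and $\PD(S)=M\cdot\PDsub{\Tree_0}(S)$. Second, since every original edge of $\Tree$ has weight at least $M$ while the taxa $u_i,w_i$ together contribute at most $2(c-1)<M$ to any $\PD$-value, a set $S$ with $\PD(S)\ge MD_0$ must satisfy $\PDsub{\Tree_0}(S\setminus\{u_i,w_i\})\ge D_0$; combined with the first claim and $|S\setminus\{u_i,w_i\}|\le k$, this yields that $\Instance$ is a \yes-instance iff $\Instance_0$ is. The main obstacle is exactly this bookkeeping: one must make sure that (i) inserting the connector taxa does not enlarge the family of viable sets among the \emph{original} taxa -- which is why the connectors must only \emph{receive} arcs from $d_i$ and $a_{i+1}$ and never send arcs into the original components -- and that (ii) the weight-$1$ connectors can never be exploited to reach the diversity threshold more cheaply than the intended solution, which is exactly what the scaling factor $M$ guarantees. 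All remaining verifications (DAG-ness of $\Food$, the out-degree and positive-weight conditions on $\Tree$, polynomial running time) are routine.
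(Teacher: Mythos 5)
Your reduction is correct, and it follows the same overall strategy as the paper: start from the Faller et al.\ instances whose food-web components are short directed paths, multiply all weights of the original phylogenetic tree by a large constant, and string the components together into one underlying path by inserting fresh weight-one taxa, so that the max leaf number drops to two. (Whether the hard components are paths on three or four vertices is immaterial to the construction.) The difference lies in the connector gadget and the resulting correctness argument. In the paper, each junction is a long directed path of $N>|X|+1$ new taxa running from an endpoint of one component \emph{into} the top predator of the next component; since an original predator thereby gains an alternative prey, viability of original sets is not preserved verbatim, and the proof needs an exchange argument (if a solution feeds $y_{i,2}$ through the connector, it must contain all $N$ connector taxa, which can be swapped for the two legitimate prey at a gain because of the scaling). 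Your gadget instead uses only two fresh taxa per junction, with all connector arcs oriented \emph{into} the fresh taxa, so no original taxon gains prey, the sources are unchanged, and viability among original taxa is literally preserved; correctness then follows from the weight scaling alone, with no exchange step and no dependence on a large $N$. This makes your argument somewhat shorter and keeps the instance smaller, while the paper's version establishes the same statement; both reductions are polynomial and rest on the same hardness base from Faller et al.
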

\begin{proof}
	When regarding the reduction of Faller et al.~\cite[Theorem 5.1.]{faller} from \VC to \PDD we observe three things.
	The vertex $x$ has been added to ensue that $y$ functions as a root in their unrooted definition of the problem.
	Therefore, we do not need $x$ for our definition of \PDD, leaving a bunch of paths of length three.
	Further, some edges in their reduction have a weight of 0 but it would have not caused problems setting them to 1 and multiplying each other edge with a big constant.
	
	\proofpara{Reduction}
	Let $\Instance = (\Tree, \Food, k, D)$ be an instance of \PDD with $X(\Food) = Y$ in which each connected component of \Food is a path of length three.
	We construct an instance $\Instance' = (\Tree', \Food', k, D')$ of \PDD as follows.
	Let $P^{(0)},P^{(1)},\dots,P^{(q)}$ be an arbitrary order of the connected components of \Food where $P^{(i)}$ contains the taxa $\{y_{i,0},y_{i,1},y_{i,2}\}$ and edges $y_{i,0}y_{i,1}$ and $y_{i,1}y_{i,2}$.
	Let~$N$ and~$M$ be constants bigger than $|X|+1$ and such that $N \cdot q < M$.
	
	Let $X$ be a set of new taxa~$x_{i,j}$ for~$i\in [q]$,~$j\in [N]$.
	Our new set of taxa is~$Y \cup X$.
	Multiply each edge-weight in \Tree with $M$ and add the taxa $X$ as children of the root $\rho$ to receive $\Tree'$.
	Set the weight of the edges $\rho x_{i,j}$ to 1 for each $i\in [q]$, $j\in [N]$.
	To receive $\Food'$ we add~$X$ to \Food and add edges~$x_{i,j}x_{i,j+1}$,~$y_{i-1,0}x_{i,1}$ and~$x_{i,N}y_{i,2}$ for each~$i\in [q]$, $j\in [N]$.
	\Cref{fig:maxleaf} depicts an example of how to create~$\Food'$.
	Finally, we keep~$k$ and set~$D'$ to be~$D\cdot M$.
	
	\proofpara{Correctness}
	The reduction can be computed in polynomial time.
	The underlying graph of $\Food'$ is a path and therefore has a max leaf number of two.
	Any solution for \Instance is also a solution for $\Instance'$.
	
	Conversely, let $\Instance'$ be a \yes-instance of \PDD with solution $S \subseteq X \cup X'$.
	Define $X^{(i)}$ to be the set of the taxa $x_{i,j}$ for $j\in {[N]}$.
	If $y_{i,2}$ is in $S$ but $y_{i,1}$ is not in $S$ then, because $S$ is viable, we know that $X^{(i)} \subseteq S$.
	Observe $\PD(X^{(i)}) = N$ and $\PD(y_{i,1}) \ge M$.
	Thus, also~$S' := (S \setminus X^{(i)}) \cup \{y_{i,0},y_{i,1}\}$ is a solution for $\Instance'$.
	Therefore, we assume that if $y_{i,j}$ is in~$S$ then also $y_{i,j-1}$ for $j\in \{1,2\}$.
	Define sets $S_Y = S\cap Y$ and $S_{X} = S\cap X$.
	Then $\PD(S_Y)$ is dividable by $M$ and $\PD(S_{X}) \le q\cdot N < M$.
	We conclude $\PD(S_Y) \ge M\cdot D$ and $S_Y$ is a solution for instance \Instance of \PDD.
\end{proof}
\begin{figure}[t]
	\centering
	\begin{tikzpicture}[scale=0.7,every node/.style={scale=0.6}]
		\tikzstyle{txt}=[circle,fill=white,draw=white,inner sep=0pt]
		\tikzstyle{ndeg}=[circle,fill=blue,draw=black,inner sep=2.5pt]
		\tikzstyle{ndeo}=[circle,fill=orange,draw=black,inner sep=2.5pt]
		\tikzstyle{dot}=[circle,fill=white,draw=black,inner sep=1.5pt]
		
		\foreach \i in {0,...,2}
		\node[ndeg] (a\i) at (0,\i) {};
		\foreach \i in {0,...,2}
		\node[ndeg] (b\i) at (5,\i) {};
		\foreach \i in {0,...,2}
		\node[ndeg] (c\i) at (10,\i) {};
		\foreach \i in {0,...,2}
		\node[ndeg] (d\i) at (16,\i) {};
		
		\node[txt] at (0.5,0.5) {$P^{(0)}$};
		\node[txt] at (5.5,0.5) {$P^{(1)}$};
		\node[txt] at (10.5,0.5) {$P^{(2)}$};
		\node[txt] at (16.5,0.5) {$P^{(q)}$};
		
		\draw[thick,arrows = {-Stealth[length=5pt]}] (a0) to (a1);
		\draw[thick,arrows = {-Stealth[length=5pt]}] (a1) to (a2);
		\draw[thick,arrows = {-Stealth[length=5pt]}] (b0) to (b1);
		\draw[thick,arrows = {-Stealth[length=5pt]}] (b1) to (b2);
		\draw[thick,arrows = {-Stealth[length=5pt]}] (c0) to (c1);
		\draw[thick,arrows = {-Stealth[length=5pt]}] (c1) to (c2);
		\draw[thick,arrows = {-Stealth[length=5pt]}] (d0) to (d1);
		\draw[thick,arrows = {-Stealth[length=5pt]}] (d1) to (d2);
		
		\foreach \i in {0,...,6}
		\node[ndeo] (p\i) at (1+\i/2,1) {};
		\foreach \i in {0,...,6}
		\node[ndeo] (q\i) at (6+\i/2,1) {};
		
		\foreach \i in {0,...,5}
		\draw[arrows = {-Stealth[length=3pt]}] (1.4+\i/2,1) to (p\i);
		\foreach \i in {0,...,5}
		\draw[arrows = {-Stealth[length=3pt]}] (6.4+\i/2,1) to (q\i);
		
		\node[ndeo] (r0) at (11,1) {};
		\node[ndeo] (r6) at (15,1) {};
		
		\foreach \i in {0,...,2}
		\node[dot] at (12.5+\i/2,1) {};
		
		\draw[arrows = {-Stealth[length=3pt]}] (p0) .. controls +(left:1cm) and +(right:2cm) .. (a2);
		\draw[arrows = {-Stealth[length=3pt]}] (q0) .. controls +(left:1cm) and +(right:2cm) .. (b2);
		\draw[arrows = {-Stealth[length=3pt]}] (r0) .. controls +(left:1cm) and +(right:2cm) .. (c2);
		
		\draw[arrows = {-Stealth[length=3pt]}] (b0) .. controls +(left:2cm) and +(right:1cm) .. (p6);
		\draw[arrows = {-Stealth[length=3pt]}] (c0) .. controls +(left:2cm) and +(right:1cm) .. (q6);
		\draw[arrows = {-Stealth[length=3pt]}] (d0) .. controls +(left:2cm) and +(right:1cm) .. (r6);
	\end{tikzpicture}
	\caption{This figure shows an example of the food-web $\Food'$ we reduce to in \Cref{cor:maxleaf}.
		The vertices of $X$ are blue while the vertices in $X'$ are orange.
		Here we used $N=7$ (despite $N<|X|+1$).
	}
	\label{fig:maxleaf}
\end{figure}%

We observe that the construction in the previous corollary creates a food-web with an anti-chain of size $2q+1$.
We therefore ask whether \PDD is still \NP-hard if the DAG-width of the food-web is a constant.\todos{Besser in Conclusion.}

\fi

\section{Discussion}
\label{sec:discussion}
In this paper, we studied the algorithmic complexity \PDD and \sPDD with respect to various parameterizations.
\PDD is \FPT when parameterized with the solution size plus the height of the phylogenetic tree.
Consequently, \PDD is \FPT with respect to~$D$, the threshold of diversity.
\ifJournal
However, both problems, \PDD and \sPDD, are unlikely to admit a kernel of polynomial size.
\fi
Further, unlike some other problems on maximizing phylogenetic diversity~\cite{MAPPD,timePD}, \PDD probably does not admit an \FPT-algorithm with respect to~\Dbar, the acceptable loss of phylogenetic diversity.

We also considered the structure of the food-web.
Among other results, we showed that \PDD remains \NP-hard even if the food-web is a cluster graph but \PDD is \FPT with respect to the number of vertices that need to be removed from the food web to receive a co-cluster.
On the positive side, we proved that \sPDD is \FPT with respect to the treewidth of the food-web and therefore can be solved in polynomial time if the food-web is a tree.

Several interesting questions remain open after our examination of \PDD and \sPDD.
Arguably the most relevant one is whether \PDD is \FPT with respect to~$k$, the size of the solution.
Also, it remains open whether \PDD can be solved in polynomial time if each connected component in the food-web contains at most two vertices.

Clearly, further structural parameterizations can be considered.
We only considered structural parameters which consider the underlying graph.
But parameters which also consider the orientation of edges, such as the longest anti-chain, could give a better view on the structure of the food-web than parameters which only consider the underlying graph.

Liebermann et al.~\cite{lieberman} introduced and analyzed weighted food-webs.
Such a weighted model may provide a more realistic view of a species’ effect on and interaction with other species~\cite{cirtwill}.
Maximizing phylogenetic diversity with respect to a weighted food-web in which one potentially needs to save several prey per predator would be an interesting generalization for our work and has the special case in which one needs to save all prey for each predator.

Recent works consider the maximization of phylogenetic diversity in phylogenetic networks~\cite{WickeFischer2018,bordewichNetworks,MAPPD,MaxNPD} which may provide a more realistic evolutionary model of the considered species. It would be interesting to study these problems also under ecological constraints. Do the resulting problems become much harder than \PDD?
Finally, it has been reported that maximizing phylogenetic diversity is only marginally better than selecting a random set of species when it comes to maximizing the functional diversity of the surviving species~\cite{MPC+18}. The situation could be different, however, when ecological constraints are incorporated. Here, investigating the following two questions seems fruitful: First, do randomly selected viable species sets have a higher functional diversity than randomly selected species? Second, do viable sets with maximal phylogenetic diversity have a higher functional diversity than randomly selected viable sets?

\newpage
\thispagestyle{empty}

\thispagestyle{empty}

\ifWABIShort
\newpage
\setcounter{page}{1}
\setcounter{section}{0}
\renewcommand\thesection{A.\arabic{section}}
\section{Appendix}
\paragraph*{Proof of~\Cref{lem:reduction-rules}}
\proofRREdgeOrginal

\proofRREdgePattern

\proofRRFoodWeb

\proofRRInternalVertex

\paragraph*{Proof of~\Cref{thm:D}}
\proofThmD

\paragraph*{All other proofs}
\proofTopPredator

\proofThmKStars

\proofHStw

\proofTW

\fi


\begin{thebibliography}{10}
	
	\bibitem{alon}
	N.~Alon, R.~Yuster, and U.~Zwick.
	\newblock Color-coding.
	\newblock {\em Journal of the Association for Computing Machinery (JACM)},
	42(4):844--856, 1995.
	
	\bibitem{beyer}
	T.~Beyer and S.~M. Hedetniemi.
	\newblock Constant time generation of rooted trees.
	\newblock {\em SIAM Journal on Computing}, 9(4):706--712, 1980.
	
	\bibitem{bordewichNetworks}
	M.~Bordewich, C.~Semple, and K.~Wicke.
	\newblock On the {C}omplexity of optimising variants of {P}hylogenetic
	{D}iversity on {P}hylogenetic {N}etworks.
	\newblock {\em Theoretical Computer Science}, 917:66--80, 2022.
	
	\bibitem{cirtwill}
	A.~R. Cirtwill, G.~V. Dalla~Riva, M.~P. Gaiarsa, M.~D. Bimler, E.~F. Cagua,
	C.~Coux, and D.~M. Dehling.
	\newblock A review of species role concepts in food webs.
	\newblock {\em Food Webs}, 16:e00093, 2018.
	
	\bibitem{cygan}
	M.~Cygan, F.~V. Fomin, L.~Kowalik, D.~Lokshtanov, D.~Marx, M.~Pilipczuk,
	M.~Pilipczuk, and S.~Saurabh.
	\newblock {\em Parameterized {A}lgorithms}.
	\newblock Springer, 2015.
	
	\bibitem{downey}
	R.~G. Downey and M.~R. Fellows.
	\newblock {Fixed-parameter tractability and completeness II: On completeness
		for W[1]}.
	\newblock {\em Theoretical Computer Science}, 141(1-2):109--131, 1995.
	
	\bibitem{downeybook}
	R.~G. Downey and M.~R. Fellows.
	\newblock {\em Fundamentals of {P}arameterized {C}omplexity}.
	\newblock Texts in Computer Science. Springer, 2013.
	
	\bibitem{FAITH1992}
	D.~P. Faith.
	\newblock {Conservation evaluation and Phylogenetic Diversity}.
	\newblock {\em Biological Conservation}, 61(1):1--10, 1992.
	
	\bibitem{faller}
	B.~Faller, C.~Semple, and D.~Welsh.
	\newblock Optimizing phylogenetic diversity with ecological constraints.
	\newblock {\em Annals of Combinatorics}, 15(2):255--266, 2011.
	
	\bibitem{GCW+15}
	P.~Gerhold, J.~F. Cahill~Jr, M.~Winter, I.~V. Bartish, and A.~Prinzing.
	\newblock Phylogenetic patterns are not proxies of community assembly
	mechanisms (they are far better).
	\newblock {\em Functional Ecology}, 29(5):600--614, 2015.
	
	\bibitem{hartmann}
	K.~Hartmann and M.~Steel.
	\newblock {Maximizing phylogenetic diversity in biodiversity conservation:
		Greedy solutions to the Noah's Ark problem}.
	\newblock {\em Systematic Biology}, 55(4):644--651, 2006.
	
	\bibitem{ITC+07}
	N.~J. Isaac, S.~T. Turvey, B.~Collen, C.~Waterman, and J.~E. Baillie.
	\newblock Mammals on the edge: conservation priorities based on threat and
	phylogeny.
	\newblock {\em PloS one}, 2(3):e296, 2007.
	
	\bibitem{MAPPD}
	M.~Jones and J.~Schestag.
	\newblock {How Can We Maximize Phylogenetic Diversity? Parameterized Approaches
		for Networks}.
	\newblock In {\em 18th International Symposium on Parameterized and Exact
		Computation (IPEC 2023)}. Schloss-Dagstuhl-Leibniz Zentrum f{\"u}r
	Informatik, 2023.
	
	\bibitem{timePD}
	M.~Jones and J.~Schestag.
	\newblock {Maximizing Phylogenetic Diversity under Time Pressure: Planning with
		Extinctions Ahead}.
	\newblock {\em arXiv preprint arXiv:2403.14217}, 2024.
	
	\bibitem{GNAP}
	C.~Komusiewicz and J.~Schestag.
	\newblock {A Multivariate Complexity Analysis of the Generalized Noah's Ark
		Problem}.
	\newblock In {\em Cologne-Twente Workshop on Graphs and Combinatorial
		Optimization}, pages 109--121. Springer, 2023.
	
	\bibitem{lieberman}
	E.~Lieberman, C.~Hauert, and M.~A. Nowak.
	\newblock Evolutionary dynamics on graphs.
	\newblock {\em Nature}, 433(7023):312--316, 2005.
	
	\bibitem{MPC+18}
	F.~Mazel, M.~W. Pennell, M.~W. Cadotte, S.~Diaz, G.~V. Dalla~Riva, R.~Grenyer,
	F.~Leprieur, A.~O. Mooers, D.~Mouillot, C.~M. Tucker, et~al.
	\newblock Prioritizing phylogenetic diversity captures functional diversity
	unreliably.
	\newblock {\em Nature Communications}, 9(1):2888, 2018.
	
	\bibitem{PDinSeconds}
	B.~Q. Minh, S.~Klaere, and A.~von Haeseler.
	\newblock {Phylogenetic Diversity within Seconds}.
	\newblock {\em Systematic Biology}, 55(5):769--773, 10 2006.
	
	\bibitem{mohar}
	B.~Mohar.
	\newblock Face covers and the genus problem for apex graphs.
	\newblock {\em Journal of Combinatorial Theory, Series B}, 82(1):102--117,
	2001.
	
	\bibitem{moulton}
	V.~Moulton, C.~Semple, and M.~Steel.
	\newblock Optimizing phylogenetic diversity under constraints.
	\newblock {\em Journal of Theoretical Biology}, 246(1):186--194, 2007.
	
	\bibitem{Naor1995SplittersAN}
	M.~Naor, L.~J. Schulman, and A.~Srinivasan.
	\newblock Splitters and near-optimal {D}erandomization.
	\newblock {\em Proceedings of IEEE 36th Annual Foundations of Computer
		Science}, pages 182--191, 1995.
	
	\bibitem{Pardi2005}
	F.~Pardi and N.~Goldman.
	\newblock {Species Choice for Comparative Genomics: Being Greedy Works}.
	\newblock {\em PLoS Genetics}, 1, 2005.
	
	\bibitem{pardi07}
	F.~Pardi and N.~Goldman.
	\newblock Resource-aware taxon selection for maximizing phylogenetic diversity.
	\newblock {\em Systematic Biology}, 56(3):431--444, 2007.
	
	\bibitem{treedecom}
	N.~Robertson and P.~D. Seymour.
	\newblock {Graph minors. X. Obstructions to tree-decomposition}.
	\newblock {\em Journal of Combinatorial Theory, Series B}, 52(2):153--190,
	1991.
	
	\bibitem{Shor95}
	P.~W. Shor.
	\newblock {A new proof of Cayley's formula for counting labeled trees}.
	\newblock {\em Journal of Combinatorial Theory, Series A}, 71(1):154--158,
	1995.
	
	\bibitem{spillner}
	A.~Spillner, B.~T. Nguyen, and V.~Moulton.
	\newblock Computing phylogenetic diversity for split systems.
	\newblock {\em IEEE/ACM Transactions on Computational Biology and
		Bioinformatics}, 5(2):235--244, 2008.
	
	\bibitem{steel}
	M.~Steel.
	\newblock Phylogenetic {D}iversity and the greedy algorithm.
	\newblock {\em Systematic Biology}, 54(4):527--529, 2005.
	
	\bibitem{MaxNPD}
	L.~van Iersel, M.~Jones, J.~Schestag, C.~Scornavacca, and M.~Weller.
	\newblock Maximizing network phylogenetic diversity.
	\newblock {\em arXiv preprint arXiv:2405.01091}, 2024.
	
	\bibitem{WickeFischer2018}
	K.~Wicke and M.~Fischer.
	\newblock {Phylogenetic Diversity and biodiversity indices on Phylogenetic
		Networks}.
	\newblock {\em Mathematical Biosciences}, 298:80--90, 2018.
	
\end{thebibliography}
\end{document}